\theoremstyle{plain}
\newtheorem{theorem}{Theorem}[section]
\newtheorem{conjecture}[theorem]{Conjecture}
\newtheorem{corollary}[theorem]{Corollary}
\newtheorem{lemma}[theorem]{Lemma}
\newtheorem{remark}[theorem]{Remark}
\title{Short Topological Decompositions of Non-Orientable Surfaces\thanks{This work was partially supported by the ANR project SoS (ANR-17-CE40-0033).}} 
\author{Niloufar Fuladi\thanks{Univ Gustave Eiffel, CNRS, LIGM, F-77454 Marne-la-Vallée, France} \and Alfredo Hubard\footnotemark[2] \and Arnaud de Mesmay\footnotemark[2]}
\begin{document}

\maketitle
\begin{abstract}
In this article, we investigate short topological decompositions of non-orientable surfaces and provide algorithms to compute them. Our main result is a polynomial-time algorithm that for any graph embedded in a non-orientable surface computes a canonical non-orientable system of loops so that any loop from the canonical system intersects any edge of the graph in at most 30 points. The existence of such short canonical systems of loops was well known in the orientable case and an open problem in the non-orientable case. Our proof techniques combine recent work of Schaefer-\v{S}tefankovi\v{c} with ideas coming from computational biology, specifically from the signed reversal distance algorithm of Hannenhalli-Pevzner. 
The existence of short canonical non-orientable systems of loops confirms a special case of a conjecture of Negami on the joint crossing number of two embeddable graphs. We also provide a correction for an argument of Negami bounding the joint crossing number of two non-orientable graph embeddings.
\end{abstract}

\section{Introduction}

\subparagraph*{Topological decompositions and joint crossing numbers.} 
Decomposing a surface along a graph or a curve is a standard way to simplify its topology. The classification of surfaces and classical tools to compute both homology groups and fundamental groups typically rely on such topological decompositions, which are also important in meshing and 3D-modeling (see~\cite{sheffer2007mesh}). Surfaces often come with extra structure which can be modeled by an embedded graph. Decomposing such a surface efficiently then means finding a graph that intersects the original graph transversely that does not intersect the embedded graph too much but nevertheless carries the topological complexity of the surface (see for example,~\cite{lazarus2001computing} or~\cite{erickson2005greedy}). Such decompositions also appear in algorithm design: often, to generalize results on planar graph to graphs embedded on surfaces,  its enough to find a decomposition that cuts open the surface into a disk, then solve the resulting planar instance and stitch back the solution, see, e.g.,~\cite{eh-ocsd-04,lazarus2001computing,c-tags-12}.

In many applications, it is important that the graph along which we cut is canonical in some way. For example, in order to compute a homeomorphism between two surfaces, a common approach is to cut them into disks, put these disks in correspondence, and glue back the surfaces so as to obtain a homeomorphism. However, this only works if the cut graphs have the same combinatorial structure. A seminal result on topological decompositions was pionereed by Lazarus, Pocchiola, Vegter and Verroust~\cite{lazarus2001computing} (see also~\cite{lazarushdr}) who designed an algorithm that finds, for any graph $G$ embedded in a closed orientable surface $S$ a \emph{canonical system of loops} $H$ such that no edge of $H$ intersects\footnote{Throughout the article, we decompose surface-embedded graphs by cutting them along embedded graphs which are transverse to the original graph, and count the number of intersections. This is equivalent to the primal setting studied in, e.g., Lazarus, Pocchiola, Vegter and Verroust~\cite{lazarus2001computing} via graph duality.} any edge of $G$ more than a constant number of times. Here by a canonical system of loops we mean a one-vertex and one-face embedded graph in which the cyclic ordering of the edges around the vertex is $a_1 b_1 a_1^{-1} b_1^{-1}\dots a_g b_g a_g^{-1} b_g^{-1}$.

Such a decomposition is an instance of the problem of finding simultaneous embeddings for two graphs on a surface such that the number of crossings between the two graphs is minimized. More precisely, consider a pair of graphs $G_1$ and $G_2$ embedded on a closed surface $S$ of genus $g$ and define the \emph{joint crossing number} as the minimal number of crossing points between $h(G_1)$ and $G_2$ over all the homeomorphisms $h:S\to S$. This quantity was initially introduced by Negami~\cite{negami2001crossing} who proved that any two graphs $G_1$ and $G_2$ embedded on a closed surface of genus $g$, have joint crossing number $O(g|E(G_1)||E(G_2)|)$, where $|E(G)|$ is the number of edges in $G$. Furthermore, he made the following conjecture, which is still open.

\begin{conjecture}\label{conj:negami}
There exists a universal
constant $C$ such that for any pair of graphs $G_1$ and $G_2$ embedded on a surface $S$, the joint crossing number is at most $C|E(G_1)||E(G_2)|$.\end{conjecture}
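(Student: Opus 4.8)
Since Conjecture~\ref{conj:negami} is open, we describe the most natural line of attack that the existence of short canonical systems of loops suggests, and then indicate precisely where it gets stuck. The plan is the following. Given graphs $G_1$ and $G_2$ embedded on $S$, apply the short-decomposition theorem --- that of Lazarus, Pocchiola, Vegter and Verroust when $S$ is orientable, and the main theorem of this article when $S$ is non-orientable --- to $G_1$ and to $G_2$ \emph{separately}, obtaining canonical systems of loops $H_1$ and $H_2$ such that every loop of $H_i$ meets every edge of $G_i$ in $O(1)$ points. Since $H_1$ and $H_2$ are canonical systems of loops of the same combinatorial type on the same surface, there is a homeomorphism $\phi\colon S\to S$ with $\phi(H_1)=H_2$; take $h:=\phi$ as the candidate homeomorphism realizing a small joint crossing number. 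As $\phi$ is a homeomorphism, $\phi(G_1)$ meets $\phi(H_1)=H_2$ in as many points as $G_1$ meets $H_1$, namely $O(1)$ per (loop, edge) pair, and $G_2$ meets $H_2$ in $O(1)$ points per (loop, edge) pair by the choice of $H_2$. Hence both $\phi(G_1)$ and $G_2$ cross $H_2$ few times, so cutting $S$ along $H_2$ --- which produces a single polygon $P$, since $H_2$ is one-face --- turns $\phi(G_1)$ and $G_2$ into planar graphs drawn inside $P$.

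The next step is to tighten: isotope $\phi(G_1)$ inside $P$ so that each of its edge-arcs is a taut arc in the complement of $G_2$, and remove bigons. A genus-$g$ canonical system has $\Theta(g)$ loops, so each edge of $G_i$ is cut by $H_2$ into $O(g)$ arcs, and $\phi(G_1)$, resp.\ $G_2$, has $O(g\,|E(G_1)|)$, resp.\ $O(g\,|E(G_2)|)$, edge-arcs inside $P$; a standard tightening argument in the disk $P$ then bounds the number of crossings between any such pair of arcs by $O(1)$, so that
\[
\bigl|\phi(G_1)\cap G_2\bigr| \;=\; O\!\left(g^{2}\,|E(G_1)|\,|E(G_2)|\right).
\]
This recovers, by purely topological means, a joint-crossing-number bound in the spirit of Negami's (albeit with a worse dependence on $g$), and in particular yields the conjectured form of the bound for every fixed surface $S$, with a constant depending on $S$.

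The hard part --- and the reason Conjecture~\ref{conj:negami} is open --- is to remove the dependence on $g$ altogether. Two sources of $g$-loss must be defeated simultaneously: (i) any cut graph of a genus-$g$ surface into a disk has $\Theta(g)$ edges, so charging even a single crossing per (loop, edge) pair already costs a factor $g$; and (ii) two graphs that each cross a common reference system only $O(g)$ times can nonetheless be \emph{forced} to cross one another $\Omega(g\,|E(G_1)|\,|E(G_2)|)$ times once both are confined to the polygon $P$, since a cycle of one of them separates $P$ and long arcs of the other must traverse it. Circumventing (i) and (ii) seems to require replacing the global realignment to a fixed reference by a decomposition that is short with respect to $G_1$ and $G_2$ \emph{jointly}, combined with only a partial, local re-routing of the edges of $G_1$ across the faces obtained by cutting along the short system associated with $G_2$, controlled by an amortized argument over those faces rather than a crossing-by-crossing one. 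I expect the technical heart of any such argument, and its main obstacle, to be bounding the crossings created when edges of $G_1$ are pushed across these faces without reintroducing a genus-dependent factor; in the non-orientable case this bookkeeping must additionally keep the number of crosscap crossings under control, which is exactly the difficulty the present article resolves via the Hannenhalli-Pevzner signed-reversal machinery, and which makes it unlikely that the non-orientable case of the conjecture is any easier than the orientable one.
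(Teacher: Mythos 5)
You have not proved Conjecture~\ref{conj:negami}, and neither does the paper: the conjecture is stated there explicitly as open, and the article only confirms the special case in which $G_1$ is a non-orientable canonical system of loops --- that is exactly the content of Theorem~\ref{canonical}, which supplies the $O(1)$ per-(loop, edge) bound your first step invokes. So there is no proof in the paper to compare yours against, and you are right to present a line of attack together with the point where it breaks down rather than to claim a proof. Your fallback derivation of an $O\!\left(g^{2}\,|E(G_1)|\,|E(G_2)|\right)$ bound is essentially sound, modulo routine caveats: the short-decomposition theorems require cellular embeddings on a surface of genus at least $1$, so non-cellular inputs must first be augmented, and the simultaneous tightening of all arcs of $\phi(G_1)$ against those of $G_2$ inside the polygon $P$ needs an innermost-bigon argument to keep $\phi(G_1)$ embedded while reducing crossings pair by pair.

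Be aware, however, that this route is a factor of $g$ weaker than what the paper actually proves. Theorem~\ref{T:negamimain} achieves $O(g)$ crossings per pair of edges (hence $O(g\,|E(G_1)|\,|E(G_2)|)$ in total), not $O(g^{2})$, and it does so by a different mechanism: contracting spanning trees, puncturing at the vertex, cutting along a short orienting arc of multiplicity $2$ (Lemma~\ref{matouvsek}), and applying the orientable-surface arc lemma to the pieces (Lemmas~\ref{negami theorem} and~\ref{correction of negami}), rather than by realigning both graphs to a common reference decomposition. Your diagnosis of the obstacles to removing the genus dependence --- the $\Theta(g)$ size of any cut graph into a disk, and the crossings forced once both graphs are confined to the polygon --- is reasonable and consistent with the discussion in the introduction; just note that obstacle (i) only explains why \emph{this particular} realignment strategy loses a factor of $g$, not why every conceivable strategy must, which is why the conjecture remains open rather than refuted.
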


This conjecture has been investigated further~\cite{archdeacon2001two,richter2005two,hlinveny2015hardness} and variants of this problem have appeared in various works with applications as diverse as finding explicit bounds for graph minors~\cite{geelen2018explicit} or designing an algorithm for the embeddability of simplicial complexes into $\mathbb{R}^3$~\cite{matouvsek2013untangling}. From the perspective of topological decompositions, Negami’s conjecture posits that \emph{short} decompositions of any fixed shape exist, in the sense that one can always decompose an embedded graph along a chosen topological decomposition (modeled by a second, cellularly embedded graph), in such a way that each edge of the decomposition crosses each edge of the graph $O(1)$ times.\footnote{This statement is slightly stronger than Conjecture~\ref{conj:negami} since it enforces a control on the number of crossings between each pair of edges instead of the total number of crossings, but it is equally open.} Such a bound is known for only very few shapes. Beyond orientable canonical systems of loops, Colin De Verdi{\`e}re and Erickson \cite{de2010tightening} proved the existence of a short \emph{octagonal decomposition} for orientable surfaces and provided an algorithm to compute it. We do not know of any other construction of short decompositions than those two and variants thereof. 

In particular, no short decomposition at all seems to be known for non-orientable surfaces. Even if $G_1$ is a \emph{non-orientable canonical system of loops}, that is, a system of one-sided loops with the cyclic ordering $a_1 a_1 a_2 a_2 \dots a_g a_g$ around the vertex, the best known bound for this system is $O(g |E(G_2)|)$ crossings for each edge of the decomposition (see \cite{lazarushdr}), which matches the bound claimed by Negami. Due to their extra difficulty, non-orientable surfaces have been often  somehow neglected in computational topology, but there are many reasons to want to correct this: natural models of random surfaces yield non-orientable surfaces with overwhelming probability, they appear naturally as configuration spaces in diverse contexts~\cite{ghrist2008barcodes,sethna1992order}, and insights garnered from non-orientable surfaces can sometimes also be applied in a subtle way to the orientable ones; see for example~\cite{schaefer2013block}. Furthermore the orientable genus of a graph can be arbitrarily larger than its non-orientable genus, while the reverse does not happen (see Lemma~\ref{odd non-orientable genus}).

\subparagraph*{Our results.} In this article, we initiate a thorough study of short topological decompositions on non-orientable surfaces. As outlined above, one of the only results known on topological decompositions of non-orientable surfaces is a theorem of Negami~\cite{negami2001crossing}. We first show that the proof of this result has a minor flaw and exhibit a specific counter-example to the proof technique. Then we provide an alternative proof based on different techniques.

\begin{restatable}{theorem}{Neg}\label{T:negamimain}
Let $S$ be a non-orientable surface of genus $g\geq 1$ and $G_1$ and $G_2$ be two graphs embedded on $S$. Then there exists a homeomorphism $h$ such that any edge of $h(G_1)$ crosses each edge of $G_2$ at most $O(g)$ times. In particular, the total number of crossings between $h(G_1)$ and $G_2$ is $O(g|E(G_1)||E(G_2)|)$.
\end{restatable}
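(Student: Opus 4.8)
\noindent\emph{(Proof plan.)} Following Negami, the plan is to decompose $S$ into a planar piece along a short family of one-sided curves, re-embed $G_1$ efficiently inside that planar piece, and glue everything back — locating and repairing the step where Negami's argument mishandles the one-sided (cross-cap) gluings. First I would reduce to the case where $G_1$ and $G_2$ are cellularly embedded, by adding edges inside non-disk faces; since the statement only bounds the number of crossings between \emph{original} edges of $h(G_1)$ and $G_2$, the added edges are harmless and are deleted at the end. I would also use that the joint crossing number is unchanged when $G_1$ and $G_2$ are exchanged (replace $h$ by $h^{-1}$), so that it suffices to produce a single re-embedding of $G_1$ crossing $G_2$ appropriately.

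\noindent The core construction goes as follows. I would first produce $g$ pairwise disjoint one-sided simple closed curves $\gamma_1,\dots,\gamma_g$ whose complement in $S$ is planar and such that each $\gamma_i$ crosses each edge of $G_2$ a bounded number of times — arranging moreover that their union crosses each edge of $G_2$ only $O(1)$ times in total (rather than the easier $O(g)$) is one of the steps that needs care; the curves can be obtained iteratively as shortest orientation-reversing cycles in the dual graph of what remains of $G_2$, each such cut reducing the non-orientable genus by one. Building the analogous family for $G_1$, and using that any two such families are related by a homeomorphism of $S$ (change of coordinates), one obtains a homeomorphism $h$ carrying the family for $G_1$ onto $\gamma_1,\dots,\gamma_g$; then $h(G_1)$ crosses each $\gamma_i$ a bounded number of times on each edge. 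Cutting $S$ along $\gamma_1\cup\dots\cup\gamma_g$ yields a \emph{planar} surface $\Sigma_0$ — a sphere with $g$ holes — carrying two plane graphs, the cut-open copies of $G_2$ and of $h(G_1)$, in each of which every original edge is subdivided into only $O(1)$ arcs, all arc-endpoints lying on the $g$ boundary circles in a prescribed cyclic order and paired by a prescribed, orientation-reversing gluing (each boundary circle being glued to itself by the antipodal involution, since the $\gamma_i$ are one-sided). Inside $\Sigma_0$ I would then redraw the cut-open $h(G_1)$ into taut position relative to the cut-open $G_2$, respecting the boundary data; using that two simple arcs in a planar surface with $g$ holes cross $O(g)$ times in taut position — here planarity is essential, and is what makes the dependence on $g$ linear rather than worse — together with the fact that every original edge contributes only $O(1)$ arcs on each side, this gives $O(g)$ crossings for each pair of original edges, hence the bound $O(g\,|E(G_1)|\,|E(G_2)|)$ on the total number of crossings.

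\noindent The delicate point — and exactly where Negami's argument fails — is to make the taut redrawing in $\Sigma_0$ compatible with the orientation-reversing gluings, i.e., to correctly handle the cross-caps. In the orientable analogue one cuts along two-sided curves, the holes of $\Sigma_0$ are glued back in pairs by annular handles, and crossing arcs can be combed past one another; in the non-orientable case the holes are filled by cross-caps whose boundary circle is glued to itself antipodally, and arcs running across a cross-cap cannot be disentangled in the same way. Negami's counting implicitly treats the cross-caps as handles, and I would exhibit an explicit family of embeddings on which the number of crossings forced by the one-sided gluing exceeds the bound his argument claims. The correction must carry the one-sidedness through the whole construction — in choosing the curves $\gamma_i$, in aligning $G_1$ with them, in keeping each original edge cut only $O(1)$ times, and, most delicately, in controlling the crossings produced by the taut redrawing once the antipodal gluings are reinstated. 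Carrying out this last estimate is the technical heart of the proof and the main obstacle.
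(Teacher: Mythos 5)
Your plan cuts $S$ along a full planarizing system of $g$ disjoint one-sided curves and redraws $h(G_1)$ tautly in the resulting planar surface; the paper instead cuts along a \emph{single} orienting arc of multiplicity at most $2$ (Lemma~\ref{matouvsek} and Remark~\ref{remarkorienting}), lands on an orientable surface, and then invokes Negami's own orientable bound (Lemma~\ref{negami theorem}), so that all the non-orientable bookkeeping is concentrated in one cut-and-paste. The flaw the paper identifies in Negami's original argument is also not the one you describe: it is not a miscount at the cross-caps, but that the one-sided arc used in Negami's induction may itself be orienting, so that cutting along it yields an \emph{orientable} surface and the genus-$(g-1)$ non-orientable hypothesis simply does not apply (see the counterexample in Section~\ref{negami-correction}).

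Two steps in your plan would fail as stated. First, you need a planarizing system of disjoint one-sided curves whose union crosses each edge of $G_2$ only $O(1)$ times; this is not delivered by the iterative ``shortest orientation-reversing cycle in what remains'' construction, since after cutting along $\gamma_1$ each edge of $G_2$ can already be in $3$ pieces, $\gamma_2$ then crosses it up to $6$ times, and the per-edge count grows like $3^g$. Even the paper's main Theorem~\ref{canonical}, which is the substantial technical contribution, only achieves $O(1)$ crossings per \emph{loop} of the system (hence $O(g)$ per edge in total); a planarizing system with $O(1)$ total crossings per edge is not known to exist. Second, the taut redrawing of the cut-open $h(G_1)$ inside $\Sigma_0$ compatible with the antipodal boundary identifications---keeping its arcs simultaneously disjoint, pinning down their endpoints on the boundary circles so that the gluings reassemble an embedded graph, and then bounding crossings with the cut-open $G_2$---is the step you yourself flag as ``the technical heart of the proof and the main obstacle,'' and it is not carried out. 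The paper's single-cut reduction to the orientable case sidesteps both difficulties.
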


In order to prove this theorem we take advantage of a technique in~\cite{matouvsek2013untangling} to compute a short \emph{orienting curve}, i.e., a curve such that cutting along it yields an orientable surface.

Our main result is the following theorem providing, to the best of our knowledge, the first known case of a short topological decomposition into a disk for non-orientable surfaces.

\begin{restatable}{theorem}{canonical}\label{canonical}
There exists a polynomial time algorithm that given a graph cellularly embedded on the non-orientable surface $N$ computes a non-orientable canonical system of loops such that each loop in the system intersects any edge of the graph in at most $30$ points.
\end{restatable}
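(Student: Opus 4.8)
To prove Theorem~\ref{canonical} the plan is to first reduce to the orientable setting, where short canonical systems of loops are already known to exist, and then to \emph{un-orient} the resulting decomposition into the all-crosscaps canonical form through a tightly controlled sequence of elementary surgeries. Concretely, I would begin, as in the proof of Theorem~\ref{T:negamimain} and following the technique of Matou\v{s}ek et al.~\cite{matouvsek2013untangling}, by computing a short \emph{orienting curve} $\gamma$ on $N$: a simple closed curve crossing each edge of $G$ only $O(1)$ times whose complement is an orientable surface $S'$ with boundary. Running the algorithm of Lazarus, Pocchiola, Vegter and Verroust~\cite{lazarus2001computing} (or the octagonal decomposition of Colin de Verdi\`ere and Erickson~\cite{de2010tightening}) on $S'$, via the duality between the transverse and primal settings, and recording how $\gamma$ sits inside the result, produces a system of loops on $N$ that is still short and whose polygonal schema has the controlled shape $a_1 b_1 a_1^{-1} b_1^{-1} \cdots a_h b_h a_h^{-1} b_h^{-1}$ followed by a bounded number of crosscap letters.

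It then remains to turn this almost-canonical word into the fully non-orientable canonical word $d_1 d_1 d_2 d_2 \cdots d_g d_g$ by surgeries that are individually cheap and collectively non-interfering. The classical relation of Dyck, trading a handle in the presence of a crosscap for two more crosscaps, provides the elementary move; realized on the surface it is a short sequence of loop slides that adds only $O(1)$ crossings to each edge of $G$ it touches. The delicate point --- and, I believe, where Negami's original argument fails, as witnessed by the counterexample we exhibit --- is that performing these $\Theta(g)$ moves in an arbitrary order lets crossings accumulate on a single edge of $G$, yielding only the weaker $O(g)$ bound of Theorem~\ref{T:negamimain}. To circumvent this, I would encode the discrepancy between the current schema and the canonical one combinatorially, through the interlacement (signed chord-diagram) structure of the system of loops, which in the non-orientable setting carries signs and is naturally a \emph{signed permutation}; here recent work of Schaefer and \v{S}tefankovi\v{c} supplies the right combinatorial framework for systems of loops on non-orientable surfaces and the structural lemmas needed to reach this normal form. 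One then invokes the Hannenhalli--Pevzner theory of sorting signed permutations by reversals: their analysis of the breakpoint graph, and in particular the hurdle and fortress decomposition, yields a reversal sequence --- equivalently, a sequence of loop-slide surgeries --- whose supports can be routed along pairwise disjoint or laminarly nested regions of $N$, so that every edge of $G$ is crossed by only $O(1)$ of these surgeries.

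The main obstacle is exactly this ``bounded interference'' step: converting the combinatorial locality of an optimal Hannenhalli--Pevzner reversal sequence into a genuinely geometric statement about disjointness of the supports of the corresponding surgeries, while simultaneously keeping track of the one-sided/two-sided and sign data that the non-orientable setting forces upon us (and which is precisely what the flawed argument mishandles). Granting this, the explicit constant $30$ is obtained by summing the constant contributions of the three phases --- the orienting curve, the Lazarus--Pocchiola--Vegter--Verroust canonical system, and the Hannenhalli--Pevzner canonicalization; each phase runs in polynomial time and the surgeries are carried out explicitly, so the overall algorithm is polynomial as claimed.
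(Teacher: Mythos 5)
Your route is genuinely different from the paper's, and unfortunately it runs headlong into the very obstacle that the paper identifies as the reason the orientable techniques do not transfer. The paper never builds an orientable canonical system of loops on the cut-open surface $S'$ and then ``un-orients'' it. Instead, it stays entirely inside the framework of cross-cap drawings (\`a la Schaefer and \v{S}tefankovi\v{c}): one reduces to a one-vertex scheme, adds a short orienting loop, and then modifies the inductive Schaefer--\v{S}tefankovi\v{c} algorithm --- by choosing which one-sided non-orienting loop to remove according to an order inspired by Bergeron's reformulation of Hannenhalli--Pevzner --- so that the resulting planar drawing with localized cross-caps has the property that every cross-cap is joined to a root face by a dual path of multiplicity at most two. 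Connecting the cross-caps to a base point along those paths immediately yields a non-orientable canonical system (Lemma~\ref{right combinatorics}). The Hannenhalli--Pevzner connection enters as a guide for ordering loop removals inside the cross-cap induction, not as a device for sequencing polygonal-schema surgeries.

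The central step of your proposal --- realizing the Dyck relation $\Theta(g)$ times by loop slides while only ever adding $O(1)$ crossings per edge --- is exactly what is flagged as failing in the introduction: the cut-and-paste moves behind the classification of surfaces cost an extra factor $O(g)$ on the crossings of the produced curves in the non-orientable case (see the discussion around~\cite[Theorem~4.3.9]{lazarushdr}). You acknowledge this as ``the main obstacle'' and propose to overcome it by pairing the surgeries with an optimal Hannenhalli--Pevzner reversal sequence and hoping for laminar or disjoint supports, but there is no known result translating the combinatorial locality of a reversal sequence into geometric disjointness of the corresponding loop-slide supports, and nothing in the hurdle/fortress analysis supplies it: hurdles and fortresses classify when extra reversals are needed, not where their supports lie relative to a fixed background graph. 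As it stands, the proposal reduces the theorem to an unproved interference bound that is at least as hard as the theorem itself, whereas the paper sidesteps it entirely by constructing the short cross-cap drawing directly and then only needing to route $g$ short dual paths (each of multiplicity two) from the cross-caps to a common base point. A secondary issue: the Lazarus--Pocchiola--Vegter--Verroust algorithm is stated for closed surfaces, so your first phase already needs some adaptation to the bordered surface $S'$, and the claim that the recombined polygonal schema on $N$ has the clean form ``canonical handles plus a bounded number of cross-cap letters'' is asserted, not argued.
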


\subparagraph*{Main ideas and proof techniques.}

As in many similar works the first step in most of our results is to contract a spanning tree of the underlying graph, reducing the problems to the setting of one-vertex graphs embedded on a non-orientable surface. The combinatorics of a one-vertex embedded graph are completely described by a rotation system, or embedding scheme, i.e., by the circular order of the edges around the vertex, and a signature for each loop indicating whether it is one-sided or two-sided. Such an embedding scheme will be the basic object with which we work.

Then, a simple but important object that we rely on extensively is an \emph{orienting curve}, i.e., a closed curve on a non-orientable surface such that cutting along it produces an orientable surface. It was shown by Matou\v{s}ek, Sedgwick, Tancer and Wagner~\cite{matouvsek2013untangling} that given a graph embedded on a non-orientable surface, one can compute such an orienting loop that crosses each edge of the graph at most a constant number of times. This lemma allows us, at the cost of slightly increasing the constants in our results, to assume that our embedding schemes always have an orienting curve. With this tool at hand we provide a corrected proof of Theorem~\ref{T:negamimain}. Furthermore, the existence of this orienting curve will significantly simplify our work to prove Theorem~\ref{canonical}.

For the proof of Theorem~\ref{canonical}, we first point out that the techniques used to prove the orientable version in~\cite{lazarus2001computing} do not readily apply, as they rely on a fine control of the cut-and-pasting operations used in the proof of the classification of surfaces, and in the non-orientable case there is an additional step in these operations which
incur an overhead of $O(g)$ in the number of crossings of the resulting curves (see~\cite[Theorem~4.3.9]{lazarushdr}). Instead, our proof of Theorem~\ref{canonical} builds on important recent work of Schaefer and \v{S}tefankovi\v{c}~\cite{JGAA-580}. The foundational idea behind this work, which takes its roots in an article of Mohar~\cite{mohar2009genus} on the degenerate crossing number, is to represent a graph embedded on a non-orientable surface as a planar drawing, with a finite set of \emph{cross-caps}, which are points where multiple edges are allowed to cross in a way that reverses the permutation, as pictured in the second picture of Figure~\ref{F:crosscaps}. Using an intricate argument inducting on the loops of an embedding scheme, Schaefer and \v{S}tefankovi\v{c} showed that any graph embedded on a non-orientable surface can be represented in such a cross-cap drawing so that each edge uses each cross-cap at most twice. Our main technical contribution is to upgrade their construction so that the cross-caps can be connected to each other so as to yield a non-orientable canonical system of loops (Lemma \ref{right combinatorics}), so that each loop intersects each edge of the one-vertex graph in at most $30$ points (see Figure~\ref{F:crosscaps}).

\begin{figure}[t]
    \centering
    \includegraphics[width=\textwidth]{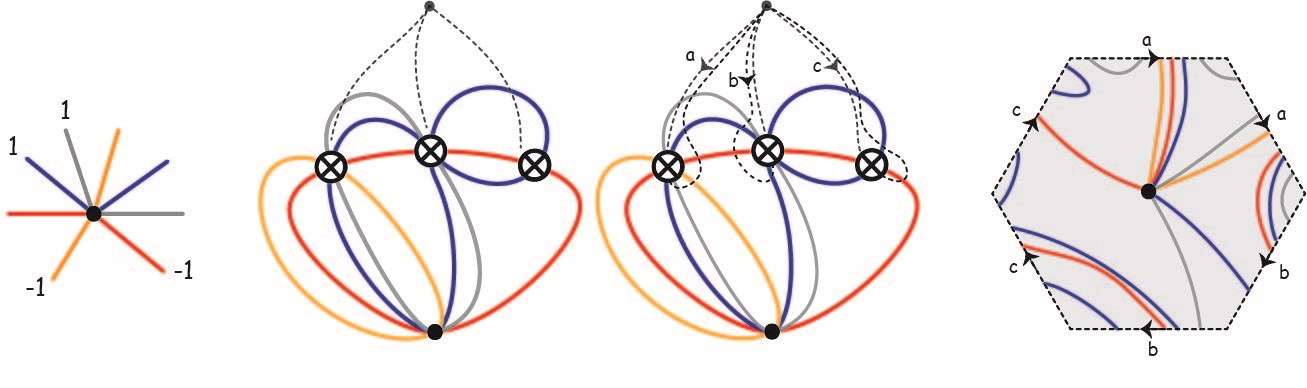}
    \caption{From left to right: 1)The combinatorial information of a one-vertex graph. 2) A cross-cap drawing of this graph, with cross-caps connected to a base-point. 3) A joint drawing of the graph and a canonical system of loops. 4) A different representation: decomposing the graph with a cannonical system of loops.}
    \label{F:crosscaps}
\end{figure}

The complexity of the drawings provided by the proof of Schaefer and \v{S}tefankovi\v{c} increases too fast to directly obtain a good bound by just connecting the cross-caps. Therefore, we modify their algorithm. First, by the aforementioned techniques, we can assume that we always have an orienting loop, which simplifies some of the steps and provides additional structure to the inductive argument. But more importantly, we show that one can impose a certain order in which we choose the one-sided loops, as well as the separating loops, in the inductive argument of Schaefer and \v{S}tefankovi\v{c} so as to obtain a finer control on the resulting drawing. 

The order in which we choose loops comes from a seemingly unrelated problem in computational biology, and more precisely genome rearrangements. Given a permutation with signatures (a bit assigned to each letter), a signed reversal consists in choosing a subword in $w$, and reversing it as well as the signatures of all its letters. The \emph{signed reversal distance} between two signed permutations is the minimum number of signed reversals needed to go from one permutation to the other one. This distance, and in particular algorithms to compute it has been intensively studied in the computational biology literature due to its relevance for phylogenetic reconstruction (see for example~\cite{hayes2007computing}). A cornerstone of the theory is the breakthrough of Hannenhalli and Pevzner~\cite{hannenhalli1999transforming}  who provided an algorithm to compute the signed reversal distance between two signed permutations in polynomial time (see also the reformulation by Bergeron~\cite{bergeron2001very}). Now, as we illustrate in Figure~\ref{F:pancakes}, there is a very strong similarity between computing the signed reversal distance between two permutations and embedding a one-vertex graph built from these two permutations with a minimum number of cross-caps (see~\cite{bura2016lower,huang2017topological}). Surprisingly the algorithms of Hannenhalli and Pevzner on one side and of Schaefer and \v{S}tefankovi\v{c} on the other also have similarities. The proof of Theorem~\ref{canonical} leverages the literature on the signed reversal distance problem, and in particular the structure we impose on the cross-caps drawings of non-orientable graphs is inspired on ideas from the aforementioned genome rearrangements algorithms. We hope that further interpollination between computational genomics and computational topology will lead to new surprises.\\

\begin{figure}[t]
    \centering
    \includegraphics[width=.80\textwidth]{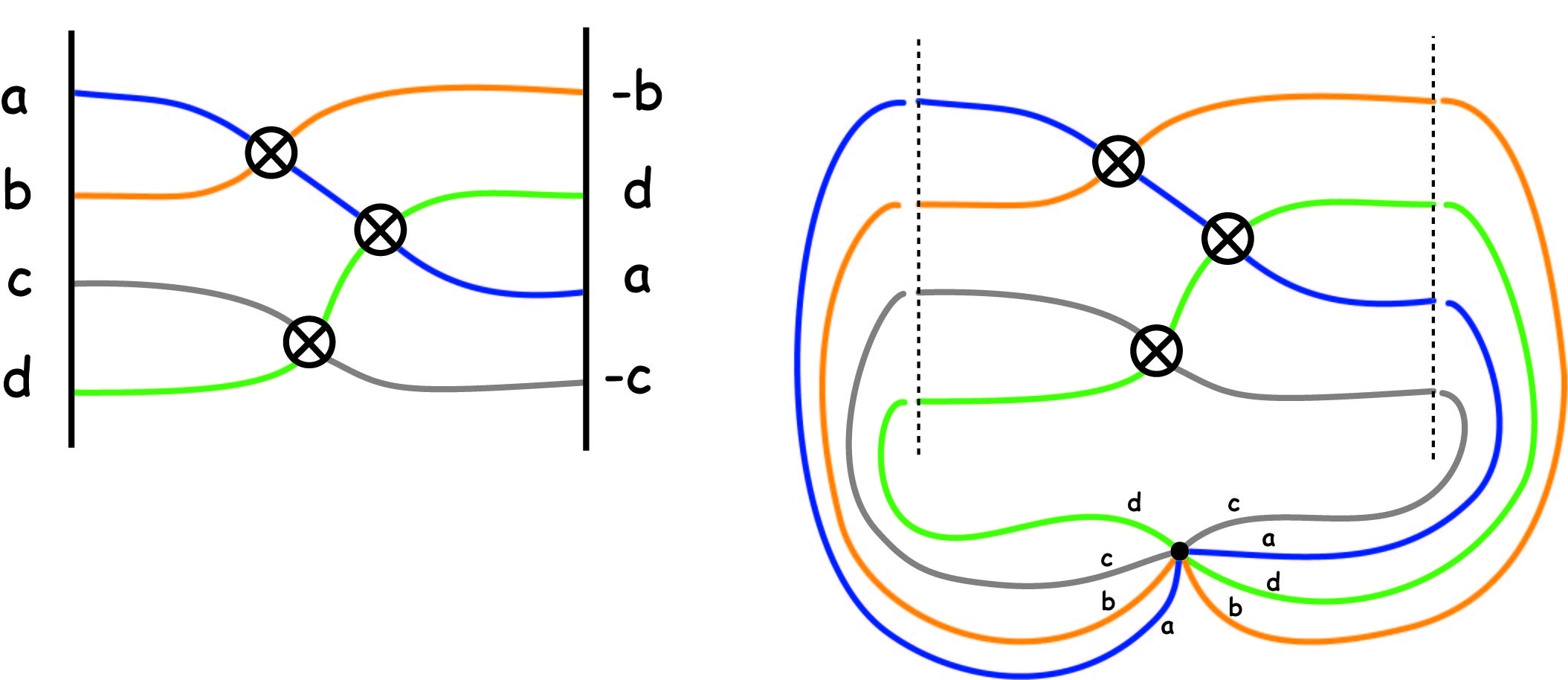}
    \caption{Left: a pictorial representation of three signed reversals bringing the signed permutation on the left to the signed permutation of the right. Right: Attaching the two permutations to a common basepoint yields a one-vertex graph with an embedding scheme, and the signed reversals provide a cross-cap drawing of that scheme where each loop enters each cross-cap at most once.}
    \label{F:pancakes}
\end{figure}

\subparagraph*{Outline.}
After introducing the preliminary definitions and results in Section \ref{pre}, we will prove Theorems \ref{T:negamimain} and \ref{canonical} 
 in Sections \ref{negami-correction} and \ref{non-canonical} respectively. %\ref{box}, and \ref{reprove canonical}, respectively.

\subparagraph*{Acknowledgements.} We are grateful to Marcus Schaefer and Daniel \v{S}tefankovi\v{c} for providing us the full version of~\cite{JGAA-580}, to Francis Lazarus for insightful discussions, and to the anonymous reviewers for very helpful comments.

\section{Preliminaries}\label{pre}

While this paper strives to be mostly self-contained, we refer the reader to standard references such as Hatcher~\cite{hatcher2002algebraic} and Stillwell~\cite{stillwell1993classical} for more topological background, the book of Mohar and Thomassen for an extensive overview of graphs on surface~\cite{mohar2001graphs} and the survey of Colin de Verdière~\cite{c-ctgs-18} on topological algorithms for embedded graphs.

\subparagraph*{Surfaces.} A \emph{surface} $S$ is a topological Hausdorff
space where each point has a neighborhood homeomorphic to either the plane or
the closed half-plane. The points without a neighborhood homeomorphic to the plane
comprise the boundary of $M$. Compact surfaces without boundaries are called \emph{closed surfaces}. A surface is called \emph{orientable} if it does not contains a subspace homeomorphic to a M\"{o}bius band; otherwise, it is called \emph{non-orientable}. 
Throughout this work, we denote orientable surfaces and non-orientable surfaces by $M$ and $N$ respectively, and by $S$ whenever orientability did not make a difference.

\emph{The classification theorem} for closed surfaces states that any orientable surface $M$ of genus $g\geq0$ is homeomorphic to a sphere with $g$ handles\footnote{A \emph{handle} is obtained by removing a small disk and gluing a punctured torus along its boundary to the boundary circle of the resulting hole (see the left picture in Figure \ref{classification}).}, $g$ is called the \emph{orientable genus} of $M$; and any non-orientable surface $N$ of genus $g\geq 1$ is homeomorphic to a sphere with $g$ cross-caps\footnote{A \emph{cross-cap} is obtained by removing a small disk from the sphere and gluing in a M\"{o}bius band along its boundary to the boundary circle of the resulting hole (see the right picture in Figure \ref{classification}).}, in this case, $g$ is the \emph{non-orientable genus} of $N$ (see Figure \ref{classification}). Throughout this work, by the \emph{genus} of a surface, we mean the orientable genus for orientable surfaces and the non-orientable genus for non-orientable surfaces. If $S$ is a surface with $k$ boundary components, the classification theorem still applies, except that we need to replace the initial sphere with a sphere with $k$ holes. Thus a surface is uniquely determined by its genus, by its number of boundary components, and by its orientability. 
\begin{figure}[t]
\centering
\includegraphics[width=0.60\textwidth]{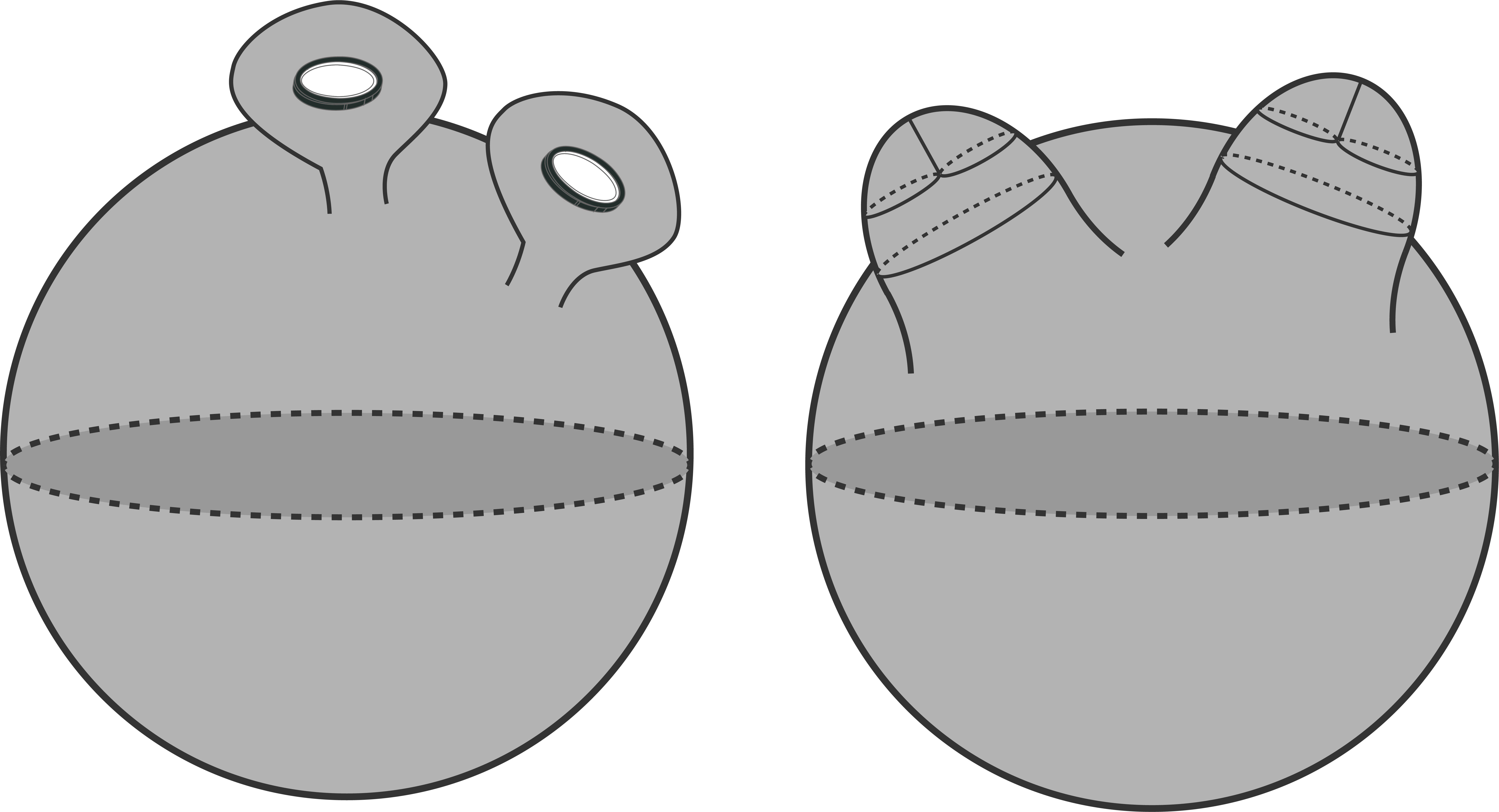} 
\caption{Depiction of compact surfaces without boundaries obtained by attaching handles (orientable surfaces, left picture) or cross-caps (non-orientable surfaces, right picture).}
\label{classification}
\end{figure}

\subparagraph*{Curves and Embedded Graphs}

A \emph{closed curve} or a \emph{cycle} on a surface $S$ is a continuous map $\theta: S^1\to S$. By a \emph{path} from $x$ to $y$ on a surface, we mean a continuous map $\theta: [0,1]\to S$ where $\theta(0)=x$ and $\theta(1)=y$. A path with two ends on the boundary of a surface is called an \emph{arc}. These are called simple if the maps are injective.

We call a closed curve on a surface \emph{two-sided} if a small closed neighborhood of it is homeomorphic
to the annulus. Otherwise, it is called \emph{one-sided} and it has a closed neighborhood homeomorphic to the M\"{o}bius band. %F
Given a closed curve $\nu$ on a surface $S$, cutting $S$ along $\nu$ gives a (possibly disconnected) surface with one or two boundary components depending on whether $\nu$ is one-sided or two-sided. A curve $\delta$ on a surface $S$ is called \emph{non-separating} if the surface we obtain by cutting along $\delta$ is connected; otherwise $\delta$ is separating. An \emph{orienting curve}\footnote{This terminology is slightly non-standard, we follow Schaefer and \v{S}tefankovi\v{c}~\cite{JGAA-580} who attribute it to B. Mohar} on a non-orientable surface $N$ is a curve $\gamma$ such that by cutting along $\gamma$, we get a connected orientable surface. We recall the following lemma from \cite{matouvsek2013untangling}.
\begin{lemma}\label{matouvsek2}
(\cite{matouvsek2013untangling}, Lemma 5.3). Let $N$ be a non-orientable surface of genus $g$ with $h$ boundary components and let $\gamma$ be an orienting closed curve. Let $g_\gamma$ be the (orientable) genus and $h_\gamma$ be the number of boundary components in $N$ after cutting along $\gamma$.
\begin{itemize}
    \item If $g$ is odd, then $\gamma$ is one-sided, $g_\gamma = \frac{g-1}{2}$, and $h_\gamma = h+1$
    \item If $g$ is even, then $\gamma$ is two-sided, $g_\gamma = \frac{g-2}{2}$, and $h_\gamma= h + 2$.
\end{itemize}
\end{lemma}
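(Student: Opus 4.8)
The plan is to derive both conclusions purely from the classification theorem recalled above together with the invariance of the Euler characteristic under cutting along a simple closed curve. First I would record the relevant Euler characteristics: a non-orientable surface of genus $g$ with $h$ boundary components has $\chi = 2 - g - h$ (the sphere has $\chi = 2$, each cross-cap replaces a disk of Euler characteristic $1$ by a M\"obius band of Euler characteristic $0$, and each boundary component removes an open disk), while a connected orientable surface of genus $g_\gamma$ with $h_\gamma$ boundary components has $\chi = 2 - 2g_\gamma - h_\gamma$.

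Next I would show that cutting $N$ along $\gamma$ leaves the Euler characteristic unchanged. Choosing a triangulation of $N$ in which $\gamma$ appears as a subcomplex — a cycle with, say, $k$ vertices and $k$ edges — cutting along $\gamma$ duplicates exactly these cells: whether $\gamma$ is one-sided or two-sided, the cut surface $\widehat N$ acquires $k$ new vertices and $k$ new edges (assembled into a single boundary circle of length $2k$ in the one-sided case, or into two boundary circles of length $k$ in the two-sided case), while the number of triangles is unchanged, so $\chi(\widehat N) = \chi(N)$. I would then pin down $h_\gamma$ from the side type of $\gamma$: since $\gamma$ is by definition an orienting curve, $\widehat N$ is connected, hence $\gamma$ is non-separating. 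A one-sided simple closed curve has a M\"obius-band neighborhood and cutting along it always yields exactly one new boundary component, so $h_\gamma = h+1$; a non-separating two-sided simple closed curve has an annular neighborhood and cutting replaces it by two annuli, so $h_\gamma = h+2$.

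Finally I would combine $\chi(N) = \chi(\widehat N)$ with the two Euler characteristic formulas to get $g + h = 2 g_\gamma + h_\gamma$, and conclude by the case analysis: substituting $h_\gamma = h+1$ (the one-sided case) gives $g = 2 g_\gamma + 1$, forcing $g$ odd and $g_\gamma = (g-1)/2$; substituting $h_\gamma = h+2$ (the two-sided case) gives $g = 2 g_\gamma + 2$, forcing $g$ even and $g_\gamma = (g-2)/2$. Since every simple closed curve is one-sided or two-sided, these cases exhaust all possibilities, and the parity constraints just obtained show they are mutually exclusive precisely according to whether $g$ is odd or even — which is the statement. I expect the only delicate point to be the Euler characteristic computation in the one-sided case, where the new boundary circle double-covers $\gamma$ instead of being a faithful copy of it; this is why I would carry out the bookkeeping on a triangulation (equivalently, via the inclusion--exclusion identity $\chi(N) = \chi(\widehat N) + \chi(\gamma) - \chi(\text{identified circles})$, all of whose terms are disjoint unions of circles and hence vanish) rather than through an informal neighborhood argument.
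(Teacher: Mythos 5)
Your proof is correct. Note that the paper does not prove this statement itself — it is imported verbatim from Matou\v{s}ek, Sedgwick, Tancer and Wagner (their Lemma 5.3) — so there is no in-paper argument to compare against; your derivation is the standard one, combining the Euler-characteristic relations that the paper records in its preliminaries ($\chi = 2 - eg(S) - h$, invariance of $\chi$ under cutting along a closed curve) with the boundary count for one-sided versus non-separating two-sided curves, and you correctly close the logical loop by observing that the two side-type cases are exhaustive and force opposite parities of $g$, so the stated implications follow by contraposition.
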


Two paths $p$ and $q$ with the same end points $a$ and $b$ on a surface $S$, are \emph{homotopic} if there is a continuous map $H:[0,1]\times [0,1]\to S$ such that $H(0,.)=p$ , $H(1,.)=q$,
$H(.,0) = a$, and $H(.,1) = v$. Two cycles $\gamma$ and $\gamma^{'}$ are freely homotopic if there is a continuous map $H: [0, 1]\times S^1 \to S$ such that
$H(0, t) = \gamma(t)$ and $H(1, t) = \gamma^{'}(t)$ for all $t$. A cycle is \emph{contractible} if it is homotopic to a
constant cycle; an arc is \emph{contractible} if it is homotopic to a path on the boundary. 

\subparagraph*{Graph Embeddings.}
In a drawing of a graph on a surface, every vertex is mapped to a point on the surface and edges are realized as curves connecting their endpoints. Informally, a drawing of the graph $G$ on a surface $S$ with no crossing on the edges is called an embedding of $G$ on $S$. More precisely, an embedding of $G$ is a continuous, one-to-one map from $G$ into $S$. In this work, we generally identify $G$ with its embedding on $S$. A graph embedding is called \emph{cellular} if its faces are homeomorphic to open disks. For a graph $G$ cellularly embedded on a surface $S$, \emph{Euler's formula} states that $v-e+f=\chi(S)$, where $v$, $e$ and $f$ represent the number of vertices, edges and faces of $G$ and $\chi(S)$ is the \emph{Euler characteristic}, a topological invariant that depends only on the surface and not on the cellular embedding. The \emph{Euler genus} of a closed surface is defined by $eg(S)=2-\chi(S)$. It is easy to see that the Euler genus is twice the orientable genus for orientable surfaces, and equal to the non-orientable genus for non-orientable ones. We take these identities as definition for surfaces with boundary, then the Euler characteristic of a surface with boundary is $2-eg(S)-k$ where $k$ is the number of the boundary components of the surface. 
Cutting a surface along an arc $\alpha$ increases the Euler characteristic of the surface by 1 and cutting along a closed curve does not change the Euler characteristic. These relations allow us to relate the genus and the number of boundary components of a surface after a cutting. 

\begin{lemma}\label{adding separating}
Let $t$ be a separating (closed) curve on a closed surface $S$ and let $S_1$ and $S_2$ be the surfaces we obtain by cutting along $t$. We have $eg(S)=eg(S_1)+eg(S_2)$.
\end{lemma}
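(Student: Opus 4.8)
The plan is to compute Euler characteristics on both sides and invoke the definition of Euler genus. First I would note that the separating curve $t$ must be two-sided: a one-sided simple closed curve is non-separating (its M\"obius-band neighborhood has a connected boundary circle, so its complement in the connected surface $S$ is connected, hence cutting along it leaves the surface connected), which would contradict the hypothesis that cutting along $t$ produces two pieces $S_1$ and $S_2$. Since $t$ is two-sided and $S$ is closed, cutting replaces an annular neighborhood of $t$ by two boundary circles, one of which lands in $S_1$ and the other in $S_2$; thus each of $S_1$ and $S_2$ has exactly one boundary component.

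Next I would use the fact recalled just before the lemma that cutting along a closed curve does not change the Euler characteristic, so that $\chi(S)=\chi(S_1\sqcup S_2)=\chi(S_1)+\chi(S_2)$. Then I substitute the genus--characteristic relations: $S$ is closed, so $\chi(S)=2-eg(S)$, while each $S_i$ has one boundary component, so $\chi(S_i)=2-eg(S_i)-1=1-eg(S_i)$. Combining these gives $2-eg(S)=(1-eg(S_1))+(1-eg(S_2))$, which rearranges to $eg(S)=eg(S_1)+eg(S_2)$, as desired.

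The only point that requires any care is the bookkeeping of boundary components, namely that cutting a closed surface along a two-sided separating curve produces exactly one new boundary circle on each of the two resulting pieces (rather than, say, two circles on one side); this is immediate once one knows $t$ is two-sided and separating. Everything else is a direct substitution into Euler's formula, so I expect no genuine obstacle.
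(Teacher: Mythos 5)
Your proof is correct and follows essentially the same route as the paper's: cut along $t$, note that each of $S_1$ and $S_2$ gets exactly one boundary circle, and then plug into $\chi(S)=\chi(S_1)+\chi(S_2)$ and the relation $\chi=2-eg-k$. The only difference is that you explicitly justify that $t$ is two-sided, a small point the paper leaves implicit.
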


\begin{proof}
By cutting along $t$, we get two boundary components, one on each of $S_1$ and $S_2$ and since $t$ is a closed curve, we have $2-eg(S)=(2-eg(S_1)-1)+(2-eg(S_2)-1)$ which proves the claim. 
\end{proof}
\subparagraph*{Discrete Metric on Surfaces.} We briefly define the notions of combinatorial and cross-metric surfaces, which define a
discrete metric on a surface and are essentially the same, up to duality (see \cite{de2010tightening} for more details). For a graph $G$ cellularly embedded on a surface without boundary, the \emph{dual graph} $G^*$ is defined as follows: to each face of $G$ we associate a vertex and for each vertex $e$ in $G$ that separates the faces $f_1$ and $f_2$, we connect the vertices associated to $f_1$ and $f_2$.  Note that in this construction, two vertices can be connected with multiple edges if they have more than one common edge. A dual graph embedding is also cellular and $G^{**}=G$.

A \emph{combinatorial surface} is a surface $S$ together with a graph $G$ which is cellularly embedded on $S$. In the case that $S$ has boundary, $G$ is embedded so that the boundary is the union of some edges of $G$.
In this model, the only allowed curves are walks in $G$,
and the length of a curve $C$ is the number of edges of $G$ traversed by $C$. A \emph{cross-metric surface} is a surface $S$ together with a graph $G$ which is cellularly embedded on $S$. In the case that $S$ has boundary, $G$ is embedded so that the boundary is the union of some edges of $G$. The curves allowed in this definition, are those that cross $G$ only transversely (i.e., that intersect $G$ only at edges and in a non-tangent way), and the length of a curve $C$ is the number of edges of $G$ that $C$ crosses.

In this work, we mostly work in the cross-metric model and we refer to the embedded graph $G$ of the cross metric surface $S$, as the primal graph on $S$. The \emph{multiplicity} of a curve (or a system of curves) at some edge $e$ of $G$ is the number of times $e$ is crossed by the curve (curves). The
multiplicity of a curve (or a system of curves) is the maximal multiplicity of the curve (curves) at any edge $e$ of $G$.

\subparagraph*{Embedding schemes.} For $v$ a vertex of an embedded graph $G$, by a \emph{rotation} $\rho_v$ at $v$, we mean the cyclic permutation of the ends of edges incident to $v$. A \emph{rotation system}, $\rho$, of a
graph assigns a rotation to each vertex. We assign to each edge a signature which is a number from $\{1,-1\}$.
A rotation system $\rho$ and a signature $\lambda$ for the edges determine a cellular embedding for the graph up to homeomorphism i.e. we can compute the faces of the embedding purely combinatorially (see \cite{mohar2001graphs} for further details). The pair $(\rho,\lambda)$ is called an \emph{embedding scheme} for the graph $G$, we simply use \emph{scheme} instead of embedding scheme throughout this work. Since a first step in all of our arguments is to contract a spanning tree, almost all the embedding schemes considered in this article will have a single vertex. A cycle in a scheme is one-sided if the signature of its edges multiply to $-1$ and it is two-sided otherwise.

 A scheme is \emph{orientable} if all its cycles are two-sided, and \emph{non-orientable} otherwise.
A loop $e$ in the scheme divides the half-edges around the vertex into two parts; each part is called a \emph{wedge} of $e$. When a loop $g$ has exactly one end in each wedge of $e$, we say that the ends of $g$ \emph{alternate} with those of $e$; otherwise both ends of $g$ is in one wedge of $e$ and we say that the ends of $e$ \emph{enclose} the ends of $g$. Finally, we slightly depart from the convention in topological graph theory: we do \textbf{not} assume that a scheme describes a cellularly embedded graph, since we will sometimes consider orientable schemes over non-orientable surfaces. However, we will always consider embedded graphs (and their schemes) on surfaces of minimal non-orientable genus, so while these graphs are sometimes not cellularly embedded, they have at most one non-cellular face with non-orientable genus one (see Lemma~\ref{odd non-orientable genus}).

 Following Schaefer and \v{S}tefankovi\v{c}~\cite{JGAA-580}, we will use the following model with localized cross-caps to represent non-orientable embedded graphs. A \emph{planarizing system of disjoint one-sided curves} on a non-orientable surface, abbreviated \emph{PD1S}, is a system of $g$ disjoint one-sided curves such that by cutting along them, we obtain a sphere with $g$ holes (this was first introduced by Mohar~\cite{mohar2009genus}). Therefore, from any graph embedded on a non-orientable surface, we obtain a planar representation by cutting along such a system. The non-orientable surface is recovered by gluing a M\"{o}bius band on each boundary component, which we depict using $\tiny{\bigotimes}$ and call a \emph{cross-cap}. It is easily checked that a family of edges entering a cross-cap emerge on the other side with a reversed order, and that the sidedness of a loop is determined by the number of cross-caps that it crosses.
 
The planar drawing that we obtain by this cross-cap localization is called a \emph{cross-cap drawing}, see Figure~\ref{equivalent} for examples. In this model, we say that a drawing realizes an embedding scheme $(G,\rho,\lambda)$ if the rotation at each vertex is as prescribed by $\rho$, and if whenever a closed curve in the drawing passes through an odd (resp. even) number of cross-caps, the multiplication of the signatures of the edges it follows is  $-1$ (resp. $1$).  
Note that while a cross-cap drawing uniquely describes an embedded graph, the converse is not true, see Figure~\ref{equivalent}. 
 Throughout this article, by a cross-cap drawing for a graph $G$ with an embedding scheme $(\rho,\lambda)$, we mean the planar graph with cross-caps treated as extra vertices and edges being the sub-edges in $G$. 

 \begin{figure}[t]
    \centering
    \includegraphics[width=.75\textwidth]{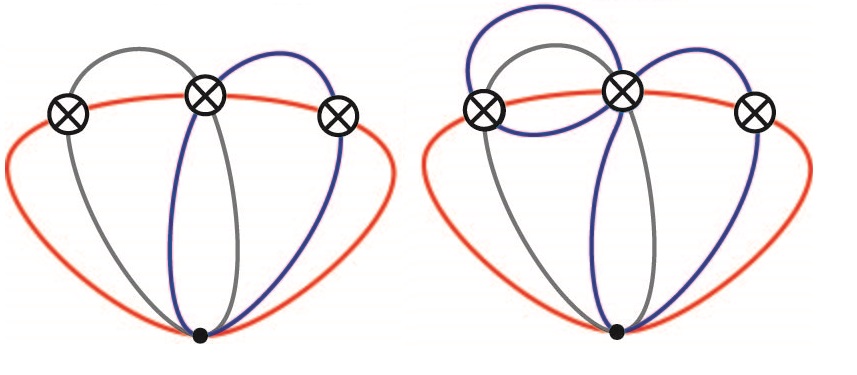}
     \caption{Different localization for the same embedding scheme gives different cross-cap drawings}
    \label{equivalent}
\end{figure}

  The following lemma helps us recognize an orienting loop in a one vertex scheme.

 \begin{lemma}\label{orienting loops}

 A loop $o$ in a cellularly embedded one-vertex graph $G$ with a non-orientable embedding scheme is orienting if and only if its ends enclose the ends of any two-sided loop and alternate with the ends of any one-sided loop in the scheme.
\end{lemma}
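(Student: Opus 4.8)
The plan is to analyze how cutting along a loop $o$ in a one-vertex scheme interacts with the orientability of the resulting surface, and to translate the topological condition "cutting along $o$ yields a connected orientable surface" into the purely combinatorial condition on enclosing/alternating ends. First I would set up notation: let $G$ be the one-vertex cellularly embedded graph with non-orientable scheme $(\rho,\lambda)$, let $o$ be the loop in question, and recall that $o$ divides the half-edges around the unique vertex into two wedges. Since $G$ is cellularly embedded with one vertex, cutting along $o$ produces a surface $S'$ (connected, since $o$ passes through the vertex and the complement of a neighborhood of $o$ is built from the one face or few faces of $G$ — more carefully, I would argue connectivity directly from the one-vertex structure) carrying the cut-open graph. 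The key observation is that a loop $g\neq o$ survives the cut as an arc or a loop depending on whether its ends alternate with or are enclosed by $o$, and that its sidedness in $S'$ is governed by both its original signature $\lambda(g)$ and whether cutting along $o$ "flips" one of its sides.

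The main technical step is to make precise the effect of cutting along $o$ on signatures. I would use the cross-cap drawing / PD1S picture: cutting along a one-sided curve $o$ is like passing through one cross-cap, and a loop $g$ whose ends alternate with those of $o$ crosses the corresponding cross-cap an odd number of times relative to $g$ with enclosed ends. Concretely, after cutting along $o$, a loop $g$ becomes two-sided in $S'$ if and only if its signature in the cut surface is $+1$; and one computes that this cut-signature equals $\lambda(g)$ when the ends of $g$ are enclosed by $o$, and equals $-\lambda(g)$ when they alternate. Therefore $S'$ is orientable (all its cycles two-sided) if and only if: every loop $g$ with $\lambda(g)=+1$ (two-sided in $G$) has its ends enclosed by $o$, and every loop $g$ with $\lambda(g)=-1$ (one-sided in $G$) has its ends alternate with those of $o$ — which, modulo also handling cycles that are concatenations of several loops and the case $o$ itself is two-sided, is exactly the claimed condition. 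For the multi-loop cycles I would note that the sidedness of a cycle is the product of contributions of its constituent half-edge crossings with $o$, so the per-loop condition already forces all cycles in $S'$ to be two-sided.

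The forward direction (orienting $\Rightarrow$ the combinatorial condition) follows by contraposition from the same signature computation: if some two-sided loop has alternating ends with $o$, it becomes one-sided after cutting, so $S'$ is non-orientable; symmetrically for a one-sided loop with enclosed ends. I also need to rule out that $o$ could be two-sided and orienting while failing the condition, but Lemma~\ref{matouvsek2} pins down the parity behavior and I can cross-check against it. I expect the main obstacle to be the bookkeeping in the signature computation under cutting — precisely tracking how an edge's signature changes when it is cut by $o$, especially keeping the conventions for one-sided versus two-sided $o$ consistent, and making sure the argument that $S'$ stays connected is airtight given that we only assume $G$ is cellularly embedded (not that every face is a disk after the cut). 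Once that lemma about cut-signatures is in hand, the equivalence is essentially a restatement.
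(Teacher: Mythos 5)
Your overall strategy (cut along $o$ and analyze the resulting scheme) is the same as the paper's, and your reverse direction runs parallel to the paper's. However, there is a concrete issue with the central ``cut-signature'' computation, and the forward direction has a gap that needs filling.

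On the reverse direction: the paper, rather than assigning a modified signature to each loop, observes that after cutting along $o$ the vertex splits into $v_1$ and $v_2$, and every surviving edge \emph{keeps its original signature}. Loops alternating with $o$ become $v_1$--$v_2$ arcs (signature $-1$ by hypothesis), enclosed loops stay loops at a single vertex (signature $+1$ by hypothesis), and any cycle in the two-vertex scheme crosses between $v_1$ and $v_2$ an even number of times, so has signature $+1$; hence the cut scheme, and so $S'$, is orientable. Your claim that the ``cut-signature'' of an alternating loop $g$ equals $-\lambda(g)$ is in tension with this: in the cut graph $g$ is an arc, not a loop, and as an edge its signature does not change. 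If you intend the formula to mean the sidedness of the arc closed up along the boundary (per the paper's convention), that reading works when $o$ is one-sided, but by Lemma~\ref{matouvsek2} an orienting loop can also be two-sided (even genus), and then the arc has its two endpoints on \emph{distinct} boundary components and cannot be closed up along the boundary. The paper's cycle-parity formulation sidesteps this case distinction, and your proposal should too.

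On the forward direction: here your route genuinely diverges from the paper's, which places $o$ in a canonical cross-cap drawing passing once through each cross-cap (using uniqueness of orienting curves up to homeomorphism) and reads the enclosing/alternating property off the drawing. Your contraposition is salvageable but not as stated: a two-sided loop $g$ alternating with $o$ becomes a \emph{two-sided arc} after cutting, which by itself does not produce a one-sided cycle. You need to pair $g$ with a one-sided loop $h$ that also alternates with $o$; such an $h$ exists because $G$ is non-orientable (so some one-sided loop exists), and you have already dispatched the case where a one-sided loop is enclosed. Then $g\cdot h^{-1}$ is a one-sided cycle in $S'$. Supplying this pairing step would make your alternative argument complete.
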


\begin{proof}
For the direct implication, let $o$ be an orienting loop and consider a cross-cap drawing of $G$ with a minimal number of cross-caps. By the classification of surfaces and Lemma~\ref{matouvsek2}, there is, up to homeomorphism, a single orienting loop, and therefore we can assume that in this cross-cap drawing the loop $o$ goes exactly once through each cross-cap. Therefore $o$ separates the plane into two regions, and since two-sided loops cross an even number of cross-caps, they start and end in the same wedge formed by $o$. Likewise, since one-sided loops cross an odd number of cross-caps, they start and end in a different wedge, and thus alternate with $o$.

For the reverse implication, we investigate the surface and the graph obtained after cutting along $o$. If $o$ is one-sided, the resulting surface has one boundary, and the vertex is split into two vertices, one on each boundary. If $o$ is two-sided the resulting surface has two boundaries, and the vertex is split into two vertices on both boundaries. In both cases, loops which were alternating with $o$ are now arcs connecting two distinct vertices, while loops which were not alternating connect the same vertex. After the cutting, edges keep their signature, and thus we see that edges connecting the two vertices have signature $-1$ and therefore any cycle in the resulting graph has signature $1$. Thus the resulting embedding scheme is orientable, and since the graph embedding was cellular before cutting, it still is after cutting, and the resulting surface is orientable. Therefore $o$ is orienting.
\end{proof}
 
 The following lemma helps us identify separating and orienting loops in cross-cap drawings.
 
  \begin{lemma}\label{homology}
 In any cross-cap drawing of a scheme, a separating loop passes through each cross-cap an even number of times; and an orienting loop passes through each cross-cap an odd number of times. 
 \end{lemma}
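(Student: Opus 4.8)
The plan is to reduce both claims to statements about how the loops of the scheme meet the cores of the cross-caps. In a cross-cap drawing of a scheme on a non-orientable surface $N$ of genus $g$, the $g$ cross-caps arise from a planarizing system of disjoint one-sided simple closed curves $c_1,\dots,c_g$ (a PD1S): one cuts $N$ along the $c_i$ to obtain the planar surface carrying the drawing, and the $i$-th cross-cap is the M\"obius band reglued along $c_i$. The elementary observation I would start from is that the number of times a loop $\ell$ passes through the $i$-th cross-cap equals the number of intersection points of $\ell$ with the core $c_i$, since each strand crossing the cross-cap meets $c_i$ exactly once; in particular the parity of this number is the mod-$2$ intersection number $\ell\cdot c_i$.

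For the separating case I would argue directly. If $\ell$ is separating, cutting $N$ along $\ell$ produces two surfaces $N_1$ and $N_2$. Fix $i$ and realize $c_i$ transverse to $\ell$ (which is automatic in the drawing; note $\ell=c_i$ cannot occur, since a one-sided curve is never separating). Because $\ell$ separates, its two local sides at any point lie in different pieces; cutting $c_i$ along $\ell$ breaks it into subarcs, each entirely inside $N_1$ or entirely inside $N_2$, and consecutive subarcs alternate between the two. Since $c_i$ is a closed curve, the number of subarcs, hence the number of crossings of $c_i$ with $\ell$, is even. Therefore $\ell$ passes through each cross-cap an even number of times. (Equivalently: a separating curve is null-homologous over $\mathbb{Z}/2$, so $\ell\cdot c_i=0$ for every $i$.)

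For the orienting case, let $o$ be an orienting loop and fix an orientation of the connected orientable surface $\widehat N$ obtained by cutting $N$ along $o$. The key sub-step is that the regluing which recovers $N$ from $\widehat N$ reverses orientation at every point of $o$: whether it preserves or reverses orientation is a continuous, hence locally constant, function along $o$ — the interface being a single boundary circle of $\widehat N$ when $o$ is one-sided, or a pair of boundary circles identified by one homeomorphism when $o$ is two-sided, so connectedness forces this function to be constant — and if it preserved orientation everywhere, the orientation of $\widehat N$ would descend to all of $N$, contradicting non-orientability. Granting this, take any core $c_i$, realized transverse to $o$ (if $o$ and $c_i$ are drawn as the same curve then $g=1$ and $o$ trivially passes once through the unique cross-cap): transporting a local orientation of $N$ once around $c_i$, it agrees with the fixed orientation of $\widehat N$ away from $o$ and gets reversed exactly when $c_i$ crosses $o$; since $c_i$ is one-sided, the transported orientation must come back reversed, so $c_i$ crosses $o$ an odd number of times. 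Hence $o$ passes through every cross-cap an odd number of times.

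The step I expect to be the main obstacle is this last sub-claim about the regluing reversing orientation all along $o$; the one-sided case in particular requires being a little careful that, even though the single boundary circle of $\widehat N$ double-covers $o$, ``preserves versus reverses orientation at $p$'' is still a well-defined locally constant function of $p\in o$. Once that is in place, the rest reduces to bookkeeping with the PD1S and with mod-$2$ intersection numbers.
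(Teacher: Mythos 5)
Your proof is correct, and for the orienting half it takes a genuinely different route from the paper. Both proofs handle the separating half with essentially the same parity observation (you phrase it as separating curves being $\mathbb{Z}/2$-null-homologous, while the paper reaches a contradiction via an auxiliary curve circling a cross-cap, but the content is the same once one identifies cross-cap passages with transverse intersections with the PD1S cores — a reduction you usefully make explicit at the outset). For the orienting half the paper invokes the classification of surfaces: by uniqueness of the orienting curve up to homeomorphism it passes to a cross-cap drawing in which that curve traverses each cross-cap exactly once and separates the plane, then argues that the image of each core (being one-sided, hence crossing cross-caps an odd number of times in total) must cross the orienting curve an odd number of times. Your argument avoids classification entirely: you cut $N$ along $o$, fix an orientation of the resulting connected orientable surface, show by a locally-constant-plus-connectedness argument that the regluing reverses orientation at every point of $o$, and transport a local orientation once around $c_i$ — the transported orientation flips exactly at crossings with $o$, and since $c_i$ is one-sided the net flip forces an odd number of such crossings. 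This is more elementary and arguably more transparent. The step you flag as delicate — that orientation-reversal holds at every point of $o$, in particular in the one-sided case where the boundary of the cut surface double-covers $o$ — is handled correctly: whether the two half-disk orientations at the two preimages of $p\in o$ fit together is symmetric in the preimages, hence a well-defined and locally constant function on the connected curve $o$, and it cannot be identically ``preserve'' since $N$ is non-orientable.
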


 \begin{proof}
 Actually this statement does not depend on the whole scheme, only on the loops. It is enough to show that: \begin{itemize}
     \item A cross-cap drawing of a separating closed curve passes through each cross-cap an even number of times.
     \item A cross-cap drawing of an orienting closed curve passes through each cross-cap an odd number of times. 
 \end{itemize}
 
 Observe that if $\gamma$ is one-sided it cannot be separating, moreover $\gamma$ can separate the surface into at most two connected components, one for each side of $\gamma$ and if $\gamma$ is separating every time we cross $\gamma$ we should change connected component.
 Now consider a cross-cap drawing of $\gamma$, and assume to reach a contradiction that some cross cap is crossed an odd number of times by $\gamma$. Choose any wedge around that cross cap and notice that if we go around the cross-cap with a curve $\gamma'$, to reach the opposite wedge defined by $\gamma$, then $\gamma'$ and $\gamma$ intersect an odd number of times, so they should be in different connected components of the complement of $\gamma$. This is a contradiction since opposite wedges are clearly on the same connected component.\\

 Let $\gamma$ be an orienting curve, and denote by $\alpha_1, \ldots, \alpha_g$ the PD1S underlying the cross-cap drawing. By the classification of surfaces, $\gamma$ is unique up to homeomorphism, i.e., there exists (another) cross-cap drawing where $\gamma$ goes exactly once through every cross-cap. The image of a curve $\alpha_i$ under this homeomorphism is a simple closed curve on the same surface, which in this new cross-cap drawing intersects $\gamma$ and the new cross-caps. Furthermore, it intersects the new cross-caps an odd number of times since it is one-sided. Now it is immediate that in this new representation, $\gamma$ partitions the plane into two regions, and any curve crossing the cross-caps an odd number of times must also cross $\gamma$ an odd number of times. This proves the needed property.
 \end{proof}

\begin{lemma}\label{odd non-orientable genus}
Let $G$ be an orientable scheme corresponding to a cellular embedding on a surface $M$ with a minimum number of $g>0$ handles. The minimum number of cross-caps needed in a cross-cap drawing realizing $G$ is $2g+1$ and this can always be achieved. 
\end{lemma}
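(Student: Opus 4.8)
The plan is to prove the two halves separately: that $2g+1$ cross-caps always suffice to realize $G$, and that no cross-cap drawing of $G$ can use fewer.

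For the upper bound, I would begin from the given cellular embedding of $G$ on the orientable surface $M$ of genus $g$, pick any face (a disk), delete a small open subdisk from its interior, and glue a M\"{o}bius band along the resulting boundary circle. By Dyck's surgery identity (a handle together with a cross-cap is homeomorphic to three cross-caps), attaching one cross-cap to a sphere with $g$ handles yields a sphere with $2g+1$ cross-caps, so $G$ becomes embedded --- non-cellularly, with exactly one non-disk face, a M\"{o}bius band --- on the non-orientable surface $N$ of genus $2g+1$; this also accounts for the earlier remark that the embeddings we consider have at most one non-cellular face, of non-orientable genus one. Since $N$ carries a planarizing system of $2g+1$ disjoint one-sided curves (a \emph{PD1S}), I would isotope such a system to be transverse to the edges of $G$ and disjoint from its vertices, and cut $N$ along it; by definition of the cross-cap model this produces a cross-cap drawing with $2g+1$ cross-caps. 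Cutting along curves transverse to $G$ and missing its vertices does not change the rotation at any vertex, and since $N$ together with its embedded graph is recovered by gluing the M\"{o}bius bands back, this drawing realizes the scheme $G$.

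For the lower bound, I would take an arbitrary cross-cap drawing realizing $G$ with $k$ cross-caps and view it as an embedding of $G$ on the surface $S$ obtained by gluing a M\"{o}bius band at each cross-cap, i.e.\ on the sphere with $k$ cross-caps. If $k=0$ then $S$ is the sphere and $G$ would admit a cellular embedding of genus $0$, contradicting $g>0$; hence $k\ge 1$ and $S$ is non-orientable. Since the scheme $G$ is orientable while $S$ is not, this embedding cannot be cellular, as a cellular embedding with an orientable scheme orients its underlying surface; thus at least one face of $G$ on $S$ fails to be a disk, hence has Euler characteristic at most $0$. A scheme determines its faces combinatorially, so the faces of the embedding on $S$ are, as surfaces whose boundary walks are prescribed by the scheme, in bijection with the faces $\phi_1,\dots,\phi_f$ of the cellular embedding of $G$ on $M$, except that on $S$ they need not be disks; writing $\phi_1',\dots,\phi_f'$ for the faces on $S$ (so $\chi(\phi_j')\le 1$ for all $j$ and $\chi(\phi_j')\le 0$ for at least one $j$) and applying the identity $\chi(\Sigma)=v-e+\sum_j\chi(F_j)$, valid for the faces $F_j$ of any graph embedded on any surface $\Sigma$ (it follows from thickening $G$ to a ribbon graph and gluing the faces back along boundary circles, each of Euler characteristic $0$), once on $M$ and once on $S$, I obtain
\[
2-2g \;=\; v-e+f, \qquad 2-k \;=\; v-e+\sum_{j=1}^{f}\chi(\phi_j').
\]
Subtracting the first identity from the second gives $k = 2g + \sum_{j=1}^{f}\bigl(1-\chi(\phi_j')\bigr)$, in which every summand is nonnegative and at least one is positive; hence $k\ge 2g+1$. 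Combining with the upper bound, the minimum number of cross-caps is exactly $2g+1$.

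The step I expect to be most delicate is the bookkeeping in the lower bound: I must make sure that ``realizing the scheme'' really forces the faces of the cross-cap drawing, seen as subsurfaces of $S$, to be exactly the combinatorial faces of $G$ with the same boundary walks; that the Euler-characteristic identity above is applied correctly to the non-cellular embedding on $S$; and that an orientable scheme genuinely cannot give a cellular embedding of a non-orientable surface. The upper bound is essentially one picture, but it rests on the (standard) existence of a PD1S on $N$ and on Dyck's identity that a handle and a cross-cap together amount to three cross-caps.
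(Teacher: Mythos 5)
Your proof is correct, and the lower bound takes a genuinely different route from the paper. The paper establishes $k\ge 2g$ from the Euler genus and then cites Lemma~6 of Schaefer--\v{S}tefankovi\v{c} (any cross-cap drawing of an orientable scheme of positive genus uses an odd number of cross-caps) to bump this to $2g+1$. You instead give a self-contained Euler-characteristic argument: since an orientable scheme cellularly embedded forces an orientable surface, the embedding on the sphere with $k\ge1$ cross-caps cannot be cellular, so some face has $\chi\le 0$, and the identity $\chi(\Sigma)=v-e+\sum_j\chi(F_j)$ then yields $k\ge 2g+1$ directly without any appeal to parity. This is more elementary and avoids an external citation, at the modest cost of needing to justify the non-cellular Euler-characteristic formula and the correspondence between the planar faces of the cross-cap drawing and the boundary walks of the scheme (your phrase ``in bijection with the faces on $M$'' is slightly imprecise, since a priori a non-cellular face could absorb more than one boundary walk, but the inequality $\sum_j\chi(\phi_j')\le f-1$ holds in that case too because the number of faces then drops below $f$, so nothing breaks). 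Your upper bound is essentially the paper's: add a single cross-cap in a face of the orientable embedding and appeal to Dyck's relation; the paper states this in one sentence and does not spell out the passage via a PD1S to an explicit cross-cap drawing, which you do.
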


\begin{proof}
The scheme has Euler genus $2g$, hence at least $2g$ cross-caps are required. Now Lemma~6 in \cite{JGAA-580} states that any cross-cap drawing of an orientable embedding scheme of genus $g\neq 0$ requires an odd number of cross-caps, so at least $2g+1$ are required.
To see that this always suffices we begin with the embedding on the orientable surface and add a cross-cap in one of the faces of $G$ on $M$; this gives us an embedding of $G$ on a non-orientable surface with one cross-cap and $g$ handles that is homeomorphic to the non-orientable surface $N$ with $2g+1$ cross-caps.\end{proof}

\subparagraph*{Canonical System of Loops.} For an orientable surface $M$ of genus $g$, we define the \emph{orientable canonical system of loops} to be a family of two-sided loops with the cyclic ordering $a_{1}b_{1}a_{2}b_{2}\dots a_{g}b_{g}a_{g}b_{g}$ around the base point, such that cutting $M$ along this family yields a topological disk. For a non-orientable surface of genus $g$, the \emph{non-orientable canonical system of loops} is a family of one-sided loops with the cyclic ordering $a_{1}a_{1}a_{2}a_{2}\dots a_{g}a_{g}$ around the base point such that cutting $M$ along this family yields a topological disk. The following lemma underpins our strategy to prove Theorem~\ref{canonical}: in order to find a non-orientable canonical system of loops, first find a cross-cap drawing and connect the cross-caps to a root using short paths (see Figure~\ref{F:crosscaps}. 

\begin{lemma}\label{right combinatorics}
Let $H$ be a cross-cap drawing for a graph of non-orientable genus $g$ and let $b$ be a point in one face of the drawing. Let $\{p_i\}$ be a family of paths in the dual graph to this drawing from each cross-cap to $b$. Introduce a loop $c_i$ by starting from $b$, passing along the path $p_i$, entering the corresponding cross-cap, going around the cross-cap and passing along $p_i$ to return to $b$. The system of loops $\{c_i\}$ is a non-orientable canonical system of loops. 
\end{lemma}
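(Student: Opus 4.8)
The plan is to verify the two defining properties of a non-orientable canonical system of loops for $\{c_i\}$: first, that the loops have the correct combinatorial type, i.e.\ each $c_i$ is one-sided and the cyclic order of ends around $b$ is $a_1a_1a_2a_2\dots a_ga_g$; and second, that cutting along all of them yields a disk. For the first property, the key observation is that the paths $p_i$ live in the dual graph of the cross-cap drawing and are therefore pairwise ``non-crossing'' in the appropriate combinatorial sense: each $p_i$ is a walk through faces of the planar drawing, and we may fatten each $p_i$ into a thin band (a neighborhood homeomorphic to a disk) so that the bands for different indices are disjoint except near $b$. Then $c_i$ is the boundary-parallel curve that runs out along one side of the band, loops once around the $i$-th cross-cap, and comes back along the other side of the band; since it passes through exactly one cross-cap it is one-sided (by the sidedness criterion recorded just before the statement of Lemma~\ref{homology}, namely that sidedness is the parity of the number of cross-caps crossed). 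Because the two ``sides'' of the band arrive at $b$ consecutively in the rotation at $b$, the two ends of $c_i$ are adjacent in the cyclic order; and because distinct bands are disjoint, these consecutive pairs do not interleave. Hence the rotation at $b$ is exactly $a_1a_1\dots a_ga_g$ up to relabeling, which is the required cyclic word.

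For the second property, I would argue that cutting the surface along $\bigcup_i c_i$ produces a connected surface with one boundary component which is a disk, by computing its Euler characteristic and checking connectivity and planarity. Concretely, take the surface $S$ of non-orientable genus $g$ represented by $H$. Cutting along $c_i$ is, up to homeomorphism, the same as detaching the M\"obius band glued at the $i$-th cross-cap: a loop that goes out along a band, around a single cross-cap, and back, bounds on one side a M\"obius-band neighborhood of that cross-cap together with the band, so cutting along it removes exactly that cross-cap. Doing this for all $i$ removes all $g$ cross-caps, and since the $p_i$ together with $b$ and the cross-caps can be arranged inside a single face of $H$ (or, more carefully, since the bands are disjoint), the result is connected; it is planar because the cross-cap drawing lives in the plane and we have removed precisely the non-planar features. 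Thus cutting along $\{c_i\}$ yields a connected planar surface, and an Euler characteristic count ($\chi$ increases from $2-g$ to $2$ after removing $g$ cross-caps, with the single basepoint and $g$ loops) shows it has exactly one boundary circle, hence is a disk.

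The main obstacle I anticipate is the first property — specifically, making precise the claim that the paths $p_i$ can be simultaneously realized by pairwise disjoint bands and that the resulting loops $c_i$ have pairwise non-interleaving ends at $b$. A priori, two dual paths $p_i$ and $p_j$ may traverse a common face of $H$ many times and in complicated patterns, so one needs to choose the bands (equivalently, perturb the loops within their homotopy classes rel $b$) consistently; I would handle this by an explicit local model, processing the faces one at a time and routing the strands of the various $c_i$ through each face as a planar family of arcs, which is possible precisely because each face is a disk. A secondary, more bookkeeping-level issue is that the statement as given does not require the $p_i$ to be vertex-disjoint or even edge-disjoint, so the ``disjoint bands'' must be understood as disjoint after an arbitrarily small perturbation; once that is set up, both the rotation computation and the Euler characteristic computation are routine.
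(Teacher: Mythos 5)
Your proposal takes a genuinely different route from the paper. The paper argues that $c_i$ is freely homotopic to a conjugate of the $i$-th curve in the planarizing system of disjoint one-sided curves, hence one-sided, and then observes that cutting along all $c_i$ is topologically equivalent to first cutting along the planarizing system (producing a sphere with $g$ holes) and then cutting along the paths $p_i$, which merge all the boundary circles into one and yield a disk. You instead invoke the cross-cap parity criterion for sidedness, a careful ``fattened band'' argument for the cyclic order of ends at $b$, and an Euler characteristic count for the disk. Your treatment of the rotation at $b$ --- routing the strands face by face as planar families of arcs --- is actually more detailed than the paper's ``it is easy to check,'' and correctly isolates the point the paper leaves implicit.

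There are, however, two concrete problems in your second step. First, the assertion that a loop $c_i$ ``bounds on one side a M\"obius-band neighborhood of that cross-cap together with the band'' cannot be right as stated: $c_i$ passes through exactly one cross-cap, hence is one-sided, and a one-sided simple closed curve does not bound a subsurface (a bounding curve separates and is therefore two-sided). You appear to be conflating $c_i$ with the boundary circle of a regular neighborhood of (cross-cap core) $\cup$ (band), which is a different, two-sided curve. The intuition you want to express is correct --- cutting along $c_i$ is homeomorphic to removing the open M\"obius-band neighborhood, which ``undoes'' that cross-cap --- but the justification must say exactly that, not that $c_i$ bounds. Second, the Euler characteristic bookkeeping is off: with one vertex $b$, $g$ loops, and $\chi(S)=2-g$, Euler's formula gives $f=1$, and if the embedding is cellular the cut surface is a single disk with $\chi=1$, not $\chi=2$; the ``increases to $2$'' figure would be the Euler characteristic of the closed surface obtained by gluing a disk back onto the boundary, which is not what you are computing. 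Neither issue is fatal --- once you replace ``bounds'' by ``cutting is homeomorphic to deleting an open M\"obius band'' and redo the count, the argument goes through, and is essentially parallel in content to the paper's, which simply organizes the same cutting as (planarizing system) then (paths).
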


\begin{proof}
It is easy to check that each $c_i$ has consecutive ends around $b$. Each curve $c_i$ is homotopic to a concatenation of a curve in a planarizing system of disjoint one-sided curves and two copies of a path $p_i$, therefore it is one-sided. Cutting along these system of curves corresponds to cutting along a planarizing disjoint one-sided loops which gives us a sphere with $g$ boundary components and then cutting along the paths $\{p_i\}$ which connect these boundary components and cut them to a single boundary. Therefore cutting along $\{c_i\}$, cuts the surface into a disk, and we obtain a non-orientable canonical system of loops. This is illustrated in Figure~\ref{F:crosscaps}.
\end{proof}

\subparagraph*{Short Orienting Curves.} Throughout this paper, to deal with non-orientable surfaces, we turn them to an orientable surface by cutting along an orienting curve. For this approach to work in our problems, we need to ensure that we can find an orienting curve that crosses our primal graph not too many times. The following lemma is a restatement of the Proposition 5.5 in~\cite{matouvsek2013untangling}. We provide a sketch of proof, explaining how to extend it to arbitrary embedded graphs and how to modify it to get orienting arcs in the presence of boundaries.

 \begin{lemma}\label{matouvsek}
Let $N$ be a non-orientable surface without boundary and with genus $g$ and $G$ be a graph embedded on $N$. Then there exists an orienting curve of multiplicity at most $2$.
\end{lemma}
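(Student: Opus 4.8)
The statement I want to prove is Lemma~\ref{matouvsek}: given a graph $G$ embedded on a non-orientable surface $N$ of genus $g$ without boundary, there is an orienting curve crossing each edge of $G$ at most twice. The plan is to follow the structure of Proposition~5.5 of~\cite{matouvsek2013untangling} while making explicit how it extends from triangulations (or specific graphs) to an arbitrary embedded graph. First I would reduce to working in the cross-metric (dual) picture: an orienting curve of multiplicity at most $2$ is a curve crossing each edge of $G$ at most twice, which dualizes to a curve staying within the ``fattened'' dual graph $G^*$ and entering each dual face at most twice. Alternatively, and more in the spirit of~\cite{matouvsek2013untangling}, I would first pass to a triangulation: refine $G$ to a triangulation $T$ where each original edge of $G$ is subdivided into at most some bounded number of edges of $T$, or rather argue directly on $G$ by thickening. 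The cleanest route is: take the embedded graph $G$, and consider any curve $\gamma$ that is a cycle in the dual graph $G^*$; such a curve crosses each edge of $G$ at most once per time it traverses a dual vertex, and if $\gamma$ is a simple cycle in $G^*$ it crosses each primal edge at most twice (once from each side of the dual edge it uses — actually at most once, but edge repetitions are what force the factor $2$).

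The heart of the argument is the existence of an orienting \emph{cycle in the dual graph}. Here is how I would get it. The surface $N$ is non-orientable, so $H_1(N;\mathbb{Z}/2)$ has a distinguished element: the first Stiefel--Whitney class, or concretely, the set of orientation-reversing (one-sided) loops forms a nonzero coset. I would consider the dual graph $G^*$, which is cellularly embedded, and look at its cycle space over $\mathbb{Z}/2$. A cycle $c$ in $G^*$, viewed as a curve on $N$, is orienting precisely when $N$ cut along $c$ is orientable and connected; being orientable is a $\mathbb{Z}/2$-homological condition (the mod-2 intersection pairing of $c$ with every loop must detect one-sidedness correctly — equivalently $c$ represents the ``orientation class''), while connectedness can be arranged by taking $c$ to avoid being separating, or by a standard surgery argument if it is. So I want a cycle in $G^*$ whose $\mathbb{Z}/2$-homology class is the orientation class. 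Since $G^*$ is cellularly embedded, its cycle space surjects onto $H_1(N;\mathbb{Z}/2)$, so such a cycle exists; I then pick one that is \emph{simple} (embedded). An embedded cycle in $G^*$ crosses each primal edge of $G$ at most twice — in fact each edge of $G$ is dual to exactly one edge of $G^*$, and a simple cycle uses each edge of $G^*$ at most once, giving multiplicity at most $1$; the factor $2$ is the safety margin needed after the connectedness surgery and after handling the case where the homology-correct simple cycle happens to be two-sided and separating (then one reroutes it through a one-sided handle, doubling the crossing number along a bounded portion — but globally still each edge at most twice). I would invoke Lemma~\ref{matouvsek2} to confirm that the resulting cut surface is indeed connected and orientable with the predicted genus, checking the parity cases $g$ odd (one-sided orienting curve) and $g$ even (two-sided).

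For the two promised extensions: (i) \emph{arbitrary embedded graphs} rather than triangulations — the argument above already works for any cellularly embedded $G$ since all I used was that the dual $G^*$ is cellularly embedded and that dual cycles correspond to transverse curves; if $G$ is not cellularly embedded I first add edges inside non-disk faces to make it cellular, which only adds crossings with the new edges, not with the original edges of $G$, so the multiplicity bound on the original edges is preserved. (ii) \emph{orienting arcs in the presence of boundary} — if $N$ has $h$ boundary components, I would cap off each boundary with a disk, run the closed-surface argument to get an orienting cycle, then cut the disks back out; if the orienting cycle passes through a capping disk I reroute it to an arc with both endpoints on the corresponding boundary component, again at a bounded local cost, keeping multiplicity $\le 2$ on edges of $G$.

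\textbf{Main obstacle.} The delicate point is \emph{connectedness} of the cut surface, not orientability: homology over $\mathbb{Z}/2$ pins down the class that makes the complement orientable, but a simple cycle in that class could be separating, in which case cutting disconnects $N$ and one of the pieces might still be non-orientable, so the curve is not orienting. The fix is a surgery/rerouting step — combine the candidate cycle with a one-sided simple cycle in one component, or slide it across a cross-cap — and the real work is verifying that this rerouting can be done while keeping each primal edge crossed at most twice; this is exactly the bookkeeping that~\cite{matouvsek2013untangling} carries out and that I would reproduce carefully, using Lemma~\ref{matouvsek2} as the invariant-tracking tool at each step.
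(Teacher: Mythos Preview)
Your route via the dual graph and the $\mathbb{Z}/2$-homology class Poincar\'e-dual to $w_1$ is different from the paper's. The paper works primally: it assigns a local orientation to each face of $G$, collects the \emph{incoherent} edges (those where the two adjacent face orientations disagree), observes that these have even degree at every vertex of $G$, and then shifts and pairs the incident half-edges at each vertex to produce a transverse multicurve with multiplicity at most~$2$. The reduction to a single curve is done by locally reassigning face orientations along short connecting paths, which rewires the incoherent-edge set without increasing crossings. Your homological picture is cleaner conceptually and, if it went through, would even give multiplicity~$1$ before any connecting step; the paper's construction is more hands-on but makes the connecting step essentially free.

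That said, there is a genuine gap, and your stated ``main obstacle'' is a misconception. You worry that a simple closed curve in the correct class might be separating. It cannot: since $N$ is non-orientable, $w_1\neq 0$, so the dual class is nonzero in $H_1(N;\mathbb{Z}/2)$, and a separating simple closed curve is always null-homologous over $\mathbb{Z}/2$. Hence once you have a \emph{single} simple closed curve in the right class, connectedness of the cut surface is automatic and no surgery is needed. The actual obstacle is the unjustified step ``pick one that is simple (embedded)'': the class is a priori only represented by an even subgraph of $G^*$, i.e., an edge-disjoint union of circuits, and a single circuit of $G^*$ need not represent it. Concretely, take $G^*$ to be the one-vertex scheme $aabb$ on the Klein bottle: both loops are one-sided, the intersection form is the identity, so the orientation class is $[a]+[b]$, and no simple cycle of the graph lies in that class. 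The passage from multi-curve to single curve is where the factor~$2$ really enters, and you have not supplied that argument; the bookkeeping you plan to borrow from~\cite{matouvsek2013untangling} is tied to the face-orientation construction, not to dual cycles, and does not transplant for free.
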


\begin{proof}[Sketch of proof]
We begin by adding edges to the embedded graph $G$ in order to get cellular faces. We assign a local orientation to each face, that is a cyclic order to the vertices of each face along its boundary. Two adjacent faces are said to have \emph{incoherent} orientations if they induce the same orientation on the edge they have in common. Cutting the surface along all these edges gives us an orientable surface. For any vertex $v$ there is an even number of edges with incoherent orientations adjacent to $v$. We pair these edges around each vertex, shift them slightly so that they cross the original graph only transversely and we add segments to join two paired edges. We can modify our local orientation slightly to show that cutting along our new system of edges gives us an orientable surface; see Figure \ref{m1}.
\begin{figure}[t]
    \centering
    \includegraphics[width=0.58\textwidth]{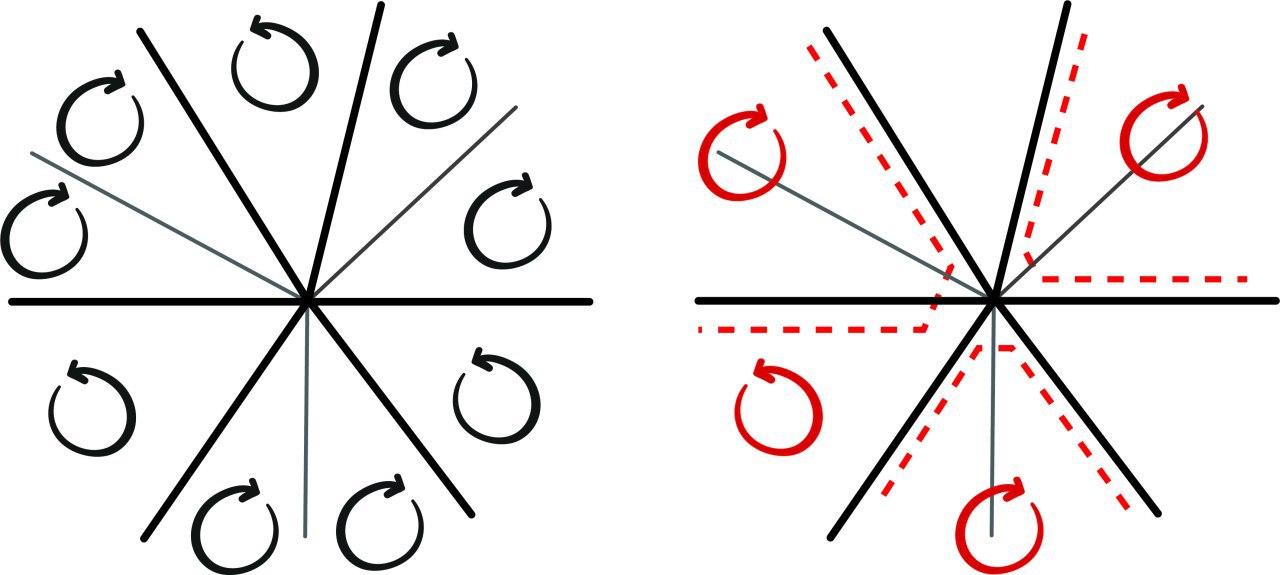}
    \caption{The thick lines in the left picture, depict the edges adjacent to a vertex that inherit the same orientation from different faces. The dashed red lines in the right picture are shortened and shifted copies of these edges joined according to the pairing. The arrows indicate the refinement of the local orientation.
    }
    \label{m1}
\end{figure}

The graph we obtain by these new edges is a disjoint union of closed curves. Each of these curves intersects each edge of the original graph at most once near the vertices, therefore the whole system of curves has crosses each edge of $G$ at most twice. If we have one curve, then this curve is the desired orienting curve. In the case we have more than one curve, we can find short paths (paths that do not intersect edges that are crossed already by the curves) that joins pairs of these curves at each step and slightly changing the orientations around these paths, we can obtain a concatenation of these curves and paths to only one curve. For the complete proof see \cite{matouvsek2013untangling}.
\end{proof}

\begin{remark}\label{remarkorienting}
   If $N$ is a non-orientable surface with a boundary component, by a little modification of the building process in the proof of the lemma, we can get an orienting arc instead of an orienting cycle. Note that if the surface has boundary, the local orientation around each boundary is similar to those of a vertex. Therefore, we can choose one boundary component and shift and join all the edges that inherit the same orientation, except one pair. This way we get an arc and a system of closed curves and we can proceed as explained in the proof to join these components and get one orienting arc.
\end{remark}

\section{Correcting the proof of Negami}\label{negami-correction}

Negami proved in~\cite{negami2001crossing} that if we have two graphs embeddable on a closed surface, we can reembed them simultaneously such that their edges cross few times.

\begin{theorem}\label{NEGAMI}
(\cite{negami2001crossing}, Theorem 1). Let $G_1$ and $G_2$ be two connected graphs embeddable on a closed surface of genus $g$, orientable or non-orientable. We can embed them simultaneously such that they intersect transversely in their edges at most $4g\beta(G_1)\beta(G_2)$ times, where $\beta(G)=|E(G)|-|V(G)|+1$ is the Betti number for a connected graph $G$.
\end{theorem}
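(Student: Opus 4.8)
The plan is to follow Negami's original strategy but to replace the flawed step with a correct argument based on orienting curves (in the non-orientable case) and on a careful ``book-keeping'' of how contracting a spanning tree, cutting, and re-embedding affect the number of crossings. First I would contract a spanning tree $T_2$ of $G_2$, so that $G_2$ becomes a one-vertex graph with $\beta(G_2) = |E(G_2)| - |V(G_2)| + 1$ loops. Contracting a tree does not change the surface and only reduces the number of edges of $G_2$ that a curve can cross, so an upper bound on the number of crossings of $h(G_1)$ with the edges of the contracted $G_2$ gives (after un-contracting) the same bound against $G_2$: each loop of the contracted graph is a concatenation of a bounded number of original edges of $G_2$, but since $T_2$ can be routed in a ``thin neighborhood'' and we only count transverse crossings on the non-tree edges, the crossing count is preserved. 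Symmetrically, I would like to reduce $G_1$ to a one-vertex graph, but here one must be slightly more careful: $h$ is a homeomorphism of $S$, and contracting a spanning tree of $G_1$ changes $G_1$ but not $S$; it suffices to prove the existence of a good simultaneous embedding for the contracted $G_1$ and then realize the original $G_1$ by expanding the contracted vertex back into a tree inside a small disk, which costs no extra crossings with $G_2$.

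The core of the argument is then: given two one-vertex graphs $G_1'$ (with $\beta(G_1)$ loops) and $G_2'$ (with $\beta(G_2)$ loops) embedded on $S$ of genus $g$, find a homeomorphism $h$ so that each loop of $h(G_1')$ meets each loop of $G_2'$ at most $4g$ times, whence the total is at most $4g\,\beta(G_1)\beta(G_2)$. The key observation is that a one-vertex cellular embedding on a genus-$g$ surface is carried by a canonical system of loops: in the orientable case one uses the orientable canonical system of loops $a_1b_1a_1^{-1}b_1^{-1}\cdots$, and in the non-orientable case one uses the non-orientable canonical system $a_1a_1\cdots a_g a_g$, which has exactly $g$ loops (or one adapts via an orienting curve and Lemma~\ref{matouvsek2}). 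Choose $h$ to be a homeomorphism sending $G_2'$ (after a further homeomorphism putting it in ``standard position'') to a standard canonical system $\mathcal{C}$ on $S$; equivalently, reparametrize so that $G_2'$ itself is the standard canonical system. Cutting $S$ along $\mathcal{C}$ yields a disk (a $4g$-gon in the orientable case, a $2g$-gon in the non-orientable case), with the $2\cdot(\text{number of loops})$ sides corresponding to the two copies of each loop of $\mathcal{C}$. Now take $G_1'$ with its own one-vertex embedding. Applying a further homeomorphism of $S$ fixing $\mathcal{C}$ (a homeomorphism that preserves the cut disk setwise), I would put each loop of $G_1'$ in a position where it crosses the cut graph $\mathcal{C}$ as few times as possible. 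A one-sided or two-sided simple loop in a surface of genus $g$ can always be homotoped — and the whole one-vertex graph $G_1'$ isotoped — so that it enters the fundamental polygon and crosses its boundary $O(g)$ times per loop of $\mathcal{C}$; indeed, inside the $4g$-gon (resp. $2g$-gon), the isotopy class of a loop through the base vertex is a word of length $O(g)$ in the generators, so it meets each side at most a constant-times-$g$ times. Reading off the constant from the standard models (the word for each basis loop has length at most $2g$, each letter contributes at most $2$ crossings with a given side) gives the bound $4g$ crossings between each loop of $h(G_1')$ and each loop of $G_2'$.

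The step I expect to be the main obstacle is making the ``isotope $G_1'$ into the fundamental polygon with $O(g)$ crossings per side'' step quantitatively sharp and, more delicately, handling it uniformly in the non-orientable case. In the non-orientable case the cut disk is a $2g$-gon whose boundary identification is $a_1a_1\cdots a_g a_g$, and one must check that a one-vertex embedding scheme on $N$ can be re-routed so that each of its loops, written as an element of $\pi_1(N)$ based at the vertex, is represented by a word of length at most $g$ in the $a_i$'s; here I would invoke the existence of a short orienting curve (Lemma~\ref{matouvsek}) to reduce to the orientable case, cutting $N$ along the orienting curve $\gamma$ (getting an orientable surface of genus $\approx g/2$ by Lemma~\ref{matouvsek2}), applying the orientable argument there, and then re-gluing, controlling the extra crossings with $\gamma$ — which are $O(g)$ — separately. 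A secondary subtlety is to verify that the homeomorphism $h$ we build is well-defined on all of $S$ (not just on the cut disk), i.e., that the isotopies of $G_1'$ chosen inside the polygon respect the boundary identifications; this is standard but must be stated carefully. Once these points are in place, tracking the constants gives the claimed $4g\beta(G_1)\beta(G_2)$ bound (and a fortiori Theorem~\ref{T:negamimain} follows as the weaker statement with $O(g)$ replaced by an explicit multiple of $g$).
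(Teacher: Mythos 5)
The core step of your argument has a genuine gap, in two respects. First, you replace $G_2'$ by a canonical system of loops $\mathcal{C}$: but an arbitrary one-vertex cellular scheme $G_2'$ has $\beta(G_2)$ loops and is not homeomorphic to a canonical system (which has exactly $2g$ or $g$ loops), and in any case bounding the crossings of $h(G_1')$ with $\mathcal{C}$ does not bound its crossings with $G_2'$, so the reduction is not justified. Second, and more seriously, the assertion that a single homeomorphism places every loop of $G_1'$ inside the fundamental polygon so that it crosses each side $O(g)$ times does not follow from ``the word for each basis loop has length at most $2g$'': a simple loop of a one-vertex scheme can represent an arbitrarily long reduced word in the canonical generators (already on the torus the $(1,n)$-curve meets a fixed canonical loop $n$ times), so the whole difficulty is to choose \emph{one} homeomorphism that simultaneously shortens \emph{all} the loops of $G_1'$. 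That is precisely the content of the short-canonical-system results --- the orientable case is the theorem of Lazarus et al., and the non-orientable case is Theorem~\ref{canonical}, the main result of this paper --- so it cannot be assumed as an obvious intermediate step in a proof of Negami's bound.

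For comparison, the paper does not reprove Negami's exact constant $4g$; it proves the corrected Theorem~\ref{T:negamimain} by a symmetric argument on arcs rather than an asymmetric one on loops: after contracting spanning trees (which you do as well, correctly), each surface is punctured at the single vertex so that each $G_i'$ becomes a system of $\beta_i$ disjoint essential arcs on a surface with one boundary component; the orientable case is handled by Negami's induction on genus (Lemma~\ref{negami theorem}), and the non-orientable case (Lemma~\ref{correction of negami}) first cuts along a short orienting arc provided by Lemma~\ref{matouvsek} to reduce to the orientable case, then realigns and reglues the two copies of the orienting arc while accounting for the $O(\beta_1\beta_2)$ extra crossings this creates. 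Your use of the orienting curve to reduce the non-orientable case to the orientable one is the right instinct and matches the paper, but the orientable engine you plug in afterwards is the unproved word-length claim rather than a genuine induction on genus, so the proposal does not close.
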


Although the statement in Theorem \ref{NEGAMI} is correct (up to a constant factor), his proof is incorrect for non-orientable surfaces. In this section, first we introduce a bad example illustrating the erroneous approach in Negami's proof in the non-orientable case and then we correct the proof by proving the following theorem.\footnote{This theorem is stated with respect to the numbers of edges $|E(G_i)|$ for simplicity, but the proof also provides a bound in terms of the Betti numbers $\beta(G_i)$, as in Theorem~\ref{NEGAMI}.}

\Neg*

 The proof that Negami proposed for Theorem \ref{NEGAMI} is reduced to showing the following lemmas (with slightly different constants). While the proof of Lemma~\ref{negami theorem} is fine, we argue that the proof technique behind Lemma~\ref{correction of negami} is flawed. By an \emph{essential} proper arc, we mean an arc with endpoints on a boundary component that does not cut off a disk from the surface.

\begin{lemma}\label{negami theorem}
For two orientable surfaces $M_i$ of genus $g\geq1$, with one boundary component and $\beta_i$ disjoint essential proper arcs ($i=1,2$) where $\beta_i\leq \beta(G_i)$, there are homeomorphisms $\phi_i: M_i\to M$, where $M$ is an orientable surface of genus $g$ and one boundary component, so that the image of the arcs in $M_1$ and $M_2$ on $M$ intersect at most $4(g-1)\beta_1\beta_2$ times.
\end{lemma}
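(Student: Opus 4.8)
\textbf{Proof plan for Lemma~\ref{negami theorem}.}
The plan is to reduce everything to a single ``normal form'' for a surface with one boundary component and a collection of disjoint essential arcs, and then to use the fact that any two homeomorphic surfaces-with-arcs can be brought to the same normal form, so that the crossings are counted once and for all in that normal form. Concretely, fix $M$ to be the orientable surface of genus $g$ with one boundary component, equipped with a standard cut system: a bouquet of $2g$ arcs $\alpha_1,\beta_1,\dots,\alpha_g,\beta_g$ based on the boundary (a ``canonical polygonal schema with boundary'') that cuts $M$ into a disk. The first step is to show that for $i=1,2$ one can choose the homeomorphism $\phi_i\colon M_i\to M$ so that each of the $\beta_i$ arcs of $M_i$ is mapped to a curve that, in the polygon obtained by cutting $M$ along the standard cut system, becomes a bounded-complexity collection of straight chords; the point is that a simple essential arc in a surface with one boundary component is determined up to homeomorphism by its class, and using the change-of-coordinates principle (or, equivalently, repeatedly applying the classification of surfaces relative to the arc) each arc can be normalized to cross each of the $2g$ cut-arcs $O(1)$ times.

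The second step is the counting. Once both families $\phi_1(\text{arcs of }M_1)$ and $\phi_2(\text{arcs of }M_2)$ are drawn as bounded-complexity chord systems inside the $(4g)$-gon that is the cut-open surface, the number of intersection points between an arc of the first family and an arc of the second is $O(g)$: each chord of the first family is cut by the $2g$ sides of the polygon into $O(g)$ sub-chords lying in a disk, each such sub-chord meets each sub-chord of the second family at most once, and summing over the $O(g)$ pieces gives $O(g)$ crossings per pair. Multiplying by the $\beta_1\beta_2$ pairs of arcs yields a bound of $O(g\,\beta_1\beta_2)$, and a careful bookkeeping of which cut-arcs are actually crossed (the arcs are essential and can be isotoped off one handle's worth of the cut system) sharpens the leading constant to the claimed $4(g-1)\beta_1\beta_2$. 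It will be convenient here to induct on $g$: peel off one handle of $M$, observe that the arcs can be rerouted to miss that handle's pair of cut-arcs, which accounts for the $(g-1)$ rather than $g$, and apply the inductive hypothesis on the genus-$(g-1)$ piece together with a base case on the once-punctured torus or disk that is handled by a direct picture.

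The main obstacle I expect is the first step: making precise and uniform the claim that \emph{every} system of $\beta_i$ disjoint essential arcs on a genus-$g$ one-holed surface can be simultaneously straightened to bounded complexity relative to a fixed cut system, with an honest control of constants. Disjointness is what makes this possible --- a disjoint family of essential arcs on a one-holed surface can be completed to (a sub-collection of) the dual of a pants-type or polygonal decomposition --- but one has to argue that the homeomorphism realizing this completion can be taken to send the cut system of $M_i$ to something comparable with the standard cut system of $M$, rather than merely asserting that ``two homeomorphic configurations are homeomorphic.'' This is exactly the kind of place where Negami's original argument is delicate, so the honest route is to invoke the classification of surfaces in its relative (bounded) form, or equivalently the change-of-coordinates principle for surfaces with boundary, and then track the at-most-constant number of crossings each normalized arc makes with each of the $2g$ standard cut-arcs; the rest is the elementary polygon count above.
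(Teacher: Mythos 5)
Your approach (normalize both arc systems against a fixed polygonal schema of $M$, then count chord crossings in the cut-open disk) is genuinely different from what the paper relies on. The paper does not reprove Lemma~\ref{negami theorem}; it explicitly defers to Negami's original argument, which is an induction on genus: cut both surfaces along a common essential arc $\alpha$ that is either one of the given arcs or disjoint from them, reduce the genus by one, apply the inductive hypothesis to the genus-$(g-1)$ pieces, and reglue along $\alpha$, paying $O(\beta_1\beta_2)$ new crossings per gluing step and hence $O(g\beta_1\beta_2)$ in total. That inductive structure is what naturally produces a single factor of $g$.

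Your plan has a genuine gap at each of its two main steps. (i) The normalization claim --- that there is a homeomorphism $\phi_i$ making each of the $\beta_i$ disjoint arcs cross each of the $2g$ reference cut-arcs $O(1)$ times --- is not a consequence of the change-of-coordinates principle. That principle produces a homeomorphism between two configurations of the same topological type; it gives no control on the crossings with a \emph{third}, fixed reference configuration. The naive route (complete the arcs to a cut system and apply cut-and-paste moves from the classification of surfaces to put it in canonical form) can blow up exponentially in $g$, which is exactly the obstacle the Lazarus--Pocchiola--Vegter--Verroust work cited in the paper was designed to overcome; so this step cannot be outsourced to a general principle, it is essentially the whole content of the lemma. (ii) Even granting (i), the count in the polygon is off by a factor of $g$: an arc that crosses each of the $2g$ cut-arcs $O(1)$ times is cut into $O(g)$ sub-chords, and two arcs that each become $O(g)$ sub-chords in a disk can cross up to $O(g)\cdot O(g)=O(g^2)$ times. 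Your sentence ``each such sub-chord meets each sub-chord of the second family at most once, and summing over the $O(g)$ pieces gives $O(g)$'' silently assumes each sub-chord of the first arc meets the \emph{whole} second arc at most once, which would require the second arc to remain a single chord. The inductive ``peel off a handle'' sketch at the end is in fact much closer to Negami's correct argument and would bypass both gaps, but as written it does not specify which arc to cut along nor how to bound the crossings introduced when regluing, so it cannot carry the proof on its own.
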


\begin{lemma}\label{correction of negami}
For two non-orientable surfaces $N_i$ of genus $g\geq1$  with one boundary component and $\beta_i$ disjoint essential proper arcs ($i=1,2$) there are homeomorphisms $\phi_i: N_i\to N$, where $N$ is a non-orientable surface of genus $g$ and one boundary component, so that the image of the arcs in $N_1$ and $N_2$ on $N$ intersect at most $18(g-1)\beta_1\beta_2$ times when $g$ is odd and $72(g-2)\beta_1\beta_2$ when it is even.
\end{lemma}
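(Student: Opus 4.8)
The plan is to reduce the non-orientable case to the orientable one by cutting along a short orienting curve, then applying Lemma~\ref{negami theorem}, and finally gluing back. Concretely, given a non-orientable surface $N_i$ of genus $g$ with one boundary component, endowed with $\beta_i$ disjoint essential proper arcs (these arise from the embedded graph $G_i$ after contracting a spanning tree and taking the complement of a maximal set of non-separating curves — but for the purposes of this lemma we simply take them as given), I would first invoke Lemma~\ref{matouvsek} together with Remark~\ref{remarkorienting} to find an orienting arc $\gamma_i$ on $N_i$ of multiplicity at most $2$ with respect to the arc system. Since the relevant discrete metric is the cross-metric one, $\gamma_i$ crosses each of the $\beta_i$ arcs at most twice, so at most $2\beta_i$ times in total. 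Cutting $N_i$ along $\gamma_i$ yields, by Lemma~\ref{matouvsek2}, an orientable surface $M_i'$: if $g$ is odd, $\gamma_i$ is one-sided, $M_i'$ has genus $\frac{g-1}{2}$ and $h+1=2$ boundary components; if $g$ is even, $\gamma_i$ is two-sided, $M_i'$ has genus $\frac{g-2}{2}$ and $3$ boundary components. After the cut, the arc system becomes a system of at most $\beta_i + 2\beta_i = 3\beta_i$ disjoint proper arcs on $M_i'$ (each original arc is subdivided by its $\le 2$ crossings with $\gamma_i$ into at most $3$ sub-arcs), and the two (or three) boundary components can be merged into one by cutting along two short arcs connecting them, increasing the arc count by a bounded amount and the genus bookkeeping in the standard way; alternatively one keeps the boundary components and uses a version of Lemma~\ref{negami theorem} for surfaces with several boundary components. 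I would then apply Lemma~\ref{negami theorem} to $M_1'$ and $M_2'$ (of common orientable genus $\frac{g-1}{2}$ or $\frac{g-2}{2}$), obtaining homeomorphisms to a common orientable surface $M'$ under which the images of the two arc systems cross at most $4(\text{genus}(M')-1)(3\beta_1)(3\beta_2) = 36(\text{genus}(M')-1)\beta_1\beta_2$ times.

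The next step is to reverse the cutting: since each $M_i'$ was obtained from $N_i$ by cutting along $\gamma_i$, and the homeomorphisms $M_i' \to M'$ are arbitrary, I need to check that the gluing pattern that reconstructs $N_i$ from $M_i'$ can be transported to $M'$ consistently for both $i=1,2$. This is where one must be slightly careful: the homeomorphism $\phi_i$ produced by Lemma~\ref{negami theorem} need not respect the identification on the boundary that glues $\gamma_i$ back. However, because the orienting arc is unique up to homeomorphism (by the classification of surfaces, as used in the proof of Lemma~\ref{orienting loops} and Lemma~\ref{homology}), one can post-compose with a homeomorphism of $M'$ carrying one gluing pattern to the other; this costs at most a constant multiplicative factor in the crossing count because a homeomorphism realizing a change of marking of the boundary of a fixed surface distorts a fixed arc system by at most a bounded amount, but more cleanly one simply arranges at the outset that both $N_1$ and $N_2$ are cut so as to present the \emph{same} model non-orientable surface $N$ with a \emph{fixed} orienting arc, and all identifications are the standard ones. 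Re-gluing then adds, for each of the $\beta_1$ arcs of $G_1$ and each of the $\beta_2$ arcs of $G_2$, the crossings created along $\gamma$: an arc of $G_1$ crosses $\gamma$ at most twice and an arc of $G_2$ crosses $\gamma$ at most twice, contributing $O(\beta_1 + \beta_2)$ further crossings, which is absorbed into the bound. Tracking constants: with genus $\frac{g-1}{2}$ in the odd case the bound $36(\frac{g-1}{2}-1)\beta_1\beta_2 \le 18(g-1)\beta_1\beta_2$ falls out directly, and with genus $\frac{g-2}{2}$ in the even case one gets the stated $72(g-2)\beta_1\beta_2$ after similarly loosening constants to swallow the re-gluing terms and the extra boundary-merging arcs.

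The main obstacle I anticipate is precisely the consistency of the re-gluing: ensuring that a \emph{single} target surface $N$ and a \emph{single} pair of homeomorphisms $\phi_1,\phi_2 : N_i \to N$ can be produced, rather than merely matching the cut-open pieces. The clean way to handle this is to fix, once and for all, a model $N$ with a distinguished orienting arc $\gamma_N$ (respectively orienting curve in the closed case), cut $N$ along $\gamma_N$ to get a model orientable surface $M_N'$ with a distinguished boundary arc, and demand that the homeomorphisms $\phi_i$ restrict to the identity on (a collar of) the boundary data being glued — this is possible because Lemma~\ref{negami theorem} can be applied relative to a fixed boundary, or one composes with a boundary-supported homeomorphism afterward at bounded cost. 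A secondary, purely bookkeeping obstacle is the subdivision blow-up: each cut multiplies the number of arcs by at most $3$, so the $\beta_i$ in Lemma~\ref{negami theorem} becomes $3\beta_i$, which is why the constants jump from $4$ to $18$ and $72$; one must also verify the resulting arcs remain essential (non-disk-cutting), which follows because $\gamma$ is essential and the original arcs are essential, so their sub-arcs cannot bound disks after cutting. Finally, to get the clean statement in terms of $|E(G_i)|$ as in Theorem~\ref{T:negamimain}, one uses $\beta_i \le |E(G_i)|$ and the standard reduction (contract a spanning tree, cut along a system of arcs dual to the resulting one-vertex scheme) that produces the $\beta_i$ essential proper arcs in the first place.
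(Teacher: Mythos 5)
Your plan is the same as the paper's: cut each $N_i$ along a short orienting arc $\gamma_i$ from Lemma~\ref{matouvsek}/Remark~\ref{remarkorienting}, apply Lemma~\ref{negami theorem} to the resulting orientable pieces, then glue back. Two details, however, are off. First, you are quoting the closed-curve version of Lemma~\ref{matouvsek2} rather than the arc version noted just after it: cutting a one-boundary non-orientable surface along an orienting \emph{arc} gives one boundary component when $g$ is odd (not two) and two when $g$ is even (not three), because a one-sided arc does not increase the boundary count. This is exactly why the paper needs no boundary-merging in the odd case and only a single extra arc $\nu_i$ in the even case; your ``cut along two short arcs connecting them'' is one arc too many and the genus bookkeeping then drifts.

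Second, your re-gluing cost of $O(\beta_1+\beta_2)$ is wrong in kind, not just in constant. Aligning the copies of $\gamma_1$ with $\gamma_2$ forces the endpoints of $\Gamma_1'$ and of $\Gamma_2'$ lying on the cut boundary past one another, and the number of new crossings is multiplicative in those endpoint counts, roughly $c_1c_2+\beta_1\beta_2=O(\beta_1\beta_2)$ (the paper computes $\le 12\beta_1\beta_2$ in the odd case by sliding the $2\beta_2$ free ends of $\Gamma_2'$ across the $2\beta_1$ free ends of $\Gamma_1'$, then matching $c_1$ against $c_2$ endpoints on each of the two copies of $\gamma$). There happens to be enough slack — $36\cdot\frac{g-3}{2}+12\le 18(g-1)$ — so your final bound is salvageable, but the $O(\beta_1+\beta_2)$ estimate does not hold up, and ``compose with a boundary-supported homeomorphism afterward at bounded cost'' is precisely the claim that needs this explicit count rather than being available off the shelf.
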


In the proof of Lemma~\ref{correction of negami}, Negami uses induction on the genus of the non-orientable surface. Assuming that the claim is true for genus $g-1$, to prove it for genus $g$, he claims that there is an essential proper arc $\alpha$ that runs along the center line of a M\"{o}bius band (a one-sided arc). This arc can be either included in the system of arcs or be disjoint from it. The idea is then to cut along $\alpha$ to get a non-orientable surface of genus $g-1$ to use the induction hypothesis. The problem lies in the fact that cutting along such an arc, we might not end up with a non-orientable surface. This is illustrated in the following lemma.

\begin{lemma}
Consider the non-orientable surface of genus 3 with one boundary component and embedded essential arcs shown in Figure \ref{counterexample}. Any one-sided arc disjoint from the embedded arcs cut the surface into an orientable surface.
\end{lemma}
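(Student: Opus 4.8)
The plan is to argue directly about the specific surface $N$ of genus $3$ with one boundary depicted in Figure~\ref{counterexample}, by understanding which one-sided arcs are disjoint from the prescribed essential arc system and then checking that cutting along any such arc destroys non-orientability. First I would fix a concrete model: realize $N$ as a disk with three cross-caps, with the embedded essential arcs drawn so that, after contracting/normalizing, the relevant complementary region through which a disjoint arc must travel is explicitly visible. The key observation is that a one-sided arc $\alpha$ has both endpoints on the single boundary component and, being one-sided, must cross an odd number of the cross-caps in any cross-cap drawing (this is the arc analogue of Lemma~\ref{homology}); combined with the requirement that $\alpha$ be disjoint from the given arcs, the homotopy class of $\alpha$ is pinned down up to the finitely many possibilities that the complement of the arc system allows.

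Next I would carry out the cutting computation for each such candidate $\alpha$. Cutting $N$ along a one-sided arc with endpoints on the boundary removes a Möbius-band neighborhood of $\alpha$; by the Euler-characteristic bookkeeping recalled in the preliminaries (cutting along an arc raises $\chi$ by $1$), the result is a surface $N'$ with $\chi(N')=\chi(N)+1$, hence Euler genus $eg(N')=eg(N)-1=2$, and still one boundary component (the two new boundary pieces created by the cut merge with the old boundary through the two endpoints of $\alpha$, and one checks the count). A surface with one boundary and Euler genus $2$ is either the orientable genus-$1$ surface with one boundary or the non-orientable genus-$2$ surface (Klein bottle minus a disk); the claim is that for every admissible $\alpha$ it is the orientable one. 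To see this I would track orientability directly: pick the local-orientation/coherence framework from the sketch of Lemma~\ref{matouvsek} (or, equivalently, the signature $\lambda$ on the scheme), and verify that after removing the Möbius neighborhood of $\alpha$ every remaining cycle becomes two-sided — concretely, because in the genus-$3$ model the ``leftover'' non-orientability after deleting one cross-cap along $\alpha$ is forced by the disjointness from the arcs to sit entirely inside the neighborhood of $\alpha$ that we cut away, so no one-sided cycle survives.

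The main obstacle I expect is the case analysis: a priori there could be several isotopy classes of one-sided arcs disjoint from the arc system, and one must be sure the list is complete and that none of them is ``genuinely'' non-orientable-preserving. I would handle this by passing to the surface $N \setminus (\text{arc system})$, which is a disk (the arcs are chosen so that cutting along them planarizes — this is exactly the setup in the construction Negami uses), so that an arc disjoint from the system becomes an arc in a planar piece with a controlled boundary identification pattern; one-sidedness then translates into a parity condition on how the arc's endpoints sit relative to the cross-cap gluings, and a short enumeration of the parity-odd possibilities finishes the argument. As a sanity check on the statement itself, note it is precisely the phenomenon flagged before the lemma: a one-sided arc need not reduce the non-orientable genus, because cutting a Möbius band out of $N_3$ (with one boundary) can land in the orientable genus-$1$ surface rather than in $N_2$ — so the example both confirms the lemma and exhibits the gap in Negami's induction.
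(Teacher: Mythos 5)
Your proposal matches the paper's approach: both rest on a concrete examination of the polygonal schema in the figure, identifying where a one-sided arc disjoint from the arc system can possibly run, and then observing that any such arc is in fact an orienting arc, so cutting along it produces an orientable surface. The extra Euler-characteristic dichotomy and cross-cap/parity framing you introduce is correct but heavier than what the paper uses; the paper just reads off from the schema that the only admissible arcs have their endpoints in specific boundary segments flanking the arc $c$ and are orienting.
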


\begin{proof}
Figure~\ref{counterexample} depicts a non-orientable surface of genus $3$ (obtained by identifying the boundary edges according to their letters and orientations) and one boundary component, and a family of essential arcs on it (consisting of the boundary edges and the blue arcs). The arc $c$ is a one-sided arc and $a$ and $b$ are two non-homotopic two-sided arcs. The blue arcs are all two-sided arcs embedded on the surface. A one-sided arc disjoint from the system of arcs in this polygonal schema, must have one end on the segment of the boundary component between two copies of $c$ and the other end on one of the two other segments adjacent to $c$. By cutting along such an arc, we get an orientable surface of Euler genus 2 with one boundary (such an arc is orienting).
\end{proof}

\begin{figure}[t]
\centering
\includegraphics[width=0.40\textwidth]{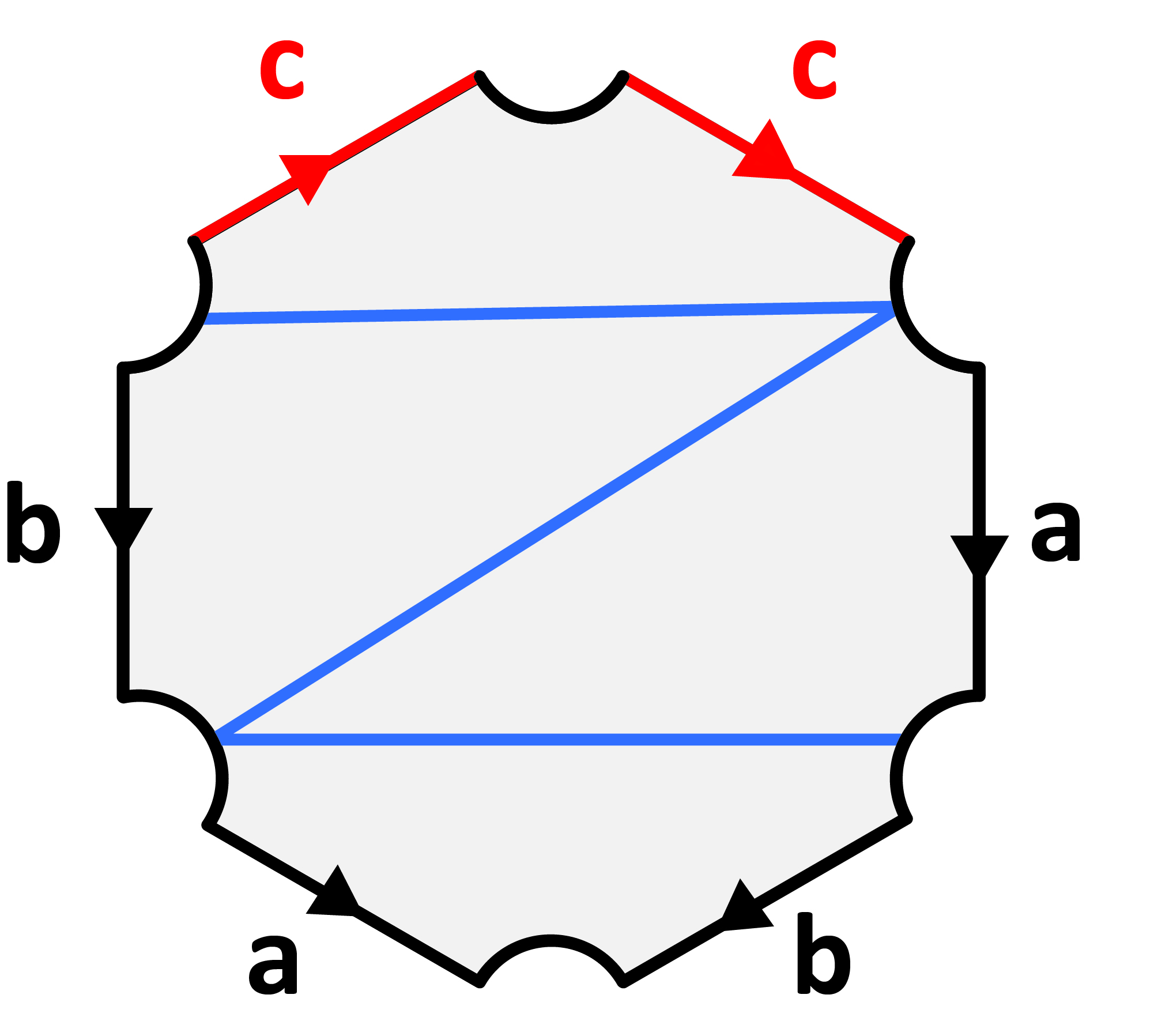} 
\caption{A non-orientable surface of genus 3 with embedded system of arcs. The dents in the picture indicate the segments of the boundary component.}
\label{counterexample}
\end{figure}

\subparagraph*{A Correction.} To prove Theorem \ref{T:negamimain}, we provide a different proof of Lemma~\ref{correction of negami}. The idea of the proof is to cut the surface along an orienting curve that does not cross a graph embedded on the surface too many times. By Lemma \ref{matouvsek}, we know that there exists such a curve. Then, we are on an orientable surface and can use Lemma~\ref{negami theorem}. Let us first introduce some additional terminology that is well tailored to surfaces with boundaries. 

Let $M$ be an orientable surface with boundaries that are given with orientations. Fix an orientation of $M$. We say that the orientations of the boundaries are mutually \emph{compatible} if $M$ is on the same side of each of them with respect to the chosen orientation. Figure~\ref{handle} shows a surface with non-compatible boundary orientations. For an arc with both ends on the same boundary component, we say the arc is two-sided (one-sided) if the closed curve obtained by connecting the two ends of the arc along one of the boundary segments is two-sided (one-sided). An orienting closed curve $\gamma$ is either two-sided or one-sided depending on the genus, as described in Lemma~\ref{matouvsek2}. In the same spirit of this lemma, one can characterize orienting arcs instead of orienting closed curves, the only difference is that when $g$ is odd, $h_\gamma=h$ and when $g$ is even, $h_\gamma=h+1$. This is because cutting along a two-sided arc increases the boundary components by one and cutting along a one-sided arc does not change the number of boundary components.

\begin{lemma}\label{compatible boundary orientation}
Let $N$ be a non-orientable surface of even genus and $M$ be the orientable surface obtained from cutting $N$ along an orienting curve. The orientations on the two boundaries of $M$ induced by the orienting curve  are compatible.
\end{lemma}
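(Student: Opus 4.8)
The plan is to read off the compatibility of the two boundary orientations of $M$ from the fact that gluing those two boundary circles back together must reproduce the \emph{non-orientable} surface $N$.

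First I would invoke Lemma~\ref{matouvsek2}: since $g$ is even, the orienting curve $\gamma$ is two-sided, so it has an annular neighborhood $A \cong S^1 \times [-1,1]$ with $\gamma = S^1 \times \{0\}$. Cutting $N$ along $\gamma$ then replaces $A$ by two half-open annuli and produces $M$ with exactly two boundary circles, which I will call $\gamma_+$ and $\gamma_-$. Note $M$ is connected and has exactly these two boundary components precisely because $\gamma$ is orienting (cutting along it yields a connected, orientable surface, and $N$ is closed). Both $\gamma_+$ and $\gamma_-$ are parallel copies of $\gamma$, so a choice of orientation on $\gamma$ transports to orientations $o_\gamma^+$ on $\gamma_+$ and $o_\gamma^-$ on $\gamma_-$: these are exactly the boundary orientations ``induced by the orienting curve'' in the statement. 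Recovering $N$ from $M$ amounts to gluing $\gamma_+$ to $\gamma_-$ along a homeomorphism $\phi$ which, by the construction of the annular neighborhood (the two copies of $\gamma$ are re-identified by the identity on the $S^1$ factor), satisfies $\phi_* o_\gamma^+ = o_\gamma^-$.

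The heart of the argument is the classical dichotomy for gluing two boundary components of a connected oriented surface. Fix an orientation of $M$ and let $o_+, o_-$ denote the boundary orientations it induces on $\gamma_+, \gamma_-$. Then the glued surface is orientable if and only if $\phi$ sends $o_+$ to $-o_-$, and non-orientable if and only if $\phi$ sends $o_+$ to $o_-$. I would justify this by noting that an orientation of the glued surface restricts, via the quotient map (a local homeomorphism away from the seam), to one of the two orientations of the connected surface $M$, and matching orientations across the re-identified seam forces $\phi$ to reverse the corresponding induced boundary orientations; conversely, flipping the orientation of $M$ flips both $o_+$ and $o_-$ simultaneously, so the ``reversing versus preserving'' alternative is independent of which orientation of $M$ was chosen. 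Since here the glued surface is $N$, which is non-orientable, this yields $\phi_* o_+ = o_-$.

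It then only remains to combine $\phi_* o_\gamma^+ = o_\gamma^-$ with $\phi_* o_+ = o_-$: writing $o_+ = \varepsilon\, o_\gamma^+$ with $\varepsilon \in \{+1,-1\}$ and applying $\phi_*$ gives $o_- = \varepsilon\, o_\gamma^-$. Hence $o_\gamma^+$ agrees with the induced orientation $o_+$ exactly when $o_\gamma^-$ agrees with $o_-$; equivalently, traversing $\gamma_+$ in the direction $o_\gamma^+$ keeps $M$ on the same side (left, respectively right) as traversing $\gamma_-$ in the direction $o_\gamma^-$, which is exactly the assertion that the two induced boundary orientations are mutually compatible. The step I expect to demand the most care is the gluing dichotomy: making precise how an orientation behaves across the re-identified seam, and using connectedness of $M$ — which is exactly what ``$\gamma$ is orienting'' buys us — to pin down the orientation of $M$ up to a global sign.
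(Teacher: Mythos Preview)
Your proposal is correct and follows essentially the same approach as the paper's proof: both hinge on the dichotomy that gluing two boundary circles of a connected orientable surface yields an orientable surface (a handle) precisely when the gluing reverses the induced boundary orientations, and a non-orientable one otherwise. The paper states this as a one-line contrapositive (non-compatible boundaries would produce a handle, contradicting non-orientability of $N$), while you unfold the same argument in the direct direction with the bookkeeping of $o_\gamma^\pm$, $o_\pm$, and the sign $\varepsilon$ made explicit.
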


\begin{proof}
If the boundaries are not compatible, the surface we get by identifying the boundaries along their orientation is introducing a handle which contradicts the fact that $N$ was non-orientable. Figure \ref{handle} illustrates this.
\end{proof}
\begin{figure}[t]
\centering
\includegraphics[width=0.35\textwidth]{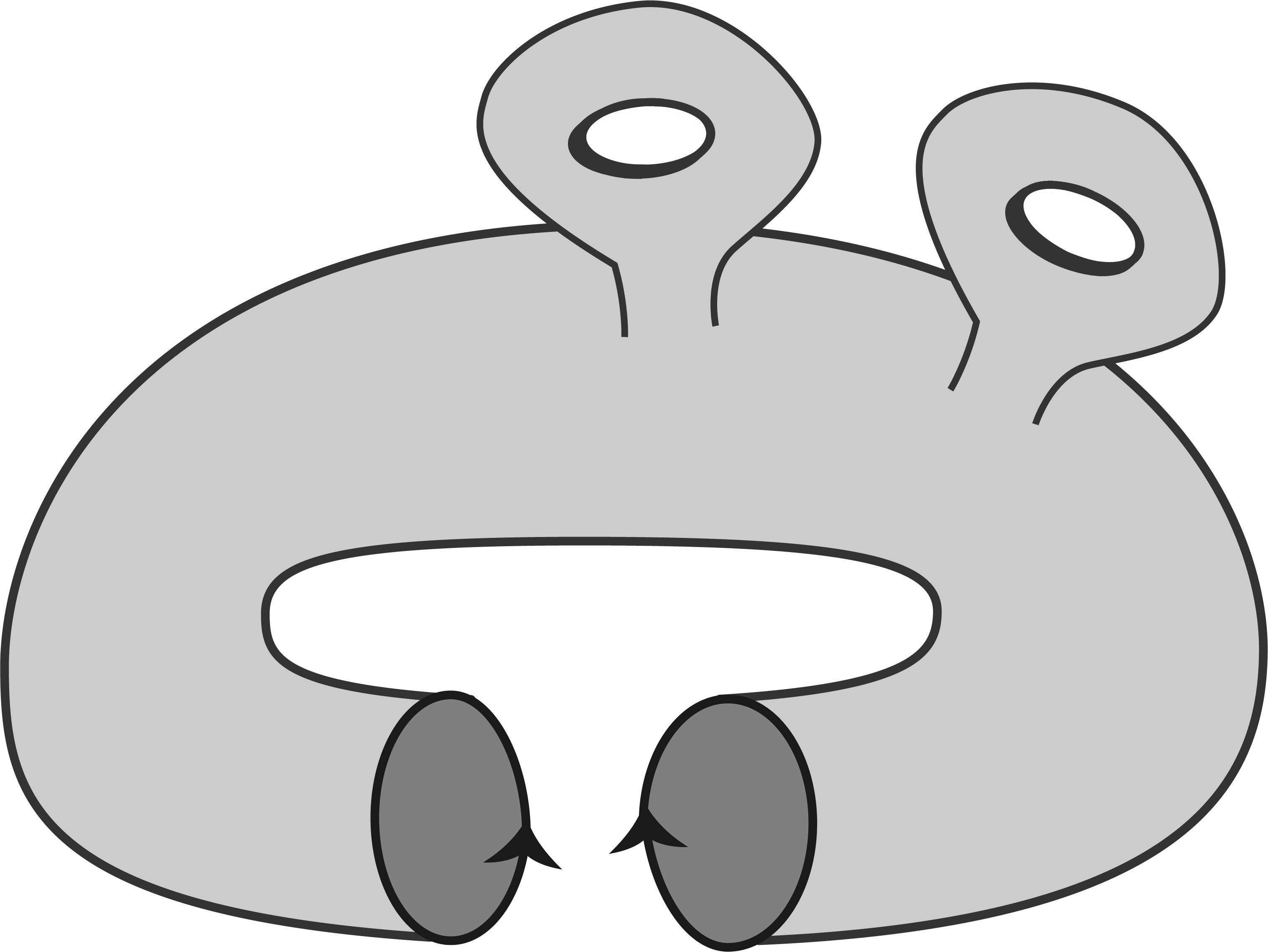} 
\caption{An orientable surface with non-compatible boundary orientations}
\label{handle}
\end{figure}

\begin{proof}[Proof of Lemma~\ref{correction of negami}]

For $i=1,2$, let $N_i$ be a non-orientable surface of genus $g$ and with one boundary component, with $\Gamma_i$, a system of $\beta_i$ disjoint essential proper arcs. We separate the cases where $g$ is odd from even.

{\large{\textbf{$\textbf{g}$ is odd}}.} 
Let $\gamma_i$ be an orienting arc in $N_i$ which has $c_i\leq 2\beta_i$ intersections with $\Gamma_i$, whose existence is guaranteed by Lemma~\ref{matouvsek} and Remark~\ref{remarkorienting}. Cut open $N_i$ along $\gamma_i$. Let $M_i$ be the resulting surface. From Lemma \ref{matouvsek2} we know that $M_i$ is an orientable surface of orientable genus $\frac{g-1}{2}$ and a boundary component. Each arc in $\Gamma_i$ is cut into at most $3$ arcs by $\gamma_i$. 
We denote by $\Gamma_i'$ the system of disjoint essential arcs in $M_i$ and thus we have $|\Gamma_i'|\leq 3\beta_i$.
From Lemma \ref{negami theorem}, we know there is a homeomorphism $\phi_i$ which maps $M_i$ to the surface $M^{'}$ with one boundary component, such that $\phi_1(\Gamma_1')$ and $\phi_2(\Gamma_2')$ have at most $4\frac{g-3}{2}|\Gamma_1'||\Gamma_2'|\leq 36\frac{g-3}{2}\beta_1 \beta_2$ crossings.

\begin{figure}[t]
\centering
\includegraphics[width=0.4\textwidth]{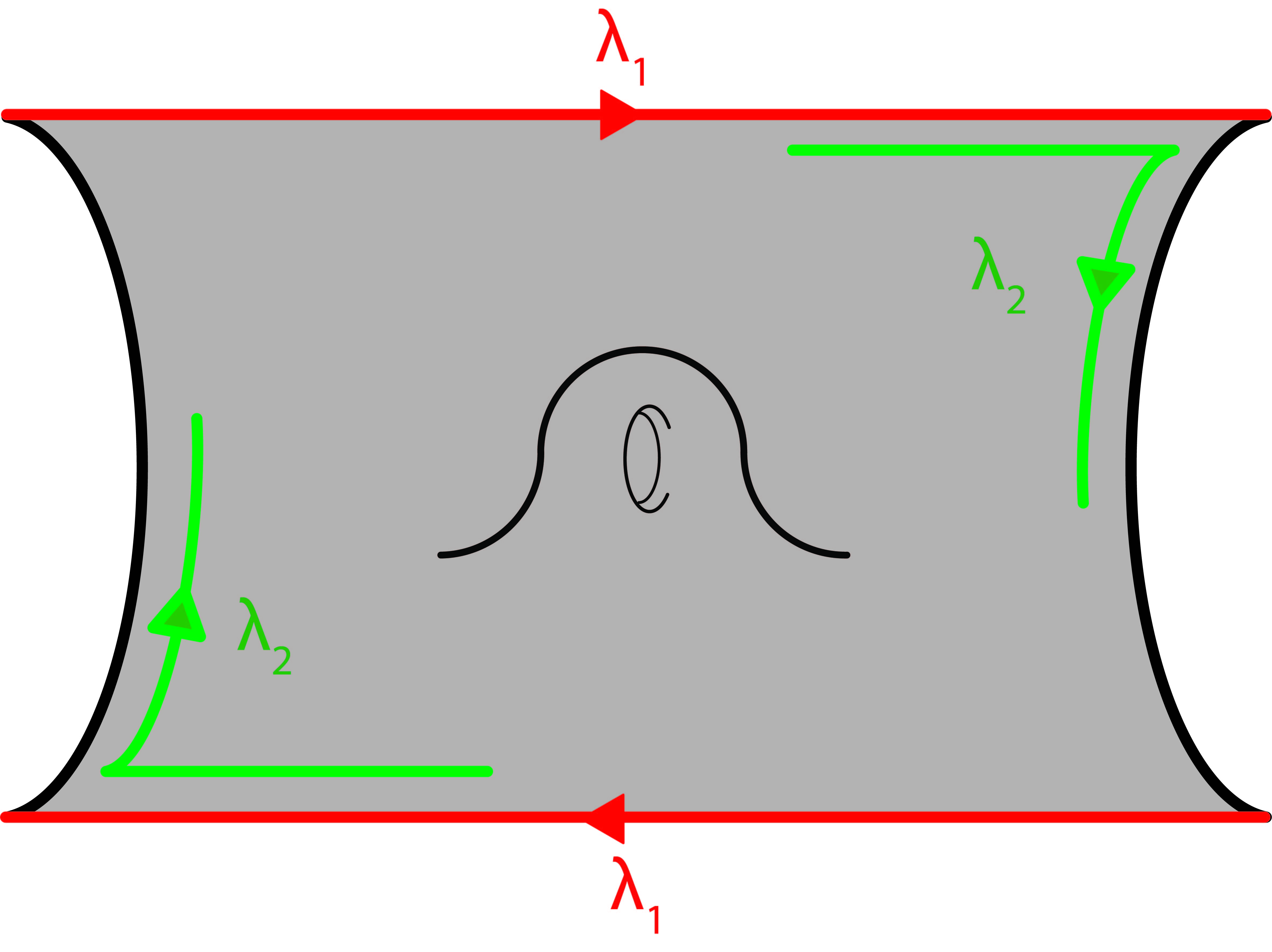} 
\caption{The non-orientable surface cut open along an orienting curve.}
\label{g odd}
\end{figure}

We modify $\phi_2$ in a small neighborhood of the boundary so that the copies of $\gamma_2$ are aligned with those of $\gamma_1$. This will allow us to glue back the surface along $\gamma_1$ (or $\gamma_2$). In order to do so, we slide out the ends of the arcs in $\Gamma_2'$ which are not on $\gamma_2$, containing $2\beta_2$ ends, into the two segments of the boundary of $N_1$. 
Each one of the ends we are sliding might intersect each end of the arcs in $\Gamma_1'$ which lies in the two segments of the primary boundary component which contains $2\beta_1$ ends.  Therefore this modification introduce at most $4\beta_1\beta_2$ intersections. Each copy of $\gamma_i$ contains $c_i$ ends of the arcs $\Gamma_i'$. Next we align copies of $\gamma_1$ with $\gamma_2$ and this introduces at most $c_1c_2$ new intersections on each copy. Therefore we will have $4\beta_1\beta_2+2c_1c_2\leq 12\beta_1\beta_2$ new intersections.

After this modification, we are able to glue back $M^{'}$ along $\gamma_1$ (or $\gamma_2$) to get back to a non-orientable surface $F$ with genus $g$ and a boundary component.
Now $\phi_1$ and $\phi_2$ introduce two homeomorphism from $N_1$ and $N_2$ into $N$ such that $\phi_1(\Gamma_1)$ and $\phi_2(\Gamma_2)$ intersect at most $12\beta_1\beta_2+ 36\frac{g-3}{2}\beta_1\beta_2\leq 18(g-1)\beta_1\beta_2$.

{\large{\textbf{$\textbf{g}$ is even}}.} Let $\gamma_i$ be an orienting arc in $N_i$ which has $c_i\leq 2\beta_i$ intersections with $\Gamma_i$ whose existence is guaranteed by Lemma~\ref{matouvsek} and Remark~\ref{remarkorienting}. Let $M_i$ be the surface obtained by cutting $N_i$ along $\gamma_i$ and denote by $\Gamma_i'$ the  system of arcs in $M_i$. Each arc in $\Gamma_i$ is cut into at most $3$ arcs by $\gamma_i$. 
We denote by $\Gamma_i'$ the system of disjoint essential arcs in $M_i$ and we have $|\Gamma_i'|\leq 3\beta_i$. From Lemma \ref{matouvsek2} we know that $M_i$ is an orientable surface of genus $\frac{g-2}{2}$ and two boundary components with $3\beta_i$ disjoint essential arcs. Fixing an orientation for $\gamma_i$ will let us see how the two boundary components were pasted and according to Lemma \ref{compatible boundary orientation}, we know that these orientations are compatible (see Figure \ref{g even N1}). 
\begin{figure}[t]
\centering
\includegraphics[width=0.37\textwidth]{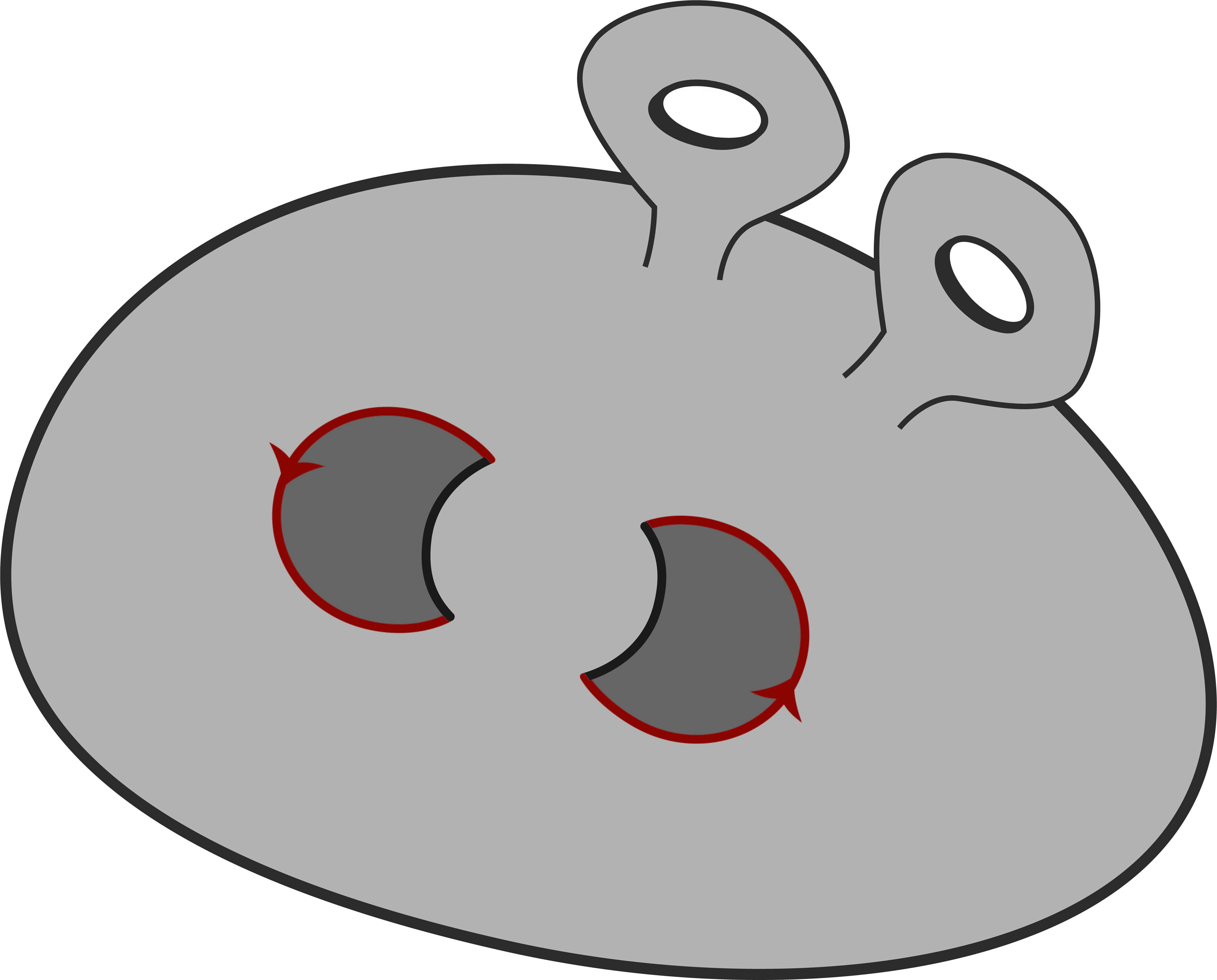} 
\caption{The surface $M_i$, the red arcs are the copies of $\gamma_i$}
\label{g even N1}
\end{figure}

 We claim that there is a curve $\nu_i$ with an end on each of the boundary components and with at most $3\beta_i$ intersections with the system of arcs in $M_i$. We know that cutting along an arc increases the Euler characteristic of the surface by 1 and we chose $\nu_i$ such that it reduces the boundary by 1. From the relation $\chi=2-2g-h$ where $h$ is the number of boundary components, we will see that after the cut, the genus is unchanged.
Therefore, by cutting along $\nu_i$, we get the orientable surface $M_i^{'}$ with genus $\frac{g-2}{2}$ and one boundary component with $6\beta_i$ disjoint essential arcs.

We choose $\nu_i$ such that its ends lie on the segments on the boundary components which belonged to the primary boundary component of $N_i$ and such that it connects these two boundaries. The problem reduces to finding such an arc so that it has at most $3\beta_i$ intersections with the system of arcs in $M_i$.
Consider the graph $H$ such that the vertices are the end points of the arcs and the edges are the arcs $\Gamma_i'$ which we denote by $E_1$ together with the segments on the boundary components denoted by $E_2$. We choose a face $\rho$ which has one edge in $E_2$ such that a part of this edge lies on the segment of the primary boundary component. We choose the face $\rho'$ analogously to $\rho$ but on the other boundary component. The shortest path between vertices associated to $\rho$ and $\rho'$ in the dual graph of $H$ induces a path from $\rho$ to $\rho'$ which after connecting the ends to the edge on the boundary component in both $\rho$ and $\rho'$, we will have an arc $\nu_i$ which passes through the inner faces at most once (because of our choice of shortest path in the dual graph) and therefore intersects each edge of $E_1$ at most once. Thus $\nu_i$ is the desired arc. 

\begin{figure}[t]
\centering
\includegraphics[width=0.43\textwidth]{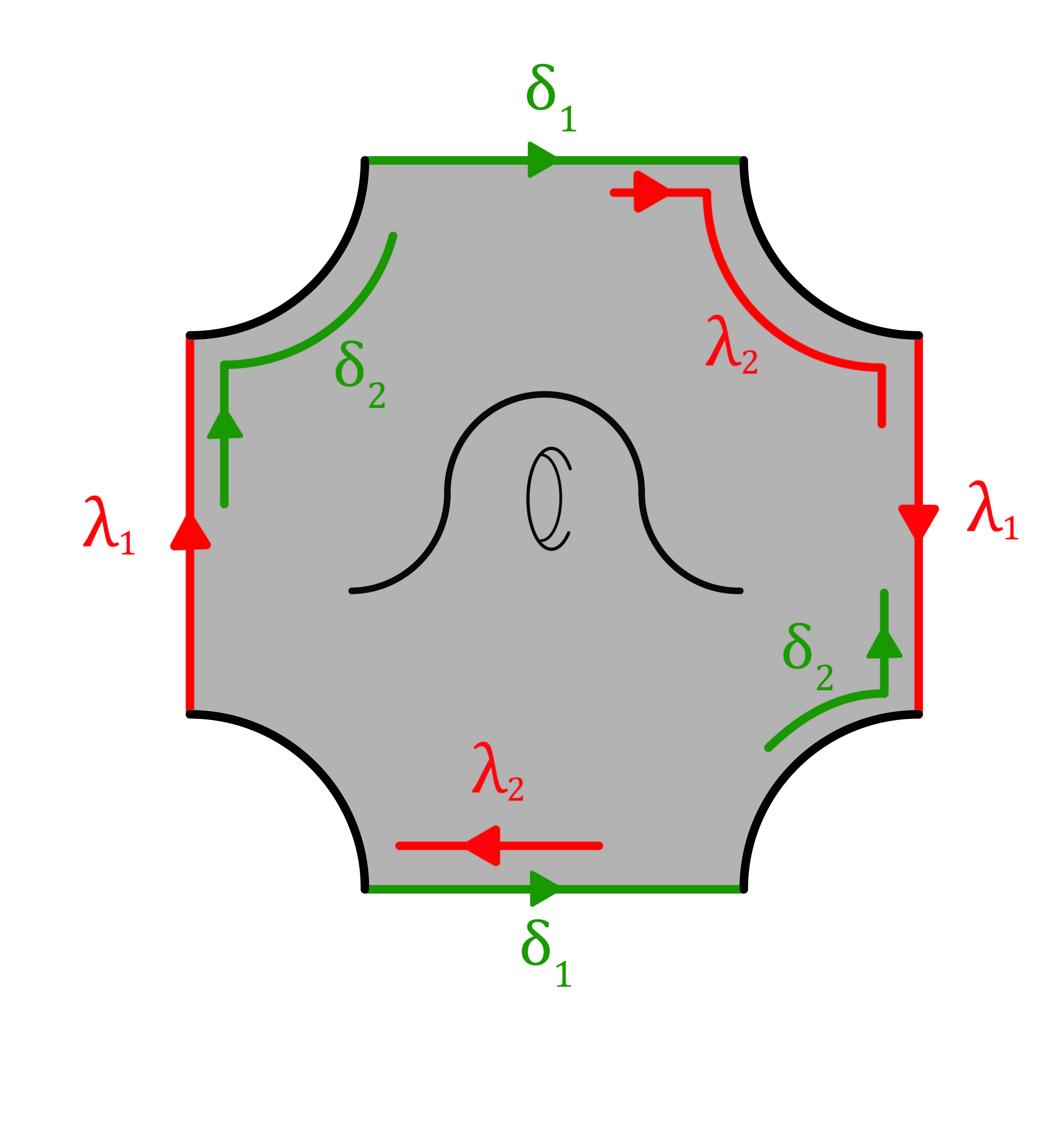} 
\caption{The orientable surface $M^{'}$ with one boundary component.} 
\label{g even N2}
\end{figure}

Again, from Lemma \ref{negami theorem} we know there exists $\phi_1$ and $\phi_2$ mapping $M_1^{'}$ and $M_2^{'}$ to $M^{'}$ with \mbox{$4\frac{g-4}{2}\cdot6\beta_1\cdot6\beta_2$} intersections in the arcs (see Figure \ref{g even N2}). Similar to the case where the genus was odd, we can modify $\phi_1$ and $\phi_2$ by introducing at most $4\cdot 6\beta_1\cdot 6\beta_2$ intersections such that we can align copies of $\gamma_1$ with $\gamma_2$ and $\nu_1$  with $\nu_2$ and glue back $M^{'}$ to obtain $N$ and at the end we have at most $72(g-2)\beta_1\beta_2$. This finishes the proof.
\end{proof}

\begin{proof}[Sketch of proof of Theorem~\ref{T:negamimain}]
Once we are equipped with Lemma~\ref{correction of negami}, the proof of Theorem~\ref{T:negamimain} follows the same strategy as that of Negami~\cite{negami2001crossing}. For completeness, we provide the following sketch. For each $i=1,2$, we contract a spanning tree in $G_i$, reducing it to a one-vertex embedding scheme, and remove contractible arcs, yielding a scheme $S_i$. We puncture the surface at the single vertex of $S_i$, yielding a non-orientable surface $N_i$ with one boundary component and $O(|E(G_i)|)$ disjoint essential arcs. We are now in the situation to apply Lemma~\ref{correction of negami}, which gives us a pair of homeomorphisms sending $N_1$ and $N_2$ to $N$ such that the number of crossings between the arcs is $O(g|E(G_1)||E(G_2)|)$. We glue back a disk on the puncture and connect all the incoming arcs of $S_i$ to a base-point $b_i$, so that the two base points $b_1$ and $b_2$ are slightly off. This induces an additional $O(|E(G_1)||E(G_2)))$ crossings. Finally, we add back the contractible arcs and uncontract the spanning trees in small neighborhoods of $b_1$ and $b_2$, which does not change the number of crossings, concluding the proof.

\end{proof}

\section{Non-orientable Canonical System of Loops}\label{non-canonical}

The objective of this section is to prove the following theorem.

\canonical*

In order to prove this theorem, we heavily rely on an algorithm introduced by Schaefer and \v{S}tefankovi\v{c} in~\cite{JGAA-580} for drawing graphs on non-orientable surfaces. Compared to that algorithm, our approach enforces more structure on this construction by imposing specific orders in the way we deal with the loops in the induction.  In Section~\ref{schaefer algorithm}, we first explain this algorithm and then in Section~\ref{modification}, we introduce and explain our modifications, ultimately leading to a proof of the theorem in Section~\ref{the system}.

\subsection{The Schaefer-{\v{S}}tefankovi{\v{c}} Algorithm}\label{schaefer algorithm}

Throughout this section, we will be working with a one-vertex graph $G$ endowed with an embedding scheme $(\rho,\lambda)$, which for simplicity we denote by $G$.

Schaefer and {\v{S}}tefankovi{\v{c}} proved the following theorem.

\begin{theorem}\label{schaefer lemma}\cite[Lemma~9]{JGAA-580}
If $G$ is a one-vertex non-orientable (respectively orientable) scheme $(\rho,\lambda)$, then it admits a cross-cap drawing with $eg(G)$ (respectively $eg(G) + 1$) cross-caps in which every edge passes through every cross-cap at most twice.
\end{theorem}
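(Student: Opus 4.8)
The plan is to prove Theorem~\ref{schaefer lemma} by induction on the number of loops of the one-vertex scheme $G$, closely following the inductive skeleton of Schaefer and \v{S}tefankovi\v{c}~\cite{JGAA-580}. The base case is an empty scheme (or a scheme with only contractible loops), which is drawn in the plane with no cross-caps (for orientable schemes of positive genus we eventually need one spare cross-cap, cf.\ Lemma~\ref{odd non-orientable genus}). For the inductive step I would split into cases according to what kind of loop $e$ the scheme contains. If $G$ has a one-sided loop $e$, I would cut the surface along $e$: since $e$ is one-sided, cutting produces a surface with one fewer cross-cap's worth of Euler genus, the vertex is split and the loops that alternated with $e$ become arcs between the two boundary copies of the vertex while the rest stay loops. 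Re-closing this into a one-vertex scheme of smaller Euler genus, we apply induction to get a cross-cap drawing, and then we re-insert the loop $e$ together with a new cross-cap localized near the vertex; the edges that alternated with $e$ must traverse this new cross-cap, and a careful local picture shows each such edge uses it at most twice (once on the way "in", once "out"), while edges that did not alternate do not use it at all.

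Next I would handle the case where every loop is two-sided but the scheme is still non-orientable only through some separating behaviour: here the key sub-case is a \emph{separating} two-sided loop $t$, which by Lemma~\ref{adding separating} splits the Euler genus additively, $eg(G) = eg(G_1) + eg(G_2)$. Cutting along $t$ yields two one-vertex schemes $G_1, G_2$ of strictly smaller Euler genus (each positive, else $t$ is trivial), we apply induction to each, obtaining cross-cap drawings with $eg(G_i)$ cross-caps and the at-most-twice property, and then we recombine the two planar pictures side by side, routing the loop $t$ around one of them. Since $t$ is drawn in the plane as a simple closed curve separating the two sub-drawings, no edge of either sub-drawing needs to cross any cross-cap of the other, so the at-most-twice bound is preserved and the total cross-cap count adds to $eg(G)$. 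The remaining case is a two-sided \emph{non-separating} loop $e$: cutting along it yields a connected one-vertex scheme on a surface with two extra boundary components and the same orientability type; here one handles it in the orientable-genus-reducing way (a non-separating two-sided loop together with a dual loop forms a handle), reducing $eg$ by $2$, applying induction, and re-inserting $e$ and its dual through either a pair of cross-caps or, in the genuinely orientable sub-situation, using the available spare cross-cap — this is exactly where the "$eg(G)+1$" in the orientable case and the parity obstruction of Lemma~\ref{odd non-orientable genus} enter.

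The main obstacle, and the technical heart of the argument, is the bookkeeping on \emph{how many times each edge enters each cross-cap}. It is not enough to control the newly created cross-cap at each inductive step: one must verify that the surgery (cutting along $e$ or $t$ and then re-gluing the smaller drawing) does not increase the number of traversals of the \emph{old} cross-caps coming from the inductive hypothesis. This forces one to be precise about where the cut curve $e$ is positioned relative to the existing cross-cap drawing of the smaller scheme — in particular one wants to route $e$ (and the reconnecting paths/arcs) so that it crosses each pre-existing cross-cap a bounded number of times, ideally zero or in a way that merges with an existing traversal. Schaefer and \v{S}tefankovi\v{c} achieve this with an intricate choice of which loop to cut along and a local normal form near the vertex; I would reproduce that choice, emphasizing: (i) when a one-sided loop exists, always cut along a one-sided loop whose ends are "innermost" among one-sided loops, so that the alternating edges form a nicely nested family whose two traversals of the new cross-cap can be drawn disjointly; (ii) the count of two per edge per cross-cap is tight and comes precisely from an edge potentially alternating with a one-sided cut loop at \emph{both} of its ends around the single vertex. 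Once these local pictures are set up, the rest is a routine Euler-characteristic verification that the cross-cap counts match $eg(G)$ (resp.\ $eg(G)+1$), using Lemma~\ref{matouvsek2} and Lemma~\ref{adding separating}.
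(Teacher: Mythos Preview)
Your inductive skeleton is right in spirit, but two of your cases are incomplete in a way that breaks the cross-cap count, and it is precisely at those breakages that the ``at most twice'' bound becomes delicate.

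First, in the separating case you write that induction gives drawings of $G_1,G_2$ with $eg(G_1)$ and $eg(G_2)$ cross-caps, and that gluing gives $eg(G)$ total. This is only correct when both $G_i$ are non-orientable. If, say, $G_2$ is orientable (which certainly happens), induction gives a drawing with $eg(G_2)+1$ cross-caps, so the glued drawing has one cross-cap too many. The paper (following Schaefer--\v{S}tefankovi\v{c}) fixes this with a \emph{dragging move}: one deletes a cross-cap on the $G_1$ side and re-routes the strands that went through it along an auxiliary orienting loop threaded through all the cross-caps of $G_2$. This is exactly the step where edges of $G_1$ are forced to enter cross-caps of $G_2$, contradicting your claim that ``no edge of either sub-drawing needs to cross any cross-cap of the other''; the whole point of the analysis is to show that each dragged strand still uses each $G_2$-cross-cap at most twice.

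Second, in the one-sided case you do not distinguish \emph{orienting} from non-orienting one-sided loops. If the one-sided loop $e$ you remove happens to be orienting, the resulting scheme $G'$ is orientable and needs $eg(G')+1=eg(G)$ cross-caps by Lemma~\ref{odd non-orientable genus}; adding back $e$ with a new cross-cap then gives $eg(G)+1$ cross-caps, not $eg(G)$. The paper handles this with a \emph{concatenation move}: replace the orienting loop $o$ by its concatenation with an adjacent two-sided loop $t$ to obtain a non-orienting one-sided loop, recurse, then slide back along $t$. Your ``innermost one-sided loop'' heuristic does not avoid this problem. Finally, your case of a two-sided non-separating loop (the purely orientable situation) is handled in the paper not by cutting along that loop but by an \emph{exchange move} that swaps two consecutive half-edges, creating a one-sided loop; your sketch of ``reducing $eg$ by $2$'' is too vague to see how the re-insertion would interact with the existing cross-caps.
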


The proof is by induction on the number of loops and is readily algorithmic, computing an actual cross-cap drawing with the required bound on the number of intersections with the cross-caps. Before explaining the algorithm, let us introduce the different moves and techniques we use to deal with different types of loops. 

First, let us introduce the following terminology for an embedding scheme around a vertex $v$. By \emph{flipping} a wedge of a one-sided loop $e$ in a one-vertex scheme, we mean reversing the order of the edges in the wedge and changing the signature of the loops that have exactly one end in the wedge. We call the empty wedge between two consecutive half-edges around $v$ a \emph{root wedge}. Note that a face incident to $v$ in a cross-cap drawing may correspond to more than one root wedge. We refer to such a face as a \emph{root face}.

\textbf{Contractible loop move}. Let $c$ be a contractible loop with consecutive ends in the scheme $G$. Remove $c$. The new scheme can be drawn using the same number of cross-caps. Having a drawing for the new scheme, we can draw the loop $c$ without passing through any of the cross-caps.

\textbf{Gluing move}. Let $s$ be a non-contractible separating loop in the scheme $G$. We divide the scheme to $G_1$ and $G_2$ by cutting along $s$ and splitting the vertex into two vertices (the embedding schemes of $G_1$ and $G_2$ are induced by the embedding scheme of $G$). Denote by $f_{o}^1$ and $f_{o}^2$ the root wedges in $G_1$ and $G_2$ in which $s$ formerly was placed. 
Let $H_1$ and $H_2$ be drawings for $G_1$ and $G_2$ respectively. We glue the drawings by identifying the root wedges $f_{o}^1$ and $f_{o}^2$ to get the drawing $H^{'}$ for $G\setminus \{s\}$. 

 Note that removing $s$ does not change the genus (Lemma \ref{adding separating}) and we have $eg(G)=eg(G_1)+eg(G_2)$. If $G_1$ and $G_2$ are both non-orientable, then $H^{'}$ can be extended to a cross-cap drawing for $G$ by adding $s$ without using any of the cross-caps; see Figure \ref{glue}.
When at least one of $
G_1$ or $G_2$ is orientable, say $G_2$, $H^{'}$ uses one extra cross-cap ($G_2$ needs $eg(G_2)+1$ cross-caps to be drawn). In order to get a drawing with minimum number of cross-caps, we need to eliminate one cross-cap from the drawing. To deal with this case, we need the following lemma and the dragging move which allows us to reduce one cross-cap from the drawing.

\begin{lemma} \label{adding orienting}
Let $(\rho,\lambda)$ be an orientable embedding scheme for the one vertex graph $G$. Adding a one-sided loop $o$ with consecutive ends to the scheme (anywhere in the rotation around the vertex) increases the Euler genus by 1. Thus, the new scheme needs as many cross-caps as $G$ to embed. Furthermore, the loop $o$ is orienting.
\end{lemma}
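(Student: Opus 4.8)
The plan is to build, by hand, a cross-cap drawing realizing the augmented scheme $G\cup\{o\}$ and to read off all three conclusions from it. Since $(\rho,\lambda)$ is orientable, $G$ is cellularly embedded on an orientable surface $M$ with some number $g$ of handles, so $eg(G)=2g$. Following the construction in the proof of Lemma~\ref{odd non-orientable genus}, I would pick the face $F$ of $M$ whose corner at the vertex is the root wedge into which the two ends of $o$ are to be inserted, place one cross-cap $\otimes$ inside $F$, and route $o$ from the vertex across $F$ to $\otimes$, through $\otimes$, and back to the vertex. The resulting drawing has $2g+1$ cross-caps, the rotation around the vertex is $\rho$ with the two ends of $o$ inserted consecutively, and the sidedness of every cycle is as $\lambda$ prescribes ($G$ is orientable and meets no cross-cap, while $o$, and hence every cycle through it, meets the single cross-cap an odd number of times); so this drawing realizes $G\cup\{o\}$ on the non-orientable surface $N$ with $2g+1$ cross-caps.

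Next I would verify that this embedding of $G\cup\{o\}$ on $N$ is cellular. Every face of $M$ other than $F$ is untouched and stays a disk; and $F$, once $\otimes$ is inserted in it, becomes a M\"{o}bius band that $o$ enters as an arc with both endpoints on the boundary crossing the cross-cap once, hence a one-sided arc of that M\"{o}bius band, so slitting along it yields a disk. Thus all faces are disks and $eg(G\cup\{o\})=eg(N)=2g+1=eg(G)+1$, which is the claimed increase of the Euler genus. (The same identity also follows purely combinatorially: in the band decomposition of a scheme, inserting a one-sided loop with consecutive ends amounts to attaching a half-twisted band whose two feet are adjacent on the boundary of the vertex-disk, and such an attachment leaves the number of boundary components unchanged while lowering the Euler characteristic by one.) For the cross-cap count, a cross-cap drawing realizing a scheme of Euler genus $e$ uses at least $e$ cross-caps (this is already used in the proof of Lemma~\ref{odd non-orientable genus}), so any cross-cap drawing of $G\cup\{o\}$ uses at least $2g+1$; our construction attains this, which by Lemma~\ref{odd non-orientable genus} is exactly the number $G$ itself requires.

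Finally, to see that $o$ is orienting, I would cut $N$ along $o$: the result is connected because a one-sided loop is never separating, and it is orientable because a regular neighborhood of $o$ is a M\"{o}bius band that contains a neighborhood of $\otimes$, so what remains lies inside $N$ with a neighborhood of $\otimes$ deleted, namely $M$ with an open disk removed, which is orientable; hence $o$ is an orienting loop. (Alternatively one may invoke Lemma~\ref{orienting loops}, using that the constructed embedding is cellular, that the consecutive ends of $o$ enclose the ends of every two-sided loop, and that $o$ is the only one-sided loop.) The step I expect to demand the most care is the cellularity claim of the second paragraph --- making rigorous that routing $o$ through the cross-cap turns the M\"{o}bius-band face into a disk --- together with its twin here, that cutting $N$ along $o$ removes exactly one cross-cap's worth of non-orientability; both are local, single-cross-cap computations, but they are exactly where the hypothesis "one-sided loop with consecutive ends" really enters.
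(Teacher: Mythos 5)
Your proposal is correct, and it takes a more constructive route than the paper's very short combinatorial argument, though the two meet in the middle. The paper proves the Euler genus increase by observing that inserting a one-sided loop with consecutive ends leaves the number of faces of the scheme unchanged and then invoking Euler's formula, whereas you first realize $G\cup\{o\}$ by an explicit cross-cap drawing with $eg(G)+1$ cross-caps (the one from the proof of Lemma~\ref{odd non-orientable genus} with $o$ routed through the single added cross-cap), verify cellularity by noting that $o$ is an essential one-sided arc of the M\"{o}bius-band face, and read the Euler genus off the drawing; you do also mention the paper's face-count argument as a parenthetical. For the cross-cap count both proofs cite Lemma~\ref{odd non-orientable genus}. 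For the orienting claim, the paper applies Lemma~\ref{orienting loops} (consecutive ends plus the fact that $o$ is the only one-sided loop), while you give a direct surgery argument --- cutting $N$ along $o$ leaves $M$ minus a disk, which is orientable and connected --- again offering the paper's Lemma~\ref{orienting loops} route as a fallback. Your approach buys a concrete picture of the drawing (useful when one later wants to track curves through cross-caps) at the cost of checking cellularity and identifying the regular neighborhood of $o$ with the glued-in M\"{o}bius band; the paper's buys brevity at the cost of asserting the face-count invariance without proof. Both are sound.
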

\begin{proof}
Adding a one-sided curve with consecutive ends to an embedding scheme does not change the number of faces. Then, by the Euler formula we know that the Euler genus of the new embedding scheme is increased by 1. By lemma \ref{odd non-orientable genus}, $(G,\rho,\lambda)$ needs $eg(G)+1$ cross-caps to embed. Therefore, the new scheme needs as many cross-caps as $G$ to embed. \\

The loop $o$ is the only one-sided loop in the embedding scheme and every other loop is in the same wedge of $o$. Lemma \ref{orienting loops} implies that $o$ is orienting.
\end{proof}

 \begin{figure}[t]
    \centering
    \includegraphics[width=.7\textwidth]{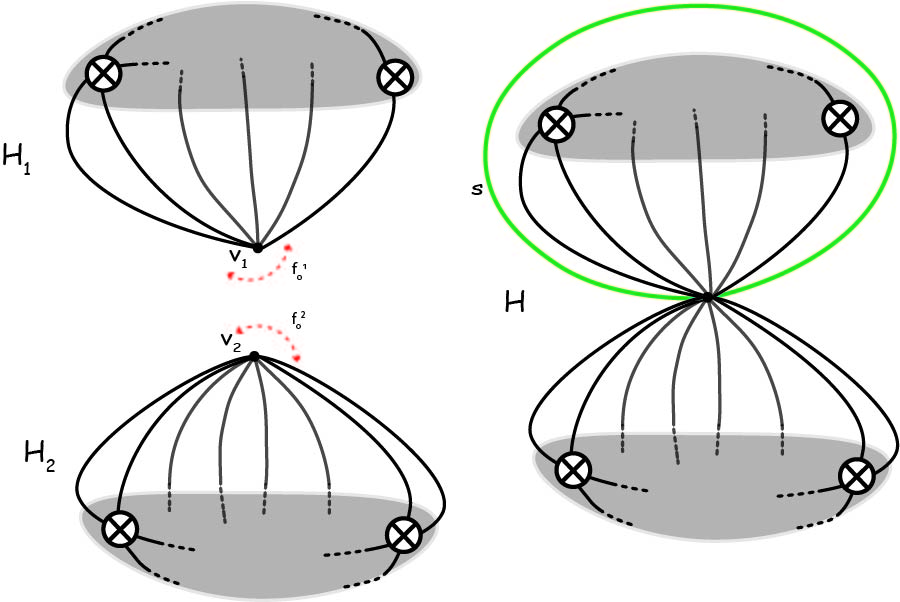}
     \caption{The gluing move on two cross-cap drawings when $G_1$ and $G_2$ are both non-orientable.}
    \label{glue}
\end{figure}

\textbf{Dragging move}. Let us assume that $G_2$ is orientable. By Lemma \ref{adding orienting}, we can add a one-sided loop $o$ with consecutive ends in the root wedge $f_{o}^2$ without increasing the number of cross-caps that we need to draw $G_2$. The loop $o$ is orienting and the new scheme needs $eg(G_2)+1$ cross-caps to be drawn. Having a drawing for the new scheme, we can draw the loop $s$ in the drawing for $G_2+\{o\}$ as follows: we start the loop from one of the root wedges between $o$ and another loop of $G_2$, we draw $s$ by following $o$ through all the cross-caps, except that after coming out of the last cross-cap, we go back to the first one entered, and traverse all of the cross-caps again. At the end, we follow $o$ back to the vertex; see Figure \ref{glue and drag}. We denote this drawing of $G_2+\{o\}+\{s\}$ by $H^{'}_2$. 

By gluing $H_1$ to $H_2^{'}$, we get a drawing $H^{'}$ for $G+\{o\}+\{s\}$ but the drawing is not using the minimum number of cross-caps. We eliminate one of the cross-caps in $H^{'}$ as follows.

Let $i_1$ be the rightmost half-edge in $G_1$ that follows immediately the separating loop in $G$. Denote by $\mathfrak{c}$ the first cross-cap that $i_1$ passes through. Let us assume that there are $2k$ half-edges passing through $\mathfrak{c}$. Let us denote by $(i_1,f_1,\dots, i_{2k}, f_{o}^1)$ the alternating sequence of half-edges and faces adjacent to $\mathfrak{c}$ in the cross-cap drawing by moving clockwise around it. 
Now, we disconnect the edges that enter $\mathfrak{c}$ and remove the cross-cap $\mathfrak{c}$. We drag $i_1,\dots, i_k$ through all the cross-caps in $G_2$ along the loop $o$. After exiting the last cross-cap in $G_2$, we remove $o$ and we attach the half edges to their other ends ($i_{k+1},\dots, i_{2k}$). Since $G_2$ uses an odd number of cross-caps (Lemma \ref{odd non-orientable genus}), the half edges will have the  correct orientability and order to get attached to their other ends;
see Figure \ref{glue and drag}. 
If only one of $G_1$ and $G_2$ is orientable, the drawing we get uses $eg(G)$ cross-caps and if both are orientable, we get a drawing with $eg(G)+1$ cross-cap that is the minimum cross-cap needed to draw the scheme in this case.

 \begin{figure}[t]
    \centering
    \includegraphics[width=\textwidth]{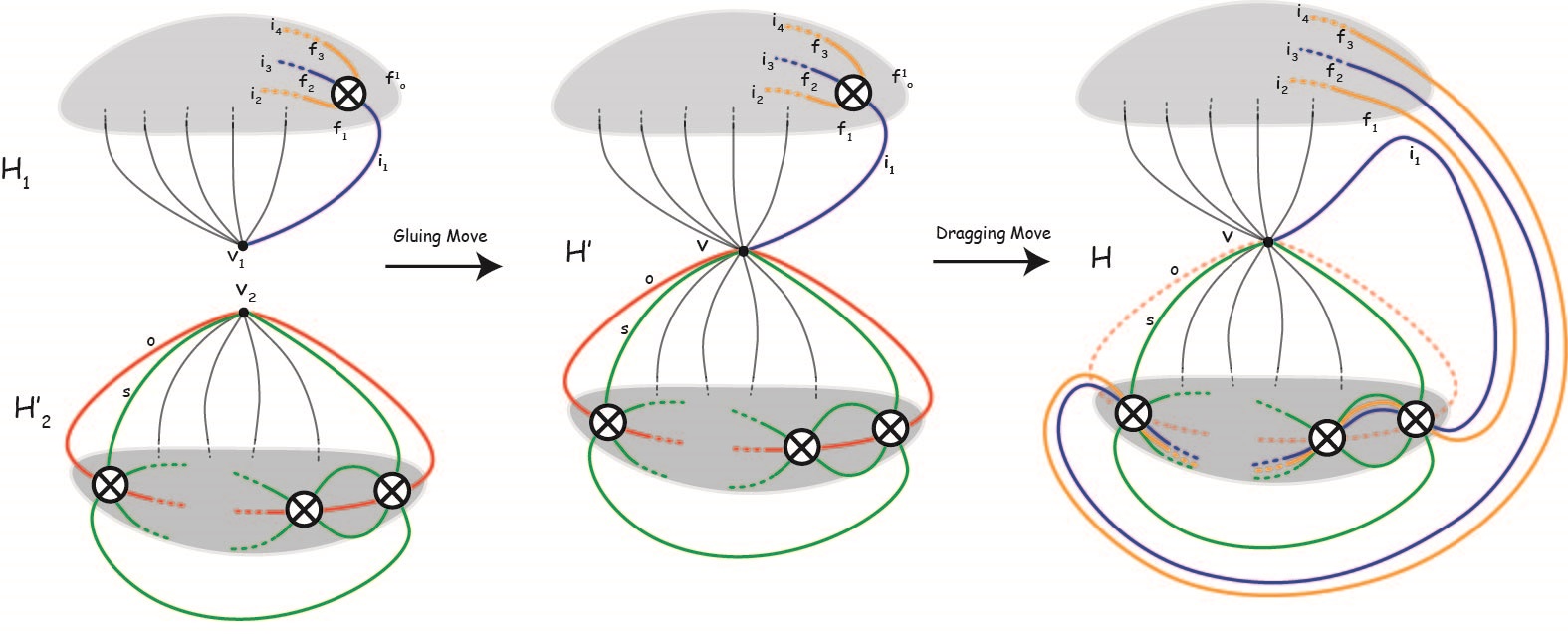}
     \caption{Left: the gluing move. Right: the dragging move when $G_2$ is orientable: the top right crosscap is removed and the corresponding curves are dragged through the bottom component.}
    \label{glue and drag}
\end{figure}

\textbf{One-sided loop move}. Let $r$ be a one-sided loop in the scheme $G$. We remove $r$ and flip one of its wedges. One can check that the new scheme $G^{'}$ has Euler genus $eg(G)-1$. Let us assume that $H^{'}$ is a drawing for $G^{'}$. We add $r$ to this drawing by adding a cross-cap near the vertex and the flipped wedge and dragging $r$ and every edge in the flipped wedge in it; see Figure \ref{one-sided move}.
Note that flipping different wedges of $r$ leads to two different cross-cap drawings. This freedom in choosing the wedge is important for us and we use this later in the paper.

\begin{figure}[t]
    \centering
    \includegraphics[width=.65\textwidth]{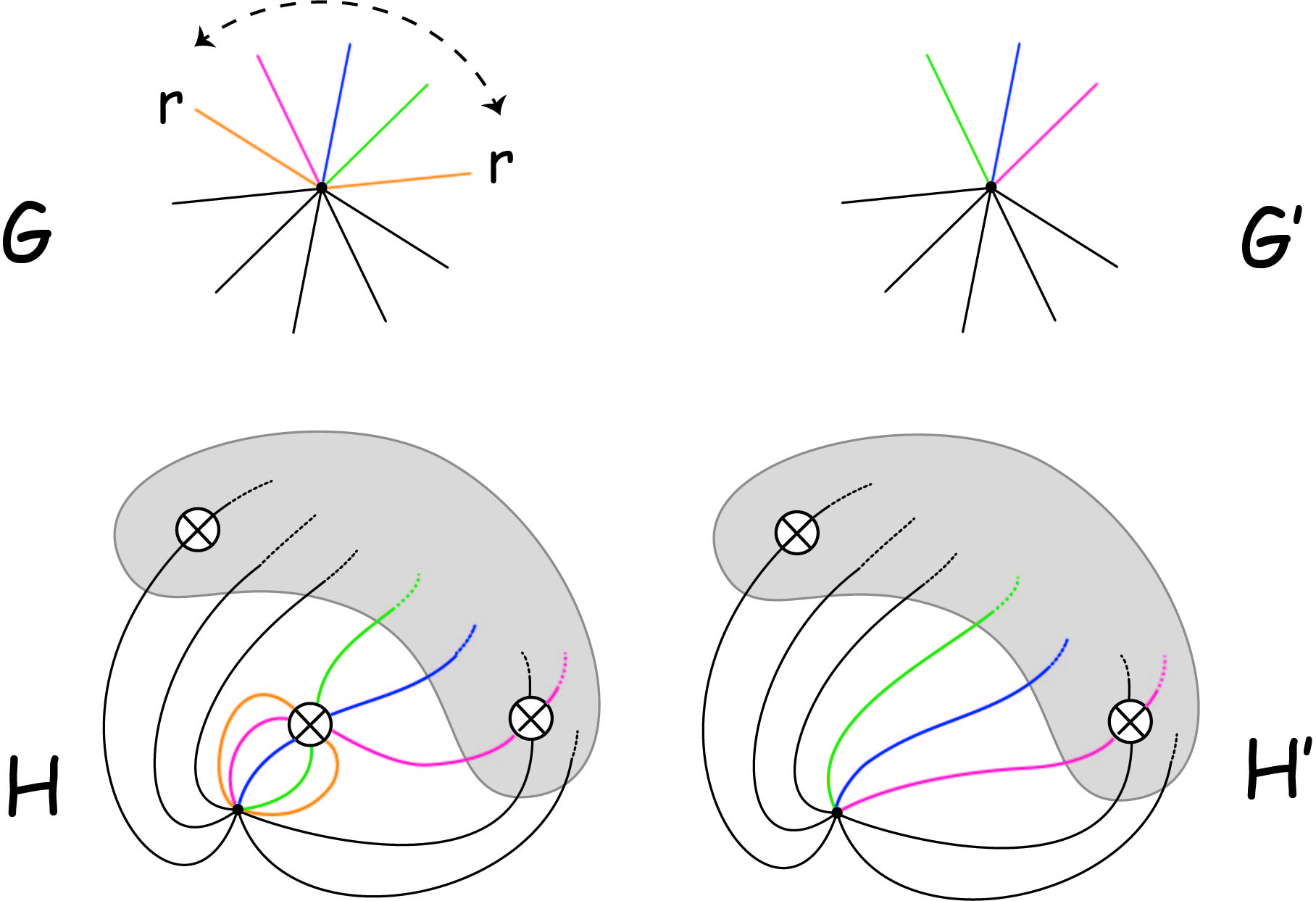}
     \caption{The one-sided loop move on the loop $r$.}
    \label{one-sided move}
\end{figure}

If $r$ is not orienting, the drawing we get at the end uses $eg(G)$ cross-caps.
But if $r$ is orienting, then $G^{'}$ is orientable and any drawing for $G^{'}$ needs an extra cross-cap (Lemma \ref{odd non-orientable genus}). This means that if we apply a one-sided loop move on an orienting loop, the drawing we get does not use the minimum amount of cross-caps (the embedding is not cellular).

We use the following move to deal with orienting loops.

\textbf{Concatenation move.} Let $o$ be an orienting loop in the scheme $G$ such that one of its ends is immediately followed by an end of a two-sided non-separating loop $t$ in the rotation. By Lemma~\ref{orienting loops}, since $t$ is non-separating, the concatenation of $o$ and $t$ which we denote by $o^{'}$, is not orienting. Denote by $G^{'}$ the scheme in which we replace $o$ by $o^{'}$ (we need $eg(G)$ cross-caps to draw both $G$ and $G^{'}$). If $H^{'}$ is a drawing for $G^{'}$, one can obtain from $H^{'}$ a drawing of $G$ by replacing the drawing $o^{'}$ by its concatenation with $t$. Depending on the wedge of $o^{'}$ that we choose to flip, we slide $o^{'}$ along $t$ in the drawing:

If we flipped the wedge that does not encompass the loop $t$, we detach the end of $o^{'}$ next to $t$ and slide it along $t$ and we attach it to the vertex. This way, it ends up where the end of $o$ was placed originally; see Figure \ref{concatenate1}.
\begin{figure}[t]
    \centering
    \includegraphics[width=\textwidth]{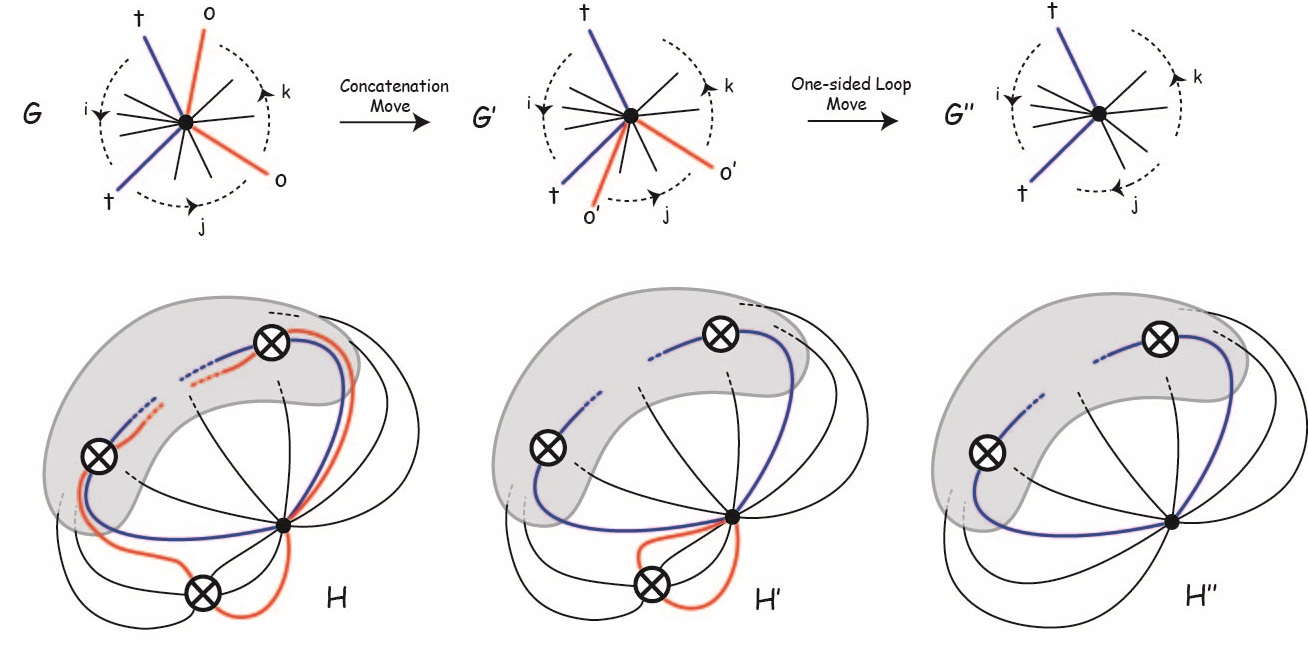}
    \caption{The concatenation move on the loops $o$ and $t$, when in applying the one-sided loop move we flip the wedge of $o^{'}$ that does not encompass the ends of the loop $t$.}
    \label{concatenate1}
\end{figure}If we flipped the wedge that encompasses the loop $t$, we draw $o$ as follows: note that $o^{'}$ passes trough only one cross-cap. We draw $o$ next to the end of $o^{'}$ that is not slid along $t$, but instead of following $o^{'}$ into the cross-cap, we follow $t$. We can do this because the loop $o^{'}$ is next to the loop $t$ in the rotation around this cross-cap; see Figure \ref{concatenate2}.

 \begin{figure}[t]
    \centering
    \includegraphics[width=\textwidth]{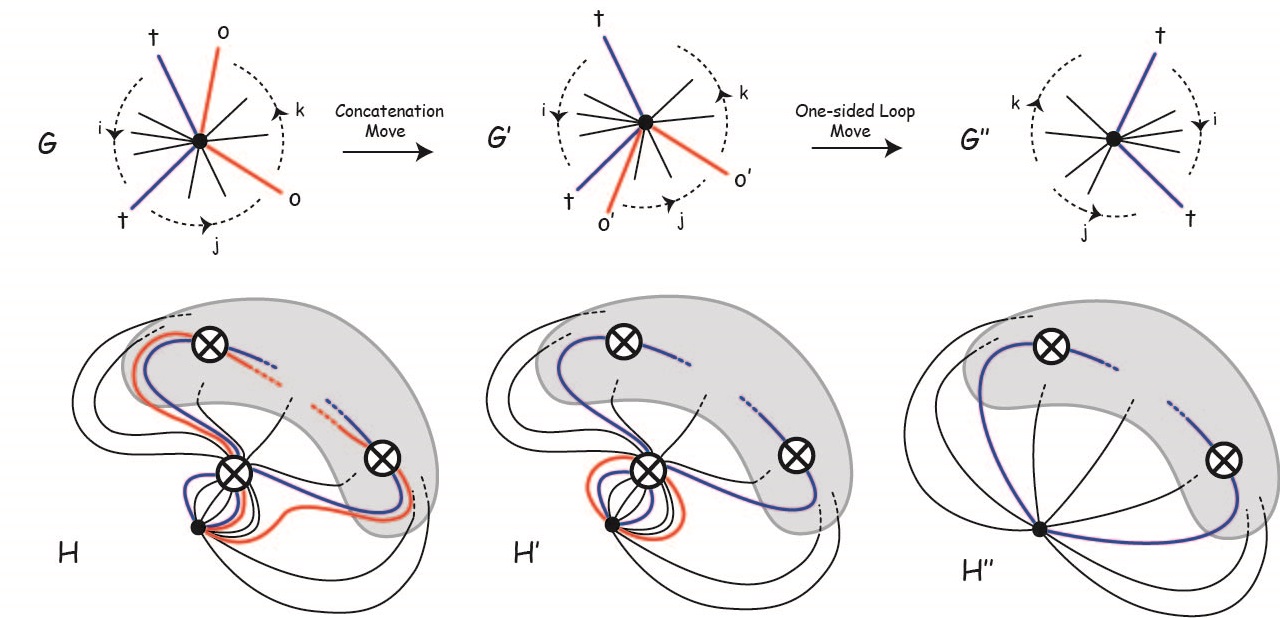}
    \caption{The concatenation move on the loops $o$ and $t$, when in applying the one-sided loop move we flip the wedge of $o^{'}$ that encompass the ends of the loop $t$.}
    \label{concatenate2}
\end{figure}
\textbf{Exchange move.} Let $G$ be a scheme that has only two-sided loops and  no separating loop. We exchange two consecutive half-edges and change the signatures of the corresponding edges. One can prove that the new scheme can be drawn using the same number of cross-caps as the initial scheme. Having a drawing of the new scheme, we obtain a drawing of the original scheme by adding a cross-cap near the reversed half-edges. This drawing is using $eg(G)+1$ cross-caps and this is the minimum number of cross-caps we need to draw an orientable scheme (Lemma \ref{odd non-orientable genus}).

Each of these moves provides a way to draw a loop assuming that some simpler one-vertex graph without that loop has already been drawn. Therefore, we can use the moves in an inductive algorithm as follows. The input is an embedding scheme $G$.

\begin{tcolorbox}

\textbf{The Schaefer-\v{S}tefankovi\v{c} algorithm:}
\begin{itemize}

\item \textbf{Step 1: If there exists a contractible loop.} We recurse on the scheme without the loop and apply the contractible loop move.
 \item \textbf{Step 2. If there exists a separating (non-contractible) loop $s$.} We divide the scheme into two sub-schemes on each side of $s$. We have the following cases:
 \begin{itemize}
   \item \textbf{Step 2.1: Both sub-schemes are non-orientable.} We recurse on the two sub-schemes and apply the gluing move.
   \item \textbf{Step 2.2: At least one of the sub-schemes is orientable.}  We recurse on the two sub-schemes and apply the gluing move followed by the dragging move.
\end{itemize}
  \item  \textbf{Step 3.1: If there is a one-sided non-orienting loop $r$.} We recurse on the scheme without the loop $r$ and the flipped wedge, and apply the one-sided loop move on the loop $r$.
  \item  \textbf{Step 3.2: If all one-sided loops are orienting.} If there exists no two-sided loop, by Lemma \ref{orienting loops}, all pairs of loops are interleaving and we can draw the scheme with one cross-cap so that each loop enters it once. If there exists a two-sided loop in the scheme, we can find a place in the scheme where an orienting loop $o$ is followed immediately by a two-sided loop $t$. We recurse on the scheme $H^{'}$ described in the concatenation move, and apply the concatenation move on the curve $o$.
  \item \textbf{Step 3.3: If every loop in the scheme is two-sided.} We apply the exchange move on two consecutive half-edges and recurse on the new scheme.  

\end{itemize}
\end{tcolorbox}

The proof of Schaefer and \v{S}tefankovi\v{c} of Theorem~\ref{schaefer lemma} proceeds by analyzing this algorithm and proving that (1) the resulting cross-cap drawing has the correct number of cross-caps and (2) each loop passes through every cross-cap at most twice.

\begin{remark}\label{enteringonce}
By Lemma \ref{homology}, in a drawing that is obtained by this algorithm, every orienting loop passes through each cross-cap exactly once and if a separating loop enters a cross-cap, it passes through that cross-cap exactly twice.
\end{remark}

There is some leeway in this algorithm: while the steps have to be applied in this specific order, in each step a loop of the given type is chosen arbitrarily. Our modification of the algorithm follows the exact same blueprint but enforces specific orders in which we choose separating and one-sided non-orienting loops. These specific orders provide more structure to the resulting drawing, make it lend itself more to connecting the cross-caps, in order to form the desired non-orientable canonical system of loops of low multiplicity.

\subsection{Our Modification to the Schaefer-{\v{S}}tefankovi{\v{c}} algorithm}\label{modification}

Our algorithm first starts with some preprocessing.

\subsubsection{Preprocessing}

\begin{lemma}\label{reducing}
Given a graph $G$ embedded on a non-orientable surface $N$, there exists a one-vertex scheme $\hat{G}$ such that $\hat{G}$ has an orienting loop, and if $\hat{G}$ has a non-orientable canonical system of loops such that each loop crosses each edge of $\hat{G}$ at most $k$ times, then $G$ has a non-orientable canonical system of loops such that each loop crosses each edge of $\hat{G}$ at most $3k$ times.
\end{lemma}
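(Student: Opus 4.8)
The plan is to transform $G$ into a one-vertex scheme in two stages: first reduce $G$ to a one-vertex scheme by contracting a spanning tree, then attach an orienting loop produced by Lemma~\ref{matouvsek}, tracking how the multiplicity of a canonical system of loops degrades under each operation when pulled back to the original graph. First I would pick a spanning tree $T$ of $G$ and contract it; this yields a one-vertex scheme $G'$ on $N$, and the standard observation is that a curve in $G'$ crossing each (contracted) edge at most $k$ times pulls back to a curve in $G$ crossing each edge at most $k$ times as well, since contraction only identifies vertices and does not merge edges — so this step costs nothing in multiplicity. The subtle point to keep in mind is that after contraction the scheme may have contractible loops or multiple edges, but a non-orientable canonical system of loops on the contracted scheme still cuts $N$ into a disk, and uncontracting $T$ inside a small neighborhood of the basepoint does not create new crossings with any edge.

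Next I would invoke Lemma~\ref{matouvsek} to obtain an orienting curve $\gamma$ on $N$ of multiplicity at most $2$ with respect to $G'$ (adding edges to make faces cellular first, as in the sketch of that lemma, then removing the auxiliary edges). The goal is to arrange that $\gamma$ passes through the single vertex $v$ of $G'$ so that it becomes an orienting \emph{loop} of the scheme: one can homotope $\gamma$ so that it passes through $v$ once, at the cost of at most the multiplicity already incurred, and then $\hat G := G' \cup \{\gamma\}$ is a one-vertex scheme possessing an orienting loop. By Lemma~\ref{orienting loops}, the loop $\gamma$ we added is orienting in $\hat G$ exactly when its ends enclose every two-sided loop and alternate with every one-sided loop; if the curve from Lemma~\ref{matouvsek} does not literally sit in this position we re-route it near $v$ using that it is orienting as a curve, which is a combinatorial matter around the vertex.

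Now suppose $\hat G$ has a non-orientable canonical system of loops $\{c_i\}$ with each $c_i$ crossing each edge of $\hat G$ at most $k$ times. The edges of $\hat G$ are the edges of $G'$ together with the (at most one or two) sub-arcs of $\gamma$; I would reinterpret $\{c_i\}$ as a system of curves on $N$ that, a priori, also crosses $\gamma$. To get a canonical system \emph{of $G'$} — hence of $G$ after uncontracting — I would push the curves $c_i$ off $\gamma$: each time $c_i$ crosses an arc of $\gamma$, that crossing can be traded for a detour that follows $\gamma$ and then crosses $G'$ instead; since $\gamma$ itself has multiplicity at most $2$ with respect to $G'$ and each $c_i$ crosses $\gamma$ at most $k$ times (as $\gamma$ is an edge of $\hat G$), this rerouting adds at most $2k$ to the multiplicity of $c_i$ with respect to any edge of $G'$, for a total of at most $3k$. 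The rerouting must be done so that the $c_i$ remain disjoint, have consecutive ends at the basepoint, and still cut the surface into a disk — i.e. remain a non-orientable canonical system; this is arranged by performing all the detours in a thin annular neighborhood of $\gamma$. Finally uncontracting $T$ near the basepoint gives the claimed system for $G$.

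The main obstacle I anticipate is the bookkeeping in the last paragraph: ensuring that pushing the canonical system off the orienting loop $\gamma$ preserves \emph{all three} properties simultaneously (disjointness, the one-sided cyclic pattern $a_1a_1\cdots a_g a_g$ around the basepoint, and cutting into a disk) while keeping the multiplicity blow-up to exactly the factor $3$. One has to be careful that the detours along $\gamma$ do not interfere with each other where several $c_i$ cross $\gamma$ near the same point, which is why grouping the detours in a neighborhood of $\gamma$ and ordering them consistently along $\gamma$ is essential.
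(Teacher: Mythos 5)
Your construction of $\hat G := G' \cup \{\gamma\}$ does not produce an embedded graph, and this is a genuine gap. After contracting a spanning tree of $G$ you have a one-vertex cellular scheme $G'$; the orienting curve $\gamma$ from Lemma~\ref{matouvsek} crosses the edges of $G'$ transversely (up to twice per edge), and these crossings cannot be homotoped away — an orienting curve is homologically nontrivial, so it must cross any graph whose faces are disks. Homotoping $\gamma$ to pass through the vertex $v$ fixes nothing about the crossings in the interiors of edges, so the ``scheme'' $G' \cup \{\gamma\}$ has edges that intersect transversely, which is not allowed. The paper avoids this precisely by taking the \emph{overlay} of $G$ and $\gamma$, i.e.\ introducing new vertices at every crossing point; the overlay is a genuine embedded graph, $\gamma$ is cut into $n$ sub-arcs, and contracting a spanning tree containing $n-1$ of them yields a one-vertex scheme in which the remaining sub-arc of $\gamma$ is a loop.

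Your rerouting step also misidentifies the source of the factor $3$. If you really had a one-vertex scheme $\hat G$ containing $G'$ as a sub-scheme with its edges unsubdivided, then a canonical system of loops for $\hat G$ with multiplicity $k$ would automatically cross each edge of $G'$ at most $k$ times: a canonical system is a topological object on the surface, and forgetting the extra edge $\gamma$ does not increase any crossing count, so no ``push-off'' is needed and no factor of $3$ would appear. The factor $3$ in the paper arises for a different reason: in the overlay, each edge $e$ of $G$ is subdivided by $\gamma$ into at most three sub-edges (because $\gamma$ has multiplicity $2$), and a loop crossing each sub-edge at most $k$ times therefore crosses $e$ at most $3k$ times. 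Your plan omits this subdivision and therefore never actually produces the one-vertex scheme whose existence the lemma asserts.
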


\begin{proof}
 By Lemma \ref{matouvsek}, there exists an orienting curve $\gamma$ embedded on the surface $N$ that has multiplicity two. Denote by $G^{'}$ the overlay of $G$ and $\gamma$. Note that each edge in $G$ is divided into at most three sub-edges in $G^{'}$. If $\gamma$ crosses $G$, $n$ times totally, then it corresponds to $n$ edges in $G^{'}$. Choose a spanning tree $T$ in $G^{'}$ that contains $n-1$ of these edges and contract the edges of $T$ into a single point to get a one-vertex graph $\hat{G}$. By merging the rotations, we get an embedding scheme for the one-vertex graph $\hat{G}$. Note that the non-contracted sub-edge of $\gamma$ is orienting in $\hat{G}$.
 
 Let us assume that $\hat{G}$ has a non-orientable canonical system of loops such that each loop crosses each edge of $\hat{G}$ at most $k$ times. We can uncontract the spanning tree $T$ close to the vertex so that the uncontracted edges do not cross the canonical system of loops. This gives us a canonical system of loops for $G^{'}$. To get a drawing for $G$, we remove the orienting curve $\gamma$. Since any edge of $G$ is formed by at most three edges of $G^{'}$, then the canonical system of loops for $G^{'}$ crosses each edge of $G$ at most $3k$ times
\end{proof}
\begin{remark}\label{coefficient3}
Note that the curve $\gamma$ divides each edge of $G$ into at most three sub-edges. Therefore, each edge in $G$ is obtained as the concatenation of at most 3 edges of $\hat{G}$.
\end{remark}

 This lemma lets us assume, at the cost of a slightly bigger multiplicity, that (1) the input graph is a one-vertex graph endowed with an embedding scheme, and (2)  the embedding scheme we work with contains an orienting loop. This orienting loop will in turn allow us to avoid some steps in the algorithm.

For the second prepossessing move we need a definition that is inspired by a similar notion from the literature on sorting signed permutations by reversals~\cite{hannenhalli1999transforming}. Given an embedding scheme $G$, the \emph{interleaving graph} $I_G$ has as vertex set the set of loops of $G$, and two vertices are connected if their corresponding loops have interleaving ends, see Figure \ref{one-sided moves} for an example. When we talk about the sidedness of a vertex, we mean the sidedness of the loop it is associated to. A connected component in the interleaving graph is called \emph{non-orientable} if it has a one-sided vertex, and \emph{orientable} otherwise. We call a component with only one vertex a \emph{trivial} component, and \emph{non-trivial} otherwise. Separating loops (contractible or non-contractible) correspond to isolated two-sided vertices, i.e., trivial orientable components, in the interleaving graph.

Our second preprocessing step aims at subdividing $G$ into subschemes $G_i$ such that each $I_{G_i}$ has only one non-trivial component. In order to do this, we \emph{saturate} the scheme with auxiliary separating loops, i.e., we add a separating loop for any non-trivial component that is not divided from the rest of the graph by some separating loops.

\begin{remark}
By Lemma \ref{adding separating}, we can see if $\bar{G}$ is the scheme we obtain after saturating $G$, then $eg(G)=eg(\bar{G})$. Thus, after drawing $\bar{G}$, we can remove these auxiliary separating loops to get a drawing for $G$.
\end{remark}

Given a non-orientable scheme $G$ saturated with separating loops and any cellular embedding of $G$ on a surface $N$, cutting $G$ along the separating loops yields subsurfaces $N_i$ of $N$, each containing (possibly empty) components of $G$, which we denote by $G_i$ (see Figure~\ref{component tree}). The \emph{component tree} of $G$ has a  vertex for every such sub-graph $G_i$, and two vertices are connected if their corresponding components are separated by a separating loop. See Figure \ref{component tree} for an example of a component tree. We shall quickly that in the context of our algorithm there will actually be exactly one  non-orientable component. We root the tree on the vertex corresponding to the non-orientable component.

 \begin{figure}[t]
    \centering
    \includegraphics[width=0.75\textwidth]{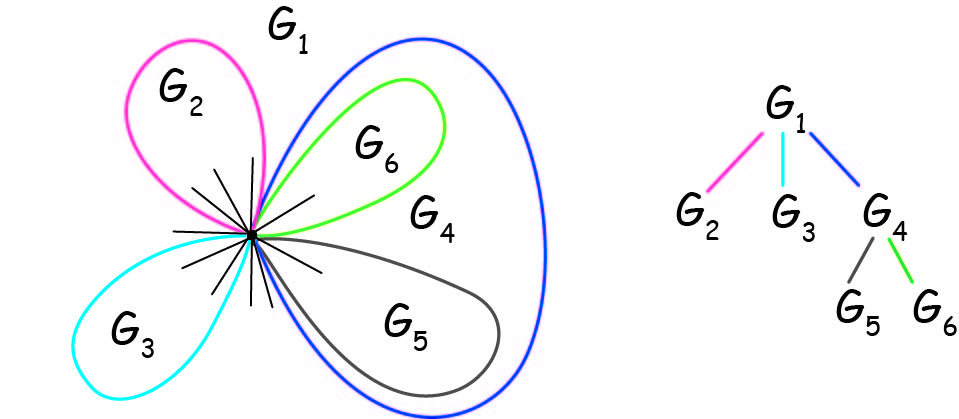}
     \caption{Left: a saturated one-vertex scheme in which the drawn loops are the separating loops; right:the component tree of the left scheme. Note that the component $G_4$ is an empty sub-scheme.}
    \label{component tree}
\end{figure}

In a saturated scheme, each separating loop relates two sub-graphs $G_i$ that exist on its different sides. If we have $k$ non-trivial components or empty subgraphs in total in the interleaving graph, in order to separate all of them from each other, we need exactly $k-1$ separating loops.

\subsubsection{A cross-cap drawing algorithm}

We are now ready to describe our modified algorithm. Our modification consists of forcing the presence of an orienting curve on the graph using Lemma~\ref{reducing} and putting more structure on the order we deal with different separating loops and different one-sided non-orienting loops.\\

Throughout the main loop of our algorithm, if we have homotopic loops we remove all of them except one and after drawing this one, we re-introduce them parallel to the one drawn.

\begin{tcolorbox}
\textbf{The modified algorithm}:

\textbf{Pre-processing steps:}
\begin{itemize}
    \item \textbf{Step A.} If there is no orienting loop, we add an orienting loop and contract a spanning tree using Lemma \ref{reducing}. 
    
   % \item \textbf{Step B.} Contract a spanning tree. Apply the algorithm, decontract the tree near the vertex.
 
    \item \textbf{Step B.} If $G$ is not saturated by separating curves, we saturate it.
\end{itemize}
\textbf{Main loop:}
\begin{itemize}
    \item \textbf{Step 1: If there is a contractible loop.} We recurse on the scheme without the loop and apply the contractible loop move.
    
    \item \textbf{Step 2: If there exists a separating (non-contractible) loop.} We pick a separating loop that separates a non-root leaf from the component tree, recurse on the subschemes and apply a gluing and a dragging move.

    \item \textbf{Step 3.1: If there exists a one-sided non-orienting loop.} We pick a one-sided non-orienting loop such that the scheme $G'$ that we obtain when removing it and flipping its wedge maximizes the number of one-sided loops. We recurse of $G'$ and apply the one-sided loop move on this loop.
    
    \item \textbf{Step 3.2.a: If all one-sided loops are orienting and there are two-sided loops.} We pick an orienting loop adjacent to a two-sided loop, recurse on the drawing $H'$ described in the concatenation move and apply the concatenation move on these loops.
    
    \item \textbf{Step 3.2.b: If all one-sided loops are orienting and there are no two-sided loops.} In this case one cross-cap is sufficient to draw all the loops.
\end{itemize}
\textbf{Post-processing steps:}
\begin{itemize}
    \item \textbf{Step B'.} Erase the extra separating loops added in step $B$.
    \item \textbf{Step A'.} Uncontract the spanning tree and remove the orienting loop added in step A. 
\end{itemize}
\end{tcolorbox}

The numbering comes from the Schaefer and {\v{S}}tefankovi{\v{c}} algorithm. A main difference with our modified version is that it is not clear at first sight that we cover all cases. This follows from the presence of an orienting loop and will be the main purpose of the following subsection(s). After that we will be ready to prove in Lemma~\ref{modified algorithm} that our algorithm terminates and outputs a cross-cap drawing where a loop does not enter each cross-cap more than $6$ times.

\subsubsection{Completeness of the case analysis }

The following lemma explains why we do not need to consider the case in which all loops are two-sided.

\begin{lemma}\label{non-orientability}
A scheme with an orienting loop is non-orientable.
\end{lemma}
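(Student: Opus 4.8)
The plan is to prove the contrapositive or, more directly, to exhibit a one-sided cycle in any scheme that contains an orienting loop. Recall that a scheme is orientable precisely when every cycle is two-sided, i.e., the signatures of the edges of every cycle multiply to $1$; so it suffices to produce one cycle whose signatures multiply to $-1$.

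First I would observe that an orienting loop $o$ is, in particular, a single-edge cycle, so it has a well-defined signature $\lambda(o)$. I claim $\lambda(o) = -1$, which immediately makes $o$ itself a one-sided cycle and hence the scheme non-orientable. To see this, recall that by definition an orienting loop is a loop along which cutting yields a \emph{connected orientable} surface; in particular the surface $N$ underlying the scheme must be non-orientable (otherwise cutting along a curve cannot turn a non-orientable surface orientable, and conversely if $N$ were orientable the scheme would already be orientable and there would be nothing to orient). More carefully: the sidedness of the loop $o$ is determined by its signature, and by Lemma~\ref{matouvsek2} an orienting curve is one-sided when the genus is odd and two-sided when the genus is even — so in the even-genus case $o$ itself need not be one-sided, and I must locate the one-sided cycle elsewhere.

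So the cleaner route is: since $o$ is orienting, cutting $N$ along $o$ produces an orientable surface. If $N$ itself were orientable, then since every cycle of the scheme is two-sided the scheme would be orientable, contradicting nothing yet — but the point is that the statement is vacuous unless we are handed a genuinely orienting loop, and a loop can only be "orienting" on a non-orientable surface (on an orientable surface, cutting changes nothing essential and no loop merits the name). Concretely I would argue: suppose for contradiction the scheme is orientable. Then it describes an embedding on an orientable surface $M$. But an orienting loop by definition has the property that $M$ cut along it is orientable and \emph{the loop was needed to make it so}; since $M$ is already orientable, there is no such loop — contradiction with the hypothesis that one exists. To make this airtight I would invoke Lemma~\ref{orienting loops}: in a cellularly embedded one-vertex graph with an \emph{orientable} scheme, a loop whose ends enclose every two-sided loop would be orienting only if the scheme were non-orientable, and Lemma~\ref{orienting loops} is stated for non-orientable schemes, reflecting that orientable schemes have no orienting loops.

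The main obstacle I anticipate is purely definitional bookkeeping: pinning down exactly what "has an orienting loop" means for a bare scheme (as opposed to an embedded graph), since the paper allows non-cellular schemes and orientable schemes over non-orientable surfaces. I would handle this by unwinding the definition of orienting loop through the cross-cap drawing model and Lemma~\ref{homology}: an orienting loop passes through each cross-cap an odd number of times, and by Lemma~\ref{odd non-orientable genus} a cross-cap drawing of an orientable scheme of genus $g$ uses $2g+1$ cross-caps with $g \geq 0$; tracing the parity argument shows that the presence of a curve hitting every cross-cap an odd number of times forces the genus-zero (planar, orientable) situation to actually be non-orientable once such a loop is a genuine edge of the scheme whose signature records an odd crossing count, hence signature $-1$, hence a one-sided cycle. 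I expect the final write-up to be just a few lines once the right earlier lemma (most likely Lemma~\ref{orienting loops} together with the definition) is cited.
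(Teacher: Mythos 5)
Your final paragraph lands on essentially the paper's argument: combine Lemma~\ref{homology} (an orienting loop passes through each cross-cap an odd number of times) with Lemma~\ref{odd non-orientable genus} (an orientable scheme of genus $g>0$ requires $2g+1$ cross-caps, an odd number), so that the total number of cross-cap passages of $o$ is odd, forcing $\lambda(o)=-1$ and hence a one-sided cycle in a supposedly orientable scheme. That is precisely what the paper does, and that part is fine.

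The earlier portion of your proposal, however, leans on a circular definitional argument that does not actually prove anything. You assert that ``a loop can only be `orienting' on a non-orientable surface'' and from this conclude that an orientable scheme cannot have an orienting loop. This conflates two different notions the paper takes pains to keep separate: the orientability of the \emph{scheme} $(\rho,\lambda)$ (a purely combinatorial property: all cycles have signature product $+1$) and the orientability of an ambient \emph{surface}. An orientable scheme can still be embedded non-cellularly on a non-orientable surface — the paper says so explicitly in the preliminaries, and Lemma~\ref{odd non-orientable genus} is exactly about cross-cap drawings of orientable schemes. Whether one of its loops happens to be orienting on such a non-orientable embedding surface is the nontrivial content of the lemma; appealing to ``orienting only makes sense on a non-orientable surface'' begs the question. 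Your suggestion to ``invoke Lemma~\ref{orienting loops}'' also does not close this gap, since noting that that lemma's hypothesis mentions a non-orientable scheme is not the same as proving orientable schemes lack orienting loops.

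Two smaller inaccuracies: Lemma~\ref{odd non-orientable genus} requires $g>0$, not $g\geq 0$ as you write; and the phrase ``forces the genus-zero (planar, orientable) situation to actually be non-orientable'' is not a coherent statement of the parity contradiction — the genus is whatever it is, and what you derive is that $o$ crosses an odd number of cross-caps total, hence is one-sided, contradicting orientability of the scheme. If you trim the proposal to the cross-cap parity argument alone and state that contradiction cleanly, you recover the paper's proof.
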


\begin{proof}
Let us assume that $G$ is an orientable scheme, hence the orienting loop $o$ is two-sided. By Lemma \ref{odd non-orientable genus}, $G$ needs $eg(G)+1$ cross-caps to embed and we know that $eg(G)$ is twice the orientable genus of $G$. Therefore $G$ needs an odd number of cross-caps to embed and the loop $o$ passes through each of them an odd number of times. 
Therefore it is one-sided which is a contradiction.
\end{proof}

Lemma~\ref{non-orientability} guarantees that we do not need to consider the case where all loops are two-sided when the algorithm starts, but this case might a priori still happen during recursive calls to the algorithms. Fortunately, this will actually not be the case, as we will prove in Corollary~\ref{conservation of orienting } that there is always an orienting loop in each of the recursive calls.\\

The following lemma explains why there is only one case that can happen in step 2 of our algorithm.
\begin{lemma}\label{orienting+separating}
Let $G$ be a scheme with an orienting loop $o$ and a non-contractible separating loop $s$. Then $s$ separates the graph into an orientable and a non-orientable sub-graph.
\end{lemma}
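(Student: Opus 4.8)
\textbf{Proof plan for Lemma~\ref{orienting+separating}.}
The plan is to use the homological/combinatorial characterization of orienting and separating loops in terms of cross-cap drawings, namely Lemma~\ref{homology} and Lemma~\ref{orienting loops}. First I would fix a cross-cap drawing of $G$ with a minimum number of cross-caps; call this number $c$. By the classification of surfaces together with Lemma~\ref{matouvsek2}, the orienting loop $o$ is unique up to homeomorphism, so I may assume the drawing is chosen so that $o$ passes through each of the $c$ cross-caps exactly once (this is the normalization already used in the proof of Lemma~\ref{orienting loops} and Lemma~\ref{homology}). In particular $o$ cuts the plane into two regions, the two wedges of $o$.

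Next I would analyze how $s$ sits with respect to $o$. By Lemma~\ref{homology}, since $s$ is separating it passes through each cross-cap an even number of times, while $o$ passes through each cross-cap exactly once; hence $s$ and $o$ are disjoint away from the vertex (or, more carefully, $s$ crosses $o$ an even number of times and its two ends lie in the same wedge of $o$), consistent with Lemma~\ref{orienting loops} which says the ends of $o$ must enclose the two-sided loop $s$. So both ends of $s$ lie in a single wedge of $o$. Now cut $G$ along $s$, splitting the vertex into two and producing two subschemes $G_1$ and $G_2$; by Lemma~\ref{adding separating}, $eg(G)=eg(G_1)+eg(G_2)$. Since both ends of $s$ are in one wedge of $o$, the loop $o$ lies entirely in one of the two pieces, say $G_1$; moreover, because in the normalized drawing $o$ meets all $c$ cross-caps and these cross-caps split according to the separating loop, the piece $G_1$ inherits an orienting loop (namely $o$, which still meets every cross-cap on the $G_1$ side an odd — in fact one — number of times, so by Lemma~\ref{homology} applied on that side, or directly by Lemma~\ref{orienting loops}, it is orienting in $G_1$).

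Finally I would show $G_2$ is orientable. The cross-caps of the minimal drawing all lie on the $G_1$ side: indeed, any cross-cap used by $G_2$ would be missed entirely by $o$, contradicting that $o$ passes through each of the $c$ cross-caps exactly once, together with minimality of $c$ (alternatively: $eg(G_1)\ge eg(G)$ forces $eg(G_2)=0$ once we know $G_1$ already absorbs all the non-orientability). Hence $G_2$ needs zero cross-caps in this drawing, so $G_2$ is an orientable scheme, while $G_1$ carries the orienting loop $o$ and, being a scheme with an orienting loop, is non-orientable by Lemma~\ref{non-orientability}. This gives exactly the claimed dichotomy: $s$ separates $G$ into one orientable and one non-orientable subscheme.

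The main obstacle I anticipate is making the step ``all cross-caps lie on the $G_1$ side'' fully rigorous, i.e., carefully justifying that in a \emph{minimum} cross-cap drawing normalized so $o$ meets every cross-cap exactly once, the separating loop $s$ cannot have cross-caps on both of its sides. The cleanest way around this is probably to argue purely combinatorially via Euler genus: we have $eg(G)=eg(G_1)+eg(G_2)$ by Lemma~\ref{adding separating}, the side $G_1$ containing $o$ is non-orientable by Lemma~\ref{non-orientability} (so $eg(G_1)\ge 1$), and then one shows $eg(G_1)=eg(G)$ — equivalently $eg(G_2)=0$ — by observing that $o$ remains orienting after restricting to $G_1$ (its ends still enclose every two-sided loop and alternate with every one-sided loop of $G_1$, by Lemma~\ref{orienting loops}), so cutting $G_1$ along $o$ already yields an orientable surface of the full complementary genus; combined with $eg(G_1)\le eg(G)$ this forces $eg(G_2)=0$. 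Establishing that $o$ lies wholly in one side, and that it stays orienting there, is where the careful bookkeeping with the wedges of $o$ and Lemma~\ref{orienting loops} has to be done.
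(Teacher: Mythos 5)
Your core idea---using Lemma~\ref{orienting loops} to pin down where $o$ and the one-sided loops can live relative to $s$---is exactly what the paper uses, but you route it through a cross-cap drawing, and that detour is where your gap lies. The paper's proof is a two-line argument entirely in terms of the cyclic order around the vertex, with no cross-cap drawing at all: by Lemma~\ref{orienting loops}, every one-sided loop has exactly one end in each wedge of $o$; since $o$ is enclosed by $s$ (again Lemma~\ref{orienting loops}, as $s$ is two-sided) both ends of $o$ lie in one wedge of $s$, call that side $G_1$; and since $s$ is separating, every other loop has both its ends in a single wedge of $s$. Chasing the cyclic order, the wedge of $o$ that does not contain the ends of $s$ is entirely contained in the $G_1$-wedge of $s$, so any one-sided loop (having an end in that wedge of $o$) must have both ends on the $G_1$ side. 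Hence $G_2$ has no one-sided loops and is orientable, while $G_1$ contains at least one one-sided loop ($o$ itself if $g$ is odd, otherwise some other one-sided loop, which must exist since $G$ is non-orientable by Lemma~\ref{non-orientability}) and is non-orientable.

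The gap you correctly anticipate---``all cross-caps lie on the $G_1$ side''---is not easy to patch. A cross-cap in a minimal drawing can be traversed by edges from both sides of $s$ (and $s$ itself may pass through a cross-cap twice, consistent with Lemma~\ref{homology}), so the cross-caps do not split cleanly into a $G_1$-set and a $G_2$-set in general, and ``any cross-cap used by $G_2$ would be missed entirely by $o$'' is not a valid inference. Your fallback Euler-genus paragraph also stops short: knowing that $o$ remains orienting in $G_1$ does not by itself give $eg(G_1)=eg(G)$---for that you would still need to know $G_2$ has no one-sided loops, which brings you back to the wedge-chasing above. So the right fix is to drop the cross-cap drawing entirely and make the Lemma~\ref{orienting loops} wedge argument the proof, rather than a remark buried at the end.
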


\begin{proof}
By Lemma \ref{non-orientability}, $G$ is non-orientable and therefore it has at least one one-sided loop. Let us assume that $s$ separates the scheme into $G_1$ which inherits the loop $o$, and $G_2$. We show that $G_1$ is non-orientable and $G_2$ is orientable.

By Lemma~\ref{orienting loops}, any one-sided loop has exactly one end in the wedge of the orienting loop in $G_1$ and has to have both ends in the same side of the separating loop, therefore no one-sided loop can exist in $G_2$ so $G_1$ is non-orientable and $G_2$ is orientable. The case where the only one-sided loop is the orienting loop is trivial.

\end{proof}

 \subsubsection{The order on the one-sided non-orienting loops}
 
 In this section, we explain an order on one-sided loops when we apply the one-sided loop move in step 3.1 of the algorithm. The reason for imposing this restriction is to avoid creating separating loops in the induction that did not exist in the scheme initially.

\begin{lemma}\label{one nonorientable component}
If there exists an orienting loop $o$ (two-sided or one-sided) in the embedding scheme $G$, the connected component that has the vertex $o$ is the only non-orientable component in $I_G$.\end{lemma}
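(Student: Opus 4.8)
\textbf{Proof plan for Lemma~\ref{one nonorientable component}.}

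The plan is to show two things: first, that the connected component of $I_G$ containing the vertex $o$ is non-orientable, and second, that every other connected component is orientable. The first point is immediate from the definition: by Lemma~\ref{orienting loops} (applied after reducing to a cellular embedding of minimal genus, as we do throughout), the orienting loop $o$ alternates with every one-sided loop of $G$. If $o$ is itself one-sided, the component containing $o$ trivially has a one-sided vertex and is non-orientable. If $o$ is two-sided, then since $G$ is non-orientable (by Lemma~\ref{non-orientability}, $G$ has an orienting loop, so it cannot be orientable) there is at least one one-sided loop $r$, and $o$ alternates with $r$, so $o$ and $r$ are adjacent in $I_G$; hence $o$'s component contains the one-sided vertex $r$ and is non-orientable.

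The heart of the argument is the second point: any one-sided loop $r$ of $G$ lies in the same connected component of $I_G$ as $o$. I would argue this by a connectivity/parity argument. Take a one-sided loop $r$; I want to exhibit a path in $I_G$ from $r$ to $o$. The key observation is that the set of loops alternating with $o$ is exactly the set of one-sided loops (again by Lemma~\ref{orienting loops}), so $o$ is adjacent in $I_G$ to \emph{every} one-sided loop. Therefore it suffices to check that every one-sided loop is adjacent to $o$ — which is literally the statement just quoted — and this immediately forces all one-sided loops into a single component, namely $o$'s component. Consequently any connected component of $I_G$ not containing $o$ contains no one-sided vertex and is, by definition, orientable. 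This also shows $o$'s component is the unique non-orientable one.

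The only subtlety to handle carefully is the standing assumption under which Lemma~\ref{orienting loops} applies: that lemma is phrased for a \emph{cellularly embedded} one-vertex graph, whereas in the algorithm we sometimes work with orientable schemes over non-orientable surfaces whose embeddings are not cellular. I would address this by invoking the convention established in the preliminaries (and Lemma~\ref{odd non-orientable genus}): we always work with embeddings of minimal non-orientable genus, for which there is at most one non-cellular face and the characterization of orienting loops in terms of enclosing two-sided ends and alternating with one-sided ends still holds; alternatively one can pass to a cellular refinement without affecting the interleaving graph structure among the original loops. Once that is pinned down, the lemma follows in a line. I do not anticipate a genuine obstacle here — the main content is really just reading off the consequence of Lemma~\ref{orienting loops} — so the write-up should be short, with the only care needed being the cellularity caveat and the edge case where $o$ is the unique one-sided loop (in which case $o$'s component is trivial but still counts as non-orientable, and all other components are automatically orientable).
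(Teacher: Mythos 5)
Your proposal is correct and takes essentially the same route as the paper: both hinge on the single observation from Lemma~\ref{orienting loops} that $o$ alternates with every one-sided loop, hence $o$ is adjacent in $I_G$ to every one-sided vertex and all of them live in $o$'s component. The paper states this in one line; your extra care about whether $o$ itself is two-sided, the trivial-component edge case, and the cellularity of the embedding is all sound (and in fact cellularity is not actually a worry here, since by Lemma~\ref{non-orientability} a scheme with an orienting loop is non-orientable and therefore cellularly embeddable), but it does not change the argument.
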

\begin{proof}

By Lemma \ref{orienting loops}, the ends of every one-sided loop interleave with the ends of the orienting loop, therefore the vertex $o$ is connected to every one-sided vertex in $I_G$ and this finishes the proof; see Figure  \ref{one-sided moves}.\end{proof}

The following lemma is analogous to a similar result in signed reversal distance theory~\cite[Fact~2]{bergeron2001very}.

\begin{lemma}\label{one-sided move in interleaving }
Applying a one-sided loop move on a loop $r$ corresponds to removing the vertex $r$ in the interleaving graph and complementing the subgraph induced by its neighbors.
\end{lemma}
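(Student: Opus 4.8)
The plan is to unwind what the one-sided loop move does to the cyclic rotation around the vertex and then read off its effect on the interleaving graph. Recall that the move on a one-sided loop $r$ removes $r$ and flips one of its two wedges $W$, where flipping means reversing the order of the half-edges lying in $W$ and toggling the signature of every loop with exactly one end in $W$. So first I would fix notation: let $r$ have ends partitioning the other half-edges into wedges $W$ and $W'$, and classify every other loop $\ell$ of $G$ according to where its two ends lie — both in $W$, both in $W'$, or one in each (the last case being exactly the loops interleaving with $r$, i.e. the neighbors of $r$ in $I_G$).

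The key observation is that interleaving of a pair of loops $\ell_1,\ell_2$ both distinct from $r$ is detected purely by the cyclic order of their four ends, and flipping $W$ is an orientation-reversing relabeling of the positions inside $W$ while fixing everything in $W'$. I would then do the case check: (i) if both $\ell_1,\ell_2$ have all their ends in $W'$, nothing about their relative order changes, so their adjacency in $I_G$ is unchanged; (ii) if both have all their ends in $W$, then reversing the order of $W$ is a reflection, which preserves the interleaving/non-interleaving relation of two chords with all four endpoints inside the reversed arc, so again adjacency is unchanged; (iii) if one of them, say $\ell_1$, is a neighbor of $r$ (one end in $W$, one in $W'$) and $\ell_2$ has both ends in $W'$, then only $\ell_1$'s single end in $W$ moves, but it stays in $W$, so the cyclic position of $\ell_1$'s ends relative to $\ell_2$'s ends is unchanged and adjacency is preserved; symmetrically when $\ell_2$ has both ends in $W$ one checks the relevant cyclic pattern is again preserved by the reflection of $W$; (iv) finally, if both $\ell_1$ and $\ell_2$ are neighbors of $r$, each has exactly one end in $W$ and one in $W'$. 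Here the two ends in $W'$ keep their order, the two ends in $W$ get their order swapped by the flip, and a short check of the four-point cyclic patterns shows that $\ell_1,\ell_2$ become interleaved after the move precisely when they were not interleaved before, and vice versa — this is the complementation of the neighborhood of $r$. Together with the fact that $r$ itself is deleted, this gives exactly the claimed statement.

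The main obstacle, and the only place requiring care, is case (iv): one has to verify that swapping the cyclic order of the two ends lying in $W$ toggles the interleaving relation between two chords that each straddle the wedge boundary. This is a finite, purely combinatorial verification — list the possible relative orders of $\ell_1$'s and $\ell_2$'s ends within $W$ and within $W'$ and check the alternation pattern of the four points around the circle before and after — but it is the crux, since cases (i)–(iii) only require that reflections and relabelings that fix a sub-arc setwise preserve the relation. I would also remark that the signature toggles introduced by the flip affect only the sidedness labels of the vertices, not the edge set of $I_G$, so they are irrelevant to this statement (though they matter for tracking which vertices become one-sided, used elsewhere). Finally I would note the analogy with \cite[Fact~2]{bergeron2001very}: the one-sided loop move is the topological incarnation of a signed reversal acting on the overlap graph of a signed permutation, and the deletion-plus-local-complementation description is exactly the classical description of how a reversal acts on that graph.
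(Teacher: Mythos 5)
Your proof is correct and follows essentially the same approach as the paper's, which argues directly from the definition of the flip acting on the cyclic rotation; you simply make explicit the case analysis (in particular the crucial case (iv), complementation among two neighbors of $r$) that the paper's one-paragraph proof asserts without spelling out, and you rightly separate the signature toggles (which affect vertex labels, not edges) from the adjacency changes.
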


\begin{proof}
In a one-sided loop move, we remove a one-sided loop $r$ and flip one of its wedges. When we flip, we change the signature of each loop that has exactly one end in each wedge of $r$, therefore we are changing the sidedness of everything that was connected to the vertex $r$ in the interleaving graph. Also, due to the flip, every two vertices adjacent to $r$, that were connected to each other before the flip, are now disconnected and vice versa. The situation of the loops that are not interleaving with $r$ in the scheme is unchanged; see Figure \ref{one-sided moves}. 
\end{proof}

 \begin{figure}[t]
    \centering
    \includegraphics[width=\textwidth]{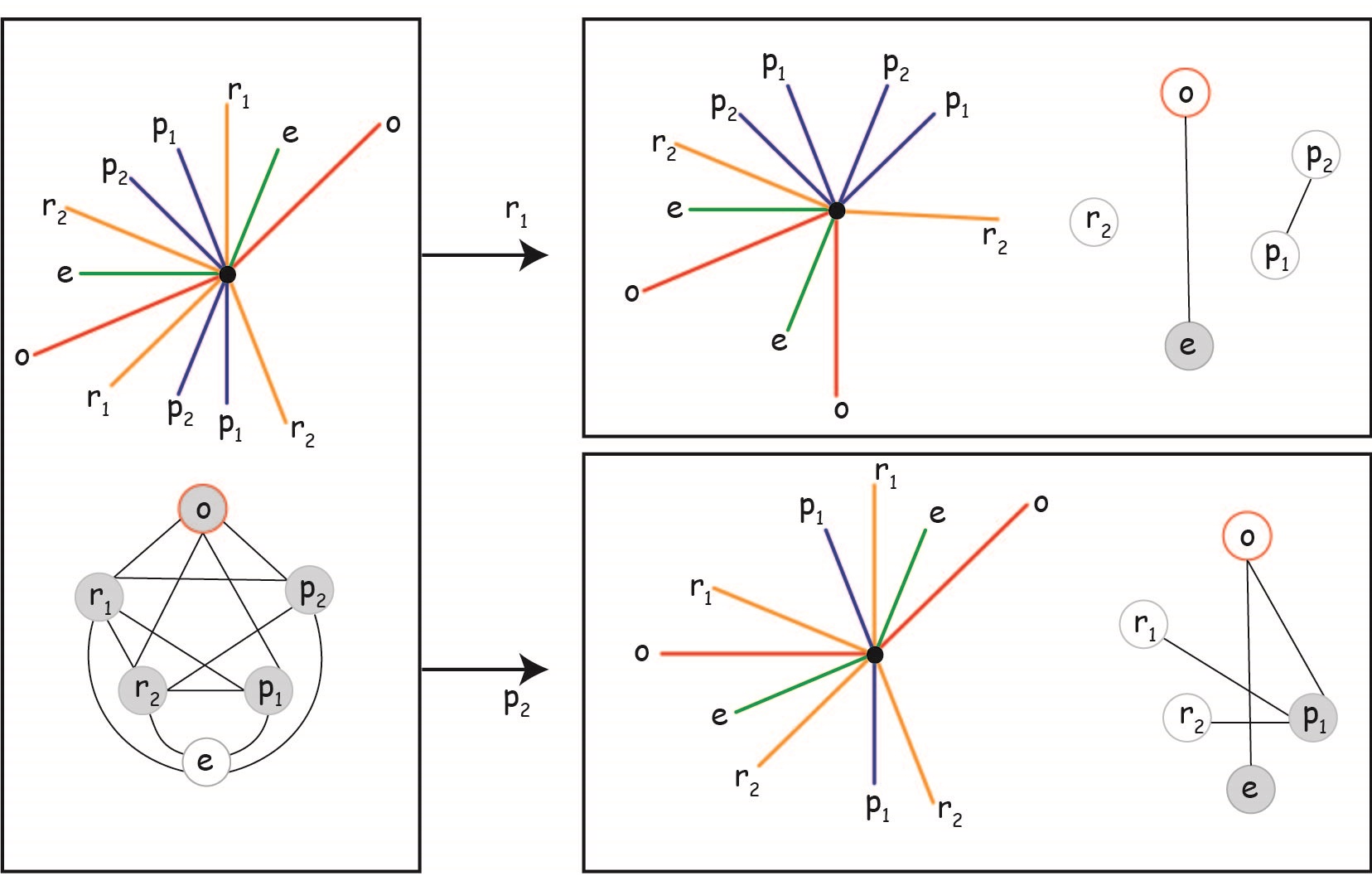}
     \caption{Each box represents a scheme with its interleaving graph, note that in all these schemes the loop $o$ is orienting; right: the impact of applying the one-sided loop on $r_1$ (top) and $p_2$ (bottom).  }
    \label{one-sided moves}
\end{figure}

Note that such a move can change the number of connected components in the interleaving graph, and might increase the number of orientable components, as is the case when we apply the one-sided loop move on $o$ in Figure~\ref{one-sided moves}.

\begin{lemma}\label{orienting in one-sided move }
Let $G$ be a scheme with orienting loop $o$ and a one-sided non-orienting loop $r$. Let $G^{'}$ be the graph we obtain after removing $r$ and flipping its wedge. The loop $o$ is orienting in $G^{'}$.
\end{lemma}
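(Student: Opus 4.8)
\textbf{Proof plan for Lemma~\ref{orienting in one-sided move }.}
The plan is to use the characterization of orienting loops from Lemma~\ref{orienting loops}, which states that in a cellularly embedded one-vertex graph a loop is orienting if and only if its ends enclose the ends of every two-sided loop and alternate with the ends of every one-sided loop. So I would verify that after removing the one-sided non-orienting loop $r$ and flipping one of its wedges, the loop $o$ still satisfies this two-part condition with respect to the loops of $G'$. A preliminary observation is that since $r$ is non-orienting, $r \neq o$, and since $o$ is orienting, by Lemma~\ref{orienting loops} the ends of $r$ and $o$ alternate in $G$; hence exactly one end of $r$ lies in each wedge of $o$.

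The key step is to track how a one-sided loop move affects interleaving with $o$, and I would do this via Lemma~\ref{one-sided move in interleaving } together with its proof: flipping a wedge of $r$ changes the sidedness (one-sided versus two-sided) of precisely the loops interleaving with $r$, complements the adjacency among the neighbors of $r$, and leaves everything non-adjacent to $r$ untouched. Now fix any loop $\ell \neq r, o$ and split into cases according to whether $\ell$ interleaves with $r$ in $G$. If $\ell$ does not interleave with $r$, then both its sidedness and its interleaving relationship with $o$ are unchanged by the move, so the required condition (enclose if two-sided, alternate if one-sided) is inherited from $G$. If $\ell$ does interleave with $r$, then its sidedness flips; the heart of the argument is to check that its interleaving relationship with $o$ also flips, so that the two changes cancel and the condition is preserved. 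Here one uses that $o$ and $r$ alternate: flipping the wedge of $r$ that, say, does not contain one chosen end of $o$ reverses the cyclic order of all half-edges inside that wedge, and for a loop $\ell$ with one end in that wedge (which is exactly the interleaving-with-$r$ condition, up to choosing which wedge to flip) this swaps whether $\ell$'s two ends lie on the same side of $o$ or on opposite sides. I expect the main obstacle to be bookkeeping the positions of the four relevant ends ($o$'s two ends, $\ell$'s two ends) relative to the flipped wedge carefully enough to conclude that interleaving with $o$ is indeed toggled exactly when interleaving with $r$ holds; a picture in the spirit of Figure~\ref{one-sided moves} would make this transparent.

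Finally, I would note that cellularity is preserved: removing a loop and flipping a wedge corresponds to a valid modification of the embedding scheme, and since $r$ is non-orienting the resulting $G'$ remains non-orientable with $eg(G') = eg(G) - 1$ (as already recorded in the one-sided loop move), so Lemma~\ref{orienting loops} applies to $G'$. Combining this with the case analysis above, $o$ encler\ldots{} encloses the ends of every two-sided loop of $G'$ and alternates with the ends of every one-sided loop of $G'$, hence $o$ is orienting in $G'$. (An alternative, possibly cleaner, route is topological: a one-sided loop move on $r$ amounts to cutting the surface along $r$ and regluing, an operation supported away from $o$ up to isotopy; since cutting along a non-orienting one-sided arc keeps the surface non-orientable and does not affect the property that cutting further along $o$ orients it, $o$ stays orienting. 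I would likely present the combinatorial argument as the main proof and mention this viewpoint as a remark.)
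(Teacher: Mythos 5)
Your proposal is correct and follows essentially the same strategy as the paper: invoke Lemma~\ref{orienting loops} to characterize orienting loops, observe that $o$ and $r$ interleave because $r$ is one-sided, and then use Lemma~\ref{one-sided move in interleaving } (complementation of the neighborhood of $r$ in $I_G$, plus sidedness flip for $r$'s neighbors) to verify case-by-case that $o$ still alternates with every one-sided loop and encloses every two-sided loop in $G'$. The only cosmetic difference is that you re-derive the toggling of the $o$--$\ell$ adjacency geometrically via wedge positions, whereas once Lemma~\ref{one-sided move in interleaving } is in hand this follows immediately from the fact that both $o$ and $\ell$ are neighbors of $r$ in $I_G$, which is how the paper phrases it.
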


\begin{proof}
We need to show that after removing $r$, the loop $o$ is connected to all the one-sided vertices in the interleaving graph and it is not connected to any two-sided vertex (Lemma \ref{orienting loops}). We know that at the start $o$ is connected to $r$. Any one-sided loop that $r$ used to be adjacent to is now two-sided and since $o$ used to be connected to these loops, by complementing the subgraph induced by the adjacent vertices to $r$ (Lemma \ref{one-sided move in interleaving }), now it is not connected to them. Similarly, we can see that if $r$ used to be connected to two-sided loops, after flipping the wedge, they get one-sided and since the loop $o$ was not connected to any two-sided vertex before, now it gets connected to them. The situation for the loops to which $o$ was connected and $r$ was not, remain unchanged.
\end{proof}

\begin{lemma}\label{forienting}
Let $G$ be a scheme with only orienting and non-separating two sided loops. There exists an orienting loop $o$ followed by a non-separating two sided loop $t$. If we let $G^{'}$ be the graph that we obtain by replacing $o$ with the loop $o^{'}$ that is the concatenation of $o$ and $t$ in $G$, and $G^{''}$ be the graph we obtain by applying a one-sided loop move on $o^{'}$ in $G'$, then the loop $t$ is orienting in $G^{''}$.
\end{lemma}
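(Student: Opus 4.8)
The goal is to show that after the concatenation move (turning $o$ into $o'=o\cdot t$) and then a one-sided loop move on $o'$, the two-sided loop $t$ becomes orienting in the resulting scheme $G''$. The natural strategy is to work entirely in the interleaving graph and use the combinatorial characterization of orienting loops via interleaving patterns, combining Lemma~\ref{orienting loops} with Lemma~\ref{one-sided move in interleaving }.

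First I would record what $I_{G'}$ looks like. The hypothesis is that $G$ has only orienting loops and non-separating two-sided loops; by Lemma~\ref{orienting loops} every one-sided (orienting) loop interleaves with every other one-sided loop, and an orienting loop interleaves with the two-sided loops exactly according to the alternate/enclose dichotomy — in fact since $o$ is orienting, $o$ interleaves with no two-sided loop and with every one-sided loop. Picking $o$ adjacent to a two-sided loop $t$ in the rotation and forming $o'=o\cdot t$, I would argue that in $I_{G'}$ the vertex $o'$ now is adjacent to exactly those loops whose interleaving status with $o$ and with $t$ differs: so $o'$ is adjacent to $t$'s former neighbors that were not $o$-neighbors, and loses adjacency to $o$'s former one-sided neighbors that were also $t$-neighbors, etc. The cleanest way to see this is the standard "concatenation = symmetric difference of interleaving sets" fact: an end separates $o$ and $t$'s ends iff it interleaves $o\cdot t$ but not exactly one of $o,t$; I would spell this out with a short picture/parity argument using that $t$ sits immediately next to one end of $o$, so $o\cdot t$ occupies the union wedge and its ends are the outer ends of $o$ and $t$.

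Then I apply Lemma~\ref{one-sided move in interleaving }: the one-sided loop move on $o'$ deletes $o'$ and complements the subgraph induced by $N(o')$. I must verify the two conditions of Lemma~\ref{orienting loops} for $t$ in $G''$: (i) $t$ interleaves every one-sided loop of $G''$, and (ii) $t$ interleaves no two-sided loop of $G''$. For (i) and (ii) I track each other loop $\ell$ through the two moves by cases according to whether $\ell$ interleaved $o$, whether it interleaved $t$, and whether it interleaved $t$'s neighbor-relations that get complemented — this is a finite case check, parallel to the one done in Lemma~\ref{orienting in one-sided move }. The key point is that $t$ is a neighbor of $o'$ (they are adjacent in the rotation, hence interleave after concatenation — or one checks directly $t$ interleaves $o'$ since $o'$'s ends straddle $t$'s wedge), so $t$ is among the complemented vertices; the complementation plus the sidedness flips induced by the one-sided move conspire to make $t$'s neighborhood in $G''$ equal exactly the one-sided vertices, which by Lemma~\ref{orienting loops} (direct implication would need cellularity, so I use the reverse implication: the interleaving pattern forces $t$ to be orienting) gives the claim. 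I would also note that $G''$ is non-orientable with the correct Euler genus so that the reverse implication of Lemma~\ref{orienting loops} applies.

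The main obstacle I expect is bookkeeping the effect of the concatenation move on $I_{G'}$ precisely — in particular getting the symmetric-difference description of $N(o')$ right and handling the degenerate cases (loops interleaving both $o$ and $t$, loops interleaving neither, and the interaction at the shared wedge where $t$ abuts $o$). Once that description is pinned down, verifying conditions (i)–(ii) for $t$ after the complementation is a routine case analysis entirely analogous to Lemmas~\ref{orienting in one-sided move } and~\ref{one-sided move in interleaving }. A secondary subtlety is choosing which wedge of $o'$ to flip: the statement should hold for the relevant choice(s) made by the algorithm, so I would make sure the wedge-choice used in Step 3.2.a is the one for which the neighborhood computation goes through, or else show it works for either wedge.
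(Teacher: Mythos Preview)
Your overall framework---work in the interleaving graph, compute $N(o')$, apply Lemma~\ref{one-sided move in interleaving }, then verify the criterion of Lemma~\ref{orienting loops} for $t$---is exactly the paper's approach. But there is a concrete error in your ``key point'': you assert that $t$ is a neighbour of $o'$, and this is false. When $o$ is replaced by $o'=o\cdot t$, the second end of $o'$ lands just \emph{past} the far end of $t$, so both ends of $t$ lie in a single wedge of $o'$ (this is precisely the ``wedge of $o'$ that encompasses $t$'' referred to in the description of the concatenation move). Hence $t\notin N(o')$. If $t$ were in $N(o')$, the complementation would flip $t$'s sidedness and swap its adjacencies inside $N(o')$, and you would end up showing that $t$ is adjacent exactly to the two-sided loops of $G''$---the opposite of orienting. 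So this is not a harmless slip; it derails the argument.

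You are also working harder than necessary. Under the hypothesis that every one-sided loop of $G$ is orienting, Lemma~\ref{orienting loops} forces $N(o)$ to consist exactly of the other orienting (one-sided) loops and $N(t)$ to consist only of two-sided loops; in particular $N(o)\cap N(t)=\emptyset$, so the symmetric-difference bookkeeping collapses to $N(o')=N(o)\sqcup N(t)$ with no delicate cases. The paper exploits this: since $t\notin N(o')$, its adjacencies and sidedness are untouched by the one-sided move on $o'$; the complementation flips $N(o)$ to two-sided and $N(t)$ to one-sided, so the one-sided loops of $G''$ are exactly $N(t)=N_{G''}(t)$, and Lemma~\ref{orienting loops} finishes. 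Your worry about the choice of wedge is also unnecessary here: Lemma~\ref{one-sided move in interleaving } holds for either wedge, and the interleaving-graph computation above is wedge-independent.
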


\begin{proof}
Since there exist only orienting and two-sided non separating loops, and there is at least one of each, a pair of consecutive ones must exist.
First we show that $t$ and $o$ belong to different components in $I_G$. Since $o$ is orienting and there is no one-sided non-orienting loop in the scheme, then the component of $o$ is a complete graph with only orienting loops. Therefore, $t$ does not belong to this component and everything in the component of $t$ is two-sided. Replacing $o$ by $o^{'}$ corresponds to replacing the vertex $o^{'}$ with $o$ in $I_G$ and connecting it to the neighbors of $t$, since $o^{'}$ now interleaves with every loop that $t$ interleaves with.
Now applying the one-sided loop move on $o^{'}$ makes every neighbor of $t$ one sided and all the orienting vertices (that were formerly adjacent to $o$) two-sided. Therefore the only one-sided loops in the new scheme are the neighbors of $t$ and $t$ is not adjacent to any two-sided vertices since everything in its component was two-sided before. By Lemma \ref{orienting loops}, $t$ is orienting in $G^{''}$.
\end{proof}

Since the contractible loop move clearly preserves orienting loops, Lemmas~\ref{orienting+separating} (note that the dragging move first adds an orienting loop to the orientable part), \ref{orienting in one-sided move } and \ref{forienting} imply the following Corollary \ref{conservation of orienting }.
\begin{corollary}\label{conservation of orienting }
Let $G$ be a one-vertex scheme with an orienting loop. Let $G'$ be the graph on which the modified algorithm recurses when applying one of the following moves:
\begin{itemize}
   \item A contractible loop move.
   \item A one-sided loop move on a one-sided non-orienting loop.
    \item The concatenation move on an orienting loop.

\end{itemize}
Then $G'$ has an orienting loop. Likewise, when the modified algorithm applies a gluing and dragging move on a separating loop $s$, the two subgraphs $G_1$ and $G_2$ on which it recurses have an orienting loop.
\end{corollary}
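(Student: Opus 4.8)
The plan is to assemble Corollary~\ref{conservation of orienting } directly from the three structural lemmas just proven, handling each move in the statement separately and then observing that these cover all the recursive calls made by the modified algorithm in Steps 1, 2, 3.1 and 3.2.a. First I would dispatch the easy cases. The contractible loop move only deletes a loop with consecutive ends; deleting such a loop changes no signatures and no interleavings among the remaining loops, so by Lemma~\ref{orienting loops} the characterization of an orienting loop (enclosing every two-sided loop, alternating with every one-sided loop) is unaffected, hence the orienting loop of $G$ remains orienting in $G'$. For the one-sided loop move applied in Step~3.1, the loop removed is a one-sided \emph{non-orienting} loop $r$, so $r$ is distinct from the orienting loop $o$, and Lemma~\ref{orienting in one-sided move } says verbatim that $o$ survives as an orienting loop in $G'$. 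For the concatenation move applied in Step~3.2.a, the algorithm is in the situation where all one-sided loops are orienting and there is a two-sided loop; I would first note that every two-sided loop present at this stage must be non-separating, because Step~2 has already been exhausted, so Lemma~\ref{forienting} applies and tells us that after replacing $o$ by $o'$ and performing the one-sided loop move on $o'$, the two-sided loop $t$ becomes orienting in $G''$ — and $G'' $ is exactly the scheme the algorithm recurses on in this step.

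The remaining case is the gluing-and-dragging move of Step~2 on a non-contractible separating loop $s$. Here I would invoke Lemma~\ref{orienting+separating}: since $G$ has an orienting loop, $s$ splits $G$ into an orientable subgraph and a non-orientable subgraph, and the subgraph $G_1$ inheriting the original orienting loop $o$ keeps it (removing $s$ and splitting the vertex does not disturb the enclosing/alternating pattern of $o$ with respect to the loops of $G_1$, by the same reasoning as in the reverse implication of Lemma~\ref{orienting loops}), so $G_1$ has an orienting loop. For the orientable side $G_2$, the dragging move (as set up in the text via Lemma~\ref{adding orienting}) explicitly adjoins a one-sided loop $o$ with consecutive ends to $G_2$ before recursing, and Lemma~\ref{adding orienting} guarantees that this added loop is orienting; so the scheme on which the algorithm actually recurses on the $G_2$-side does have an orienting loop. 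This is the subtlety the parenthetical remark in the corollary statement is flagging, and I would make it explicit in the write-up.

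Finally I would close the loop by remarking that Steps~1, 2, 3.1 and 3.2.a are the only steps of the modified algorithm that recurse, and each has just been shown to preserve the existence of an orienting loop; Step~3.2.b is a base case and does not recurse. Combined with Lemma~\ref{non-orientability} (which, together with the preprocessing Step~A that installs an orienting loop via Lemma~\ref{reducing}, guarantees the initial scheme has one), this gives by induction that every scheme arising in any recursive call of the algorithm has an orienting loop — which is the statement the corollary is used for downstream. I do not anticipate a genuine obstacle here: the content is entirely in the three lemmas already in hand, and the only care needed is the bookkeeping point that in Step~2 the relevant recursive input on the orientable side is $G_2$ \emph{augmented by the dragging loop}, not $G_2$ itself. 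I would state that caveat plainly rather than let it hide in the phrase ``the dragging move first adds an orienting loop to the orientable part.''
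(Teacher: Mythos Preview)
Your proposal is correct and follows exactly the same approach as the paper, which simply remarks that the corollary is an immediate consequence of Lemmas~\ref{orienting+separating}, \ref{orienting in one-sided move } and \ref{forienting}, together with the observation that the contractible loop move trivially preserves orienting loops and the parenthetical note that the dragging move first adds an orienting loop to the orientable part. Your write-up is in fact more careful than the paper's one-line justification, particularly in making explicit that the recursive input on the orientable side of Step~2 is $G_2$ augmented by the auxiliary loop, and in checking that the hypotheses of Lemma~\ref{forienting} (no separating loops remain) are met at Step~3.2.a.
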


The next lemma explains our choice of rule in Step 3.1:

 \begin{lemma}\label{first order}
Let $G$ be a one-vertex scheme with an orienting loop and no non-contractible separating loop such that $I_G$ has only one non-trivial component. Then $G$ can be drawn by exclusively applying a sequence of contractible loop moves, one-sided loop moves and concatenation moves.
 \end{lemma}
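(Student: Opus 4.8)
The plan is to show that in the given situation, the modified algorithm will cycle through only Steps 1, 3.1, and 3.2.a (never Step 2 or Step 3.3), and that it terminates. Since there are no non-contractible separating loops to begin with, Step 2 never fires at the outset; the danger is that a one-sided loop move in Step 3.1 creates a new separating loop (an isolated two-sided vertex in the interleaving graph that wasn't there before). By Lemma~\ref{one-sided move in interleaving }, a one-sided loop move removes a vertex $r$ and complements the subgraph induced by its neighbors. So a new isolated two-sided vertex can arise in two ways: either a neighbor of $r$ gets all its ``internal'' adjacencies complemented away and its sidedness flipped to two-sided, or a former neighbor of $o$ (now possibly severed) becomes isolated. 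The rule in Step 3.1 — choose the one-sided non-orienting loop whose removal-and-flip maximizes the number of one-sided loops — is precisely designed to prevent this, and I would make this the crux of the argument.

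First I would record the basic structural facts: by Lemma~\ref{one nonorientable component}, the component of $o$ is the unique non-orientable component of $I_G$, and by hypothesis it is the unique non-trivial component, so in fact $I_G$ is a single non-trivial component (containing $o$) plus possibly some isolated two-sided vertices — but those isolated vertices would be non-contractible separating loops (or contractible ones), and non-contractible ones are excluded by hypothesis while contractible ones are stripped by Step 1. So after exhausting Step 1, $I_G$ is connected and contains $o$. Next, by Corollary~\ref{conservation of orienting } (via Lemmas~\ref{orienting in one-sided move } and \ref{forienting}), an orienting loop persists under the moves we use, so the hypothesis ``has an orienting loop'' is maintained through the recursion. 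Thus if I can show no separating loop is ever created, the algorithm stays within Steps 1, 3.1, 3.2.a, and when it reaches the state ``all one-sided loops are orienting and there are no two-sided loops'' it finishes in Step 3.2.b with one cross-cap; strictly, Step 3.2.b is itself a terminating base case, so ``drawn by a sequence of contractible loop moves, one-sided loop moves and concatenation moves'' is the correct final statement.

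The main obstacle — and the heart of the proof — is showing that the greedy choice in Step 3.1 keeps $I_G$ free of new isolated two-sided vertices. Here I would lean on the analogy with the Hannenhalli--Pevzner / Bergeron theory of sorting signed permutations: the one-sided loop move is exactly the ``good'' reversal on an ``oriented edge'' in that theory, and the relevant lemma is that one can always pick an oriented edge whose reversal does not create new ``unoriented components'' (here: new orientable, in particular trivial orientable, components). Concretely, I would argue: suppose for contradiction that \emph{every} one-sided non-orienting loop $r$ in the current (connected, orienting-loop-containing) scheme, when removed with a flip, creates a new separating loop. Note we also have the freedom to choose \emph{which} wedge of $r$ to flip (this is flagged in the One-sided loop move and again in Step 3.1 via ``maximizes the number of one-sided loops''), which gives two candidate complementations per loop. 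Creating an isolated two-sided vertex strictly decreases the count of one-sided loops relative to the best achievable; I would show that among all $(r,\text{wedge})$ choices, the one maximizing the number of one-sided loops cannot produce such a vertex, because a short combinatorial argument on the interleaving graph (the complemented neighborhood of $r$ together with $o$) exhibits an alternative choice with strictly more one-sided loops — contradicting maximality. This is the step where I expect to spend the most care, and where invoking the corresponding fact from \cite{hannenhalli1999transforming,bergeron2001very} in the translated language of interleaving graphs does most of the work; the rest (termination, exhaustiveness of the three move types, reduction of Step~2 and Step~3.3 to ``cannot occur'') is then routine bookkeeping using Lemmas~\ref{orienting loops}, \ref{one-sided move in interleaving }, \ref{orienting+separating} and Corollary~\ref{conservation of orienting }.
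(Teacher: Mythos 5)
Your overall plan matches the paper's proof: exhaust contractible loops, observe that the orienting loop persists (Corollary~\ref{conservation of orienting }) and forces a unique non-orientable component (Lemma~\ref{one nonorientable component}), then argue that the greedy rule in Step~3.1 never manufactures a new non-trivial orientable component, so Step~2 never fires and the algorithm stays in Steps~1, 3.1, 3.2.a, 3.2.b. You also correctly identify the concatenation move as the device that handles the situation where the non-trivial component is orientable.

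There is, however, a genuine gap at exactly the place you flag as ``where I expect to spend the most care'': you do not actually supply the exchange argument showing that the maximizing choice avoids creating orientable components; you only announce that such an argument exists by analogy with Hannenhalli--Pevzner/Bergeron. The paper's proof gives this argument explicitly, and it is short: suppose applying the one-sided loop move on $r$ maximizes the number of one-sided loops but still disconnects some non-trivial orientable component $U$ from the rest. Pick a vertex $i\in U$ that was adjacent to $r$ (such an $i$ exists, and it was one-sided before the move). Writing $O(\cdot)$ and $N(\cdot)$ for the numbers of two-sided and one-sided neighbours, one shows $O(i)\geq O(r)$ (every two-sided neighbour of $r$ becomes one-sided and hence cannot lie in $U$, so it was adjacent to $i$) and $N(r)\geq N(i)$ (symmetric reasoning), while equality in both would force $i$ and $r$ to have identical neighbourhoods, making $U$ trivial. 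Since removing a loop $x$ changes the one-sided count by $O(x)-N(x)$, the loop $i$ beats $r$, contradicting maximality. Without this computation the proof is incomplete; citing the permutation literature is suggestive but not a substitute, since the translation to interleaving graphs is precisely what needs to be checked.

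One further small inaccuracy: you claim that the two choices of wedge to flip ``give two candidate complementations per loop.'' They do not. By Lemma~\ref{one-sided move in interleaving } the effect on $I_G$ of a one-sided loop move on $r$ is determined by $r$ alone --- the loops interleaving with $r$, and hence the complemented neighbourhood, are the same for either wedge. The wedge choice affects the resulting cross-cap drawing (which the paper exploits later, in Lemma~\ref{paths in one component graph}), not the interleaving graph, so the maximization in Step~3.1 ranges only over the choice of loop $r$.

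Finally, you do not spell out why the concatenation step (Step~3.2.a) makes progress. The paper observes that replacing $o$ by the concatenation $o'$ merges the component of $o$ with that of $t$, reducing the component count by one, and by Lemma~\ref{forienting} the subsequent scheme again has a single non-orientable non-trivial component, so one returns to the previous case. This small bookkeeping is needed for the induction to close.
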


This ensures that in Step 3.1 no non-contractible separating loop is created during the process, hence we can avoid increasing the number of orientable components. This proof mirrors results in the signed reversal distance theory (see Bergeron~\cite[Theorem~1]{bergeron2001very}) in which similar claims are proved in the context of applying reversals on permutations. 

\begin{proof}
In order to have a non-contractible separating loop, it is necessary to have either two non-trivial components, or an orientable non-trivial component and a trivial non-orientable component (an isolated one-sided vertex). 

Since there exists an orienting loop in the scheme and the ends of the orienting loop interleaves with those of every one-sided loop, then all one-sided loops belong to the same component in the interleaving graph and thus there exists only one non-orientable component.
By Corollary \ref{conservation of orienting }, in applying these moves, there is always an orienting loop in every step. This, together with Lemma \ref{one nonorientable component}, implies that the number of non-orientable components remains 1 through the algorithm. Therefore it suffices to prove that we can draw $G$ such that in each step we are not increasing the number of non-trivial orientable components.

\textbf{The non-trivial component is non-orientable.}
If there exists no one-sided non-orienting loop, then every loop in the scheme is orienting or contractible. In this case, the non-orientable genus is one and the result is trivial. Let us assume that there exists at least a one-sided non-orienting loop. We claim that if we choose the one-sided non-orienting loop such that flipping its wedge maximizes the number of one-sided loops, then we do not increase the number of orientable components. In Figure \ref{one-sided moves}, it is shown that applying a one-sided loop move on $p_2$ maximizes the number of one-sided loops but applying it on $r_1$, increases the number of orientable components and also turns $r_2$ to a separating loop for the new scheme. 

Let us assume that applying a one-sided loop move on the one-sided loop $r$ maximizes the number of one-sided loops and increases the number of non-trivial orientable components.
Let $i$ be a vertex that was adjacent to $r$, belonging to a component $U$ that got disconnected when complementing the subgraph induced by the neighbors of $r$. The vertex $i$ was one-sided and we claim that taking $i$ instead of $r$ would have created more one-sided vertices which is a contradiction.\\
Denote by $O(r)$ ($N(r)$) the number of two-sided (one-sided) loops adjacent to $r$.
Removing a one-sided loop is equivalent to removing the vertex in the interleaving graph and complementing the subgraph induced by its adjacent vertices. Therefore an edge $r$ increases the number of one-sided loops in the scheme by $O(r)-N(r)$.
All two-sided vertices adjacent to $e$ are one-sided after removing and therefore they are not in $U$, meaning that they were formerly connected to $i$. Therefore $O(i)\geq O(r)$.\\
Similarly, $r$ has to be connected to every one-sided vertices that were formerly connected to $i$, therefore $N(r)\geq N(i)$.\\
If $O(i)= O(r)$ and $N(r)=N(i)$, it means that they have the same neighbors and removing $r$ will isolate $i$ which contradicts the fact that the connected component $U$ is not trivial.
Therefore, applying a one-sided loop move on the loop $i$ creates more one-sided loops than the loop $r$, which is a contradiction. Thus, removing $r$ cannot add to the number of non-trivial orientable components.

\textbf{The non-trivial component is orientable.}
In this case, there exists only one one-sided loop $o$ which is orienting. Then we replace $o$ with its concatenation with one of the two-sided loops that is immediately next to it in the rotation. Denote this two-sided loop by $t$ and the concatenated loop by $o^{'}$. This corresponds to replacing the vertex $o$ by $o^{'}$ that is connected to all the neighbors of $t$ in the interleaving graph and therefore reduces the number of components by 1. Applying the one-sided loop move on $o^{'}$, isolates $t$ and makes it orienting for the new scheme (Lemma \ref{forienting}). The resulting graph falls in the last case where the non-trivial component is non-orientable and therefore we can draw it by applying only contractible loop moves and one-sided loop moves. 
This completes the proof.\end{proof}

\subsubsection{Correctness of the modified algorithm}

\begin{lemma}\label{modified algorithm}
Let $G$ be a graph cellularly embedded on a non-orientable surface. If $G$ has an orienting loop, applying the modified algorithm, we obtain a cross-cap drawing of $G$ with $eg(G)$ cross-caps such that each loop of $G$ enters each cross-cap at most twice. Otherwise, we obtain a cross-cap drawing of $G$ with $eg(G)$ cross-caps such that each loop of $G$ enters each cross-cap at most 6 times.
\end{lemma}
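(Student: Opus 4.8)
The plan is to analyze the modified algorithm step by step, maintaining two invariants along the recursion: (i) every recursive call is made on a one-vertex scheme that still possesses an orienting loop, and (ii) in any call where an orienting loop is present, the multiplicity of each loop at each cross-cap in the produced drawing is at most $2$. Invariant (i) is exactly the content of Corollary~\ref{conservation of orienting }, combined with Lemma~\ref{orienting+separating} for the separating case and with the observation from Step~A that preprocessing inserts an orienting loop when none exists; invariant (ii) is what forces the bound $2$ in the first sentence of the statement and, after composing with the blow-up factor $3$ coming from the preprocessing Lemma~\ref{reducing} (and Remark~\ref{coefficient3}), gives the bound $6$ in the second sentence. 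So the heart of the argument is establishing (ii), and for that I would induct on the number of loops of the scheme and check each branch of the main loop.

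First I would dispatch the easy branches. Step~1 (contractible loop move): removing a contractible loop does not change the number of cross-caps or any multiplicity, and reinserting it uses no cross-cap, so the bound is inherited; Corollary~\ref{conservation of orienting } keeps an orienting loop present. Step~3.2.b: with all one-sided loops orienting and no two-sided loop, Lemma~\ref{orienting loops} forces all loops to pairwise interleave, and the scheme is drawn with a single cross-cap entered once by each loop, so multiplicity $1$. Step~3.2.a (concatenation move on an orienting loop $o$ adjacent to a two-sided $t$): here I would invoke Lemma~\ref{forienting} to see that $t$ is orienting in the recursed scheme $G''$, so the inductive hypothesis applies to $G''$; then I check both sub-cases of the concatenation move (flipping the wedge that does or does not encompass $t$, Figures~\ref{concatenate1} and~\ref{concatenate2}) and observe that sliding $o'$ along $t$, or rerouting $o$ through the cross-cap next to $t$, changes the multiplicity of each loop at each cross-cap by at most a bounded amount that still keeps it $\le 2$ — in fact since $o'$ passes through only one cross-cap, and $o$ is orienting hence by Lemma~\ref{homology} passes through each cross-cap an odd number of times, the resulting multiplicities are controlled. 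Step~3.1 (one-sided loop move on a non-orienting loop $r$ chosen to maximize the number of one-sided loops): Lemma~\ref{orienting in one-sided move } gives that $o$ remains orienting after the move, so induction applies; the one-sided loop move adds one fresh cross-cap near the flipped wedge and drags $r$ together with the flipped wedge through it exactly once, so $r$ enters the new cross-cap once and the edges of the flipped wedge enter it once, and no other multiplicity is disturbed — bound $\le 2$ is preserved. Crucially, Lemma~\ref{first order} guarantees that once we have reached a saturated subscheme with a single non-trivial component, Steps~3.1 and~3.2 alone suffice, so no new non-contractible separating loop is ever created inside these recursive calls, and the case analysis is exhaustive.

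The main obstacle is Step~2, the combined gluing-and-dragging move on a separating loop $s$ chosen (by the component-tree rule) to peel off a non-root leaf. By Lemma~\ref{orienting+separating} this $s$ splits $G$ into a non-orientable part $G_1$ carrying the orienting loop and an orientable part $G_2$; by Corollary~\ref{conservation of orienting } (the dragging move first attaches an auxiliary orienting loop $o$ to $G_2$) both recursed schemes have an orienting loop, so the inductive hypothesis applies to both $H_1$ and $H_2$. The gluing move itself introduces no crossings and no new cross-cap (Lemma~\ref{adding separating}). The delicate point is the dragging move: after drawing $G_2+\{o\}$ we route $s$ by following $o$ twice through all of $G_2$'s cross-caps, and then to restore minimality we delete the distinguished cross-cap $\mathfrak{c}$ (the first one the rightmost half-edge $i_1$ of $G_1$ passes through) and drag the half-edges $i_1,\dots,i_k$ through every cross-cap of $G_2$ along $o$, reattaching them on the far side to $i_{k+1},\dots,i_{2k}$. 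Here I need to bound how much this dragging inflates the multiplicity at each cross-cap of $G_2$: the half-edges $i_1,\dots,i_k$ each gain one passage through each $G_2$-cross-cap, but these were precisely the half-edges that used to pass through $\mathfrak{c}$ (now deleted), and since $o$ is orienting it passes through each $G_2$-cross-cap exactly once (Remark~\ref{enteringonce}), so a single edge of $G$ contributes at most its old $\le 2$ passages at any surviving cross-cap; one must check that an edge does not simultaneously have two half-edges being dragged through the same cross-cap in a way that stacks to more than $2$, which follows from the structure of the half-edge sequence $(i_1,f_1,\dots,i_{2k},f_o^1)$ around $\mathfrak{c}$ and the parity argument (Lemma~\ref{odd non-orientable genus}) ensuring the half-edges reattach correctly. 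Assembling these per-branch checks by induction yields a drawing of $G$ with $eg(G)$ cross-caps and multiplicity $\le 2$ per loop per cross-cap when an orienting loop is present; running the whole thing after the preprocessing of Lemma~\ref{reducing}, which only inflates multiplicities by a factor $3$ (Remark~\ref{coefficient3}), gives the bound $6$ in the general case, completing the proof.
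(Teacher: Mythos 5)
Your proposal is correct and mirrors the paper's proof: both induct over the recursive calls of the modified algorithm, use Corollary~\ref{conservation of orienting } together with Lemma~\ref{orienting+separating} to maintain an orienting loop, verify the per-cross-cap bound of $2$ branch by branch (your Step~2 discussion is slightly more elaborate than needed --- the key point is simply that each $G_1$-edge passes through the deleted cross-cap $\mathfrak{c}$ at most twice, so at most two of its sub-edges are dragged along $o$ and none of them touched the $G_2$-cross-caps before), and then absorb the factor $3$ from Lemma~\ref{reducing} and Remark~\ref{coefficient3} to obtain $6$ when no orienting loop is present initially. Note one small slip in Step~3.1: an edge with both ends in the flipped wedge enters the new cross-cap \emph{twice}, not once, though this does not affect your conclusion that the bound $\le 2$ is preserved.
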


\begin{proof}
By Lemma \ref{reducing}, Step A in the algorithm reduces the graph $G$ to a one-vertex scheme $\hat{G}$ that has an orienting loop such that a drawing $\hat{G}$ leads to a drawing for $G$. Let $\bar{G}$ be the scheme that we obtain after step B on $\hat{G}$. This step does not interfere with the Euler genus of the scheme. 

Thus, by Remark~\ref{coefficient3}, it is sufficient to prove that there is a cross-cap drawing for $\bar{G}$ with $eg(G)=eg(\bar{G})$ cross-caps in which each edge is passing through each cross-cap at most twice. We prove this claim for any one-vertex scheme $G$ with an orienting loop.

In order to prove this claim, we follow the recursive steps of the main loop of the modified algorithm, and thus provide a proof by induction on $eg(G)+|E(G)|$. \\

\textbf{Step 1.} We apply the contractible loop move on every contractible loop. By the induction hypothesis, we can obtain a drawing with $eg(G)$ cross-caps for the resulting scheme in which every loop passes through each cross-cap at most two times and the contractible loop is not using any of them.

\textbf{Step 2.} If there exists separating (non-contractible) loops, we deal with them in the prescribed order.
Take the separating loop $s$ and divide the scheme. By Lemma \ref{orienting+separating}, we know that one of these sub-graphs is orientable and the other one is non-orientable, without loss of generality let us assume that $G_1$ is non-orientable. We apply a combination of the gluing move and the dragging move on these sub-graphs: we add an auxiliary orienting loop $o$ to $G_2$. By the induction hypothesis, there are cross-cap drawings $H_1$ with $eg(G_1)$ cross-caps and a drawing $H_2$ with $eg(G_2)+1$ cross-caps for $G_1$ and $G_2+\{o\}$ so that each edge of $G_1$ and $G_2$ passes through each cross-cap at most twice. Let $H_{2}^{'}$ be the drawing for $G_2+\{o\}+\{s\}$ that we obtain as described in the dragging move. By Remark \ref{enteringonce} and by the induction hypothesis, we know that in $H_2$, the loop $o$ passes through each cross-cap exactly once. The loop $s$ is following $o$ twice and therefore it passes through each cross-cap in $H_2$ exactly twice.

The gluing move does not interact with the number of entrances for any loop. In the dragging move, every loop that is being dragged from $H_1$ to $H_2$ is following the auxiliary orienting loop $o$ in $G_2$.  Since each edge in $H_1$ passes through each cross-cap at most twice, therefore at most two of its sub-edges are being dragged along $o$ and therefore they pass through the cross-caps in $H_2$ at most twice.

Since our graph is non-orientable, and till here we only dealt with two-sided loops, not all of the edges can be orientable at this point. 

\textbf{Step 3.1.}
If there is one-sided non-orienting loop $r$ in $G$, we apply a one-sided loop move on the one that respects our prescribed order. Since $r$ is non-orienting, the scheme $G^{'}$ obtained after removing $r$ and flipping its wedge is still non-orientable and by the induction hypothesis, there is a drawing $H^{'}$ with $eg(G)-1$ cross-caps for $G^{'}$ in which every loop passes through each cross-cap at most twice. After drawing $r$ and the new cross-cap, we can see that each loop in the flipped wedge passes through this cross-cap at most twice (the exact value depending on the number of its ends that it has in the flipped wedge).

\textbf{Step 3.2.a} We can find an orienting loop $o$ that is followed immediately by a two-sided loop $t$. We apply the concatenation move on $o$ and replace it with the non-orienting loop $o^{'}$ (denote by $G^{'}$, the scheme we obtain at this point; note that $eg(G^{'})=eg(G)$) and then we apply the one-sided loop move on $o^{'}$. By the induction hypothesis, there is a drawing $H^{'}$ for $G^{'}$ with $eg(G)$ cross-caps in which each loop passes through each cross-cap at most twice. Since we first apply a one-sided loop move on $o^{'}$, we can see that $o^{'}$ uses only one cross-cap. Depending on our choice in flipping a wedge of $o^{'}$, the loop $t$ is using this cross-cap either twice or not at all (the loop $t$ uses every cross-cap except this cross-cap exactly once, since after removing $o^{'}$, $t$ gets orienting, see Lemma \ref{forienting}). One can check that at the end for both choices of wedge, the loop $o$ passes through each cross-cap exactly once.

\textbf{Step 3.2.b} If all one-sided loops in the scheme are orienting and there is no two-sided loop in the scheme, then all of the orienting loops in the scheme are homotopic and we can draw them using one cross-cap with all of the loops passing though the cross-cap exactly once.\\

By Corollary \ref{conservation of orienting }, the graph has an orienting loop at each step of the algorithm and therefore by Lemma \ref{orienting+separating}, we never have a graph in which every loop is two-sided throughout the algorithm.
This completes the proof of the claim and we conclude. 
\end{proof}

Note that this proof is independent of the orders we defined for the loops in Step 2 and Step 3.1. These orders are shown to be useful in the next section.

Except the fact that our proof do not cover all the steps that happen in the original case (the case that there might not exist an orienting loop in the scheme), another difference between this proof and the proof of the Schaefer and {\v{S}}tefankovi{\v{c}} algorithm is in Step 3.2. In this Step, the original algorithm flips the wedge of $o^{'}$ that does not encompass the loop $t$. We prove the step for both choices of wedge because we favour the freedom to choose a wedge that we want to flip for our further purposes.

\subsection{The Non-orientable Canonical System of Loops}\label{the system}

The modified algorithm that we described in the previous sections provides us with a cross-cap drawing of any embedded graph $G$ where each edge of the graph enters each cross-cap at most six times, as per Lemma~\ref{modified algorithm}. Furthermore, our algorithm has the following key advantage compared to the algorithm of Schaefer and \v{S}tefankovi\v{c}: due to the order in which we choose the loops in Steps 2 and 3.1, we know that dragging moves and the other moves do not intermingle during the recursive calls of the algorithm. Indeed, first, by Lemma~\ref{first order}, when it draws a scheme with a single non-trivial component, it only relies on contractible, one-sided and concatenation moves. Second, due to the order in which we choose the loops in Steps 2 and 3.1, we know that whenever a dragging move is applied, the orientable sub-scheme on which we recurse has only one non-trivial component. In this section, we leverage these two key advantages to find a non-orientable canonical system of loops of small multiplicity.

\subsubsection{The Dual Graph of The Cross-cap Drawing}
In this rather tedious but straightforward section, we first investigate the effect of every move involved in the modified algorithm on the dual graph of the cross-cap drawing (viewed as a planar graph). Every edge  $e$ in a cross-cap drawing $H$, corresponds to an edge  $e^{*}$ and every face $f$ corresponds to a vertex $f^{*}$ in the dual graph. The vertex $v$ corresponds to the face $v^{*}$ and the cross-caps correspond to the other faces in the dual graph.

By Remark \ref{enteringonce}, in the dual graph to the drawing obtained via the modified algorithm, the orienting loop passes through each cross-cap exactly once. Thus if $G$ is drawn using $k$ cross-caps, an orienting loop in $G$, corresponds to a set of $k+1$ edges in the dual graph. Furthermore, there are exactly 2 dual edges corresponding to the orienting loop in each face of the dual.

The effect of a contractible loop move is as follows:
\begin{lemma}\label{contractible move in dual}
Drawing a contractible loop in a face $f$ of the cross-cap drawing corresponds to adding a vertex with degree one attached to the vertex $f^{*}$.
\end{lemma}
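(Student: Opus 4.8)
The statement to prove is Lemma~\ref{contractible move in dual}: drawing a contractible loop in a face $f$ of the cross-cap drawing corresponds, in the dual graph, to attaching a new degree-one vertex to $f^*$.

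\textbf{Approach.} The plan is to unwind the definitions of the contractible loop move and of the dual graph of a cross-cap drawing, and then simply track what the move does to faces and edges. Recall from the contractible loop move that a contractible loop $c$ with consecutive ends is drawn inside a single face without passing through any cross-cap; since its two ends at the vertex $v$ are consecutive, the loop $c$ together with the portion of $v$ between those two ends bounds a small disk that can be taken entirely inside the face $f$. So drawing $c$ is a purely planar, local modification of the cross-cap drawing: it does not interact with any cross-cap or any other edge.

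\textbf{Key steps.} First I would record the combinatorial picture: adding the contractible loop $c$ inside $f$ subdivides $f$ into two faces, one of which is the newly bounded disk $D$ (whose boundary is $c$ together with a sub-arc at the vertex), and the other is $f$ with a ``bite'' taken out; topologically the latter face is still an open disk and plays the role of the original $f$. The new drawing has exactly one extra edge, namely $c$, and exactly one extra face, namely $D$. Second, I would translate this to the dual: the dual vertex set gains one vertex $D^*$ corresponding to $D$, and the dual edge set gains one edge $c^*$. Since $c$ lies on the boundary between $D$ and the (modified) face $f$, the dual edge $c^*$ joins $D^*$ to $f^*$. No other dual edges are created or destroyed, because no other edge of the drawing has had its set of two incident faces changed. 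Finally, I would observe that $D$ has only the single edge $c$ on its boundary (the sub-arc at the vertex is not an edge of the cross-cap drawing, which by convention treats cross-caps as extra vertices and edges as the sub-edges of $G$), so $D^*$ is incident only to $c^*$, i.e.\ it has degree one. Hence the dual change is precisely the attachment of a degree-one vertex $D^*$ to $f^*$, which is the claim.

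\textbf{Main obstacle.} There is no serious mathematical obstacle here; this is a bookkeeping lemma. The only point requiring a little care is the convention about what counts as a ``face'' and an ``edge'' in a cross-cap drawing (the paper stipulates that cross-caps are extra vertices and the edges are the sub-edges of $G$), so that one correctly concludes $D$ is bounded by exactly one edge and therefore $D^*$ has degree one rather than two. I would make sure to invoke that convention explicitly. A secondary subtlety, worth a sentence, is to note that when several homotopic contractible loops appear the algorithm draws one and reintroduces the others parallel to it; iterating the lemma then just attaches several degree-one vertices (or a short path, depending on how the parallels are nested) to $f^*$, but for the single-loop statement this does not arise.
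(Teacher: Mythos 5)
Your proof is correct and takes essentially the same approach as the paper: the paper's own proof simply points to Figure~\ref{concatenating in dual} and asserts the conclusion, and your argument is the explicit bookkeeping that the figure encodes (one new face $D$ bounded by the single edge $c$, hence a degree-one dual vertex $D^*$ attached to $f^*$ by $c^*$). The only content in the paper's proof beyond the figure is the observation that this move does not alter any dual edge of the orienting loop, which is used later but is not part of the lemma statement; your analysis ("no other edge of the drawing has had its set of two incident faces changed") gives this as an immediate byproduct.
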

\begin{proof}
See Figure \ref{concatenating in dual}.
This case does not change the situation of any of the dual edges corresponding to the orienting loop.
\end{proof}

Let us assume there is a loop $s$ that separates $G$ into $G_1$ and $G_2$, where $G_1$ is non-orientable and $G_2$ is orientable. We glue the drawings $H_1$ and $H_2$ for $G_1$ and $G_2$ and then we apply a dragging move to this case. The following lemma explains what this move does to the dual graph.
In the lemma we use the notation introduced in the description of the dragging move. We denote the vertex associated to the root face $f^{i}_o$ in $H_i$ by $f_{o}^{i*}$ for $i\in\{1,2\}$. We use the notation $(i_1,f_1,\dots,i_{2K}, f_{o}^1)$ for the sequence of edges and faces around the eliminated cross-cap in the dragging move, and finally we denote by $o$ the auxiliary orienting loop drawn in $H_2$.

\begin{lemma}\label{separating in dual}
 Let $s$ be a loop that separates the scheme $G$ into the non-orientable subgraph $G_1$ and the orientable subgraph $G_2$. In this case, the gluing move, the dragging move and later drawing back the separating loop corresponds to:
\begin{itemize}

\item splitting $f_{o}^{1*}$ into two vertices $f_{o}^{11*}$ and $f_{o}^{12*}$ such that $f_{o}^{11*}$ inherits only $i^{*}_{1}$ and $f_{o}^{12*}$ inherits the rest of the edges incident to $f_{o}^{1*}$,
    \item removing the two dual edges corresponding to the loop $o$ incident to the vertex $f_{o}^{2*}$ in $v_{2}^{*}$,
    \item connecting $f_{o}^{11*}$ and $f_{o}^{12*}$ to the adjacent vertices to $o$ in the correct order by adding an edge for each one (this edges correspond to the segments of the separating loop that are attached to the vertex),
    \item connecting $f_{o}^{2*}$ to $f_{k}^{*}$ by adding an edge,
    \item replacing the dual edges corresponding to the segments of $o$ in $H_2$ by $k+2$ edges.
\end{itemize}
These operations are pictured in Figure \ref{glue+drag in dual}.
\end{lemma}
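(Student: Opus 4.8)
\textbf{Proof plan for Lemma~\ref{separating in dual}.}
The strategy is to track, one move at a time, what happens to the planar cross-cap drawing and then dualize each local modification, exactly as was done for the contractible loop move in Lemma~\ref{contractible move in dual}. Recall the sequence of primal operations: (i) the gluing move identifies the root wedges $f_o^1$ and $f_o^2$ of the drawings $H_1$ and $H_2$, (ii) an auxiliary orienting loop $o$ is added to $G_2$ and $s$ is drawn in $H_2$ following $o$ twice through all its cross-caps, (iii) the dragging move removes one cross-cap $\mathfrak c$ of $H_1$ (the first cross-cap met by the rightmost half-edge $i_1$), drags $i_1,\dots,i_k$ along $o$ through all the cross-caps of $H_2$, and reattaches them to $i_{k+1},\dots,i_{2k}$, and (iv) finally $o$ is erased. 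Each of these primal modifications is local in the planar drawing, so I would draw the corresponding before/after picture and read off the dual change; Figure~\ref{glue+drag in dual} records the composite effect. I expect the whole argument to be essentially a careful bookkeeping exercise, with the only subtlety being the orientation/ordering of the newly created dual edges.

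First I would analyze the removal of the cross-cap $\mathfrak c$. Around $\mathfrak c$ the boundary is the alternating cyclic sequence $(i_1,f_1,\dots,i_{2k},f_o^1)$ of half-edges and faces; once the crossing at $\mathfrak c$ is deleted and the half-edges $i_1,\dots,i_k$ are dragged away, the face $f_o^1$ (which is the root face that was just glued to $f_o^2$) is split: the piece that remains bounded by $i_1$ becomes $f_o^{11}$, carrying in the dual only the edge $i_1^*$, while the remaining edges formerly incident to $f_o^{1*}$ stay with $f_o^{12}$. This gives the first bullet. Next, when $o$ is erased at the end of the dragging move, the two dual edges of $o$ that lay inside the face $v_2^*$ of the vertex of $G_2$ — i.e.\ the two dual copies of $o$ incident to $f_o^{2*}$, using Remark~\ref{enteringonce}/the discussion preceding the lemma that an orienting loop contributes exactly two dual edges per face — disappear, giving the second bullet. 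The segments of the separating loop $s$ that remain attached to the base vertex are two arcs running from $f_o^{11}$ and $f_o^{12}$ along where $o$ used to be, hence in the dual they become single edges joining $f_o^{11*}$ and $f_o^{12*}$ to the vertices that were adjacent to $o$, in the cyclic order inherited from the rotation around the former cross-cap; this is the third bullet, and here is where I must be careful about which arc lands on which side, which is dictated by the labelling $i_1,\dots,i_k$ versus $i_{k+1},\dots,i_{2k}$ of the dragged versus undragged half-edges.

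For the last two bullets I would examine the other end of the drag. After the half-edges $i_1,\dots,i_k$ are pushed through all $k$ cross-caps of $H_2$ along $o$ and reconnected to $i_{k+1},\dots,i_{2k}$, the face $f_o^2$ at the root of $H_2$ becomes adjacent to the face $f_k$ that was previously the ``innermost'' face in the cross-cap sequence of $H_2$ along $o$; dually this is a single new edge $f_o^{2*}f_k^*$, the fourth bullet. Finally, the old segments of $o$ threading the $k$ cross-caps of $H_2$ are replaced by the reconnected edges $i_1\ast i_{k+1},\dots$, and a count of how many dual edges these contribute across the $k+1$ faces created by $o$ inside $H_2$ gives the replacement of the $o$-dual-edges by $k+2$ edges, the fifth bullet. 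The one step I expect to demand genuine care rather than routine picture-reading is the third bullet: getting the cyclic order of the attachments of $f_o^{11*}$ and $f_o^{12*}$ to the neighbours of $o$ right, since this ordering is exactly what will later let us route the canonical system of loops without extra crossings. Everything else I would present by appeal to the local pictures, citing Figure~\ref{glue+drag in dual}, much as Lemma~\ref{contractible move in dual} cites Figure~\ref{concatenating in dual}.
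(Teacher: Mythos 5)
Your overall strategy — track each primal sub-move of the gluing/dragging sequence, dualize it locally, and read off the composite effect from Figure~\ref{glue+drag in dual} — is exactly what the paper does, and the high-level attributions of bullets 1--3 to the gluing/reattachment and bullets 4--5 to the dragging are consistent with the paper's very terse proof.

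However, there is a genuine confusion in your treatment of bullets 4 and 5 that you should fix. In the paper, $k$ is \emph{not} the number of cross-caps of $H_2$: it is defined in the dragging move as half the number of half-edges passing through the eliminated cross-cap $\mathfrak c$ of $H_1$, and $f_k$ is the face of $H_1$ lying between $i_k$ and $i_{k+1}$ around $\mathfrak c$, i.e.\ between the last dragged half-edge and the first one that stays put. Your sketch instead takes ``$k$ cross-caps of $H_2$'' and calls $f_k$ ``the innermost face in the cross-cap sequence of $H_2$ along $o$,'' conflating two independent quantities. Under that reading your count for bullet 5 only happens to produce $k+2$ because you have relabelled the wrong parameter as $k$. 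The correct reason the dual edges of $o$ in $H_2$ are replaced by paths of length $k+2$ is that the $k$ dragged sub-edges $i_1,\dots,i_k$ run parallel to $o$ through every cross-cap of $H_2$, so each former face of $H_2$ that $o$ bounded is chopped into $k+1$ thinner faces, subdividing each $o$-dual-edge accordingly (the extra $+1$ accounting for the sub-edges of $s$ that also follow $o$ twice). Likewise bullet 4 is the dual record of the fact that, after the reconnection, the face $f_k$ of $H_1$ is dragged all the way through $H_2$ and ends up sharing an edge with the root face $f_o^2$; it is not a face of $H_2$. Once you correct the identification of $k$ and $f_k$ and redo the count for bullet 5 in those terms, the remaining bookkeeping is sound and matches the paper's figure-driven argument.
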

\begin{proof}
Splitting $f_{o}^{1*}$ and connecting it to two vertices formerly incident to $f_{o}^{1*}$ corresponds to the gluing move between the drawings. We can see in Figure \ref{glue+drag in dual} that these steps merge the faces $v_{1}^{*}$ and $v_{2}^{*}$ to the face $v^{*}$. Note that the edges that we add to connect these vertices correspond to the sub-edges of the separating loop $s$ that are incident to the vertex.  On the other hand, connecting $f_{o}^{2*}$ to $f_{k}^{*}$ and replacing the dual edges in $H_2$ by a path corresponds to the dragging move; see Figure \ref{glue+drag in dual}.

 \begin{figure}[t]
    \centering
    \includegraphics[width=\textwidth]{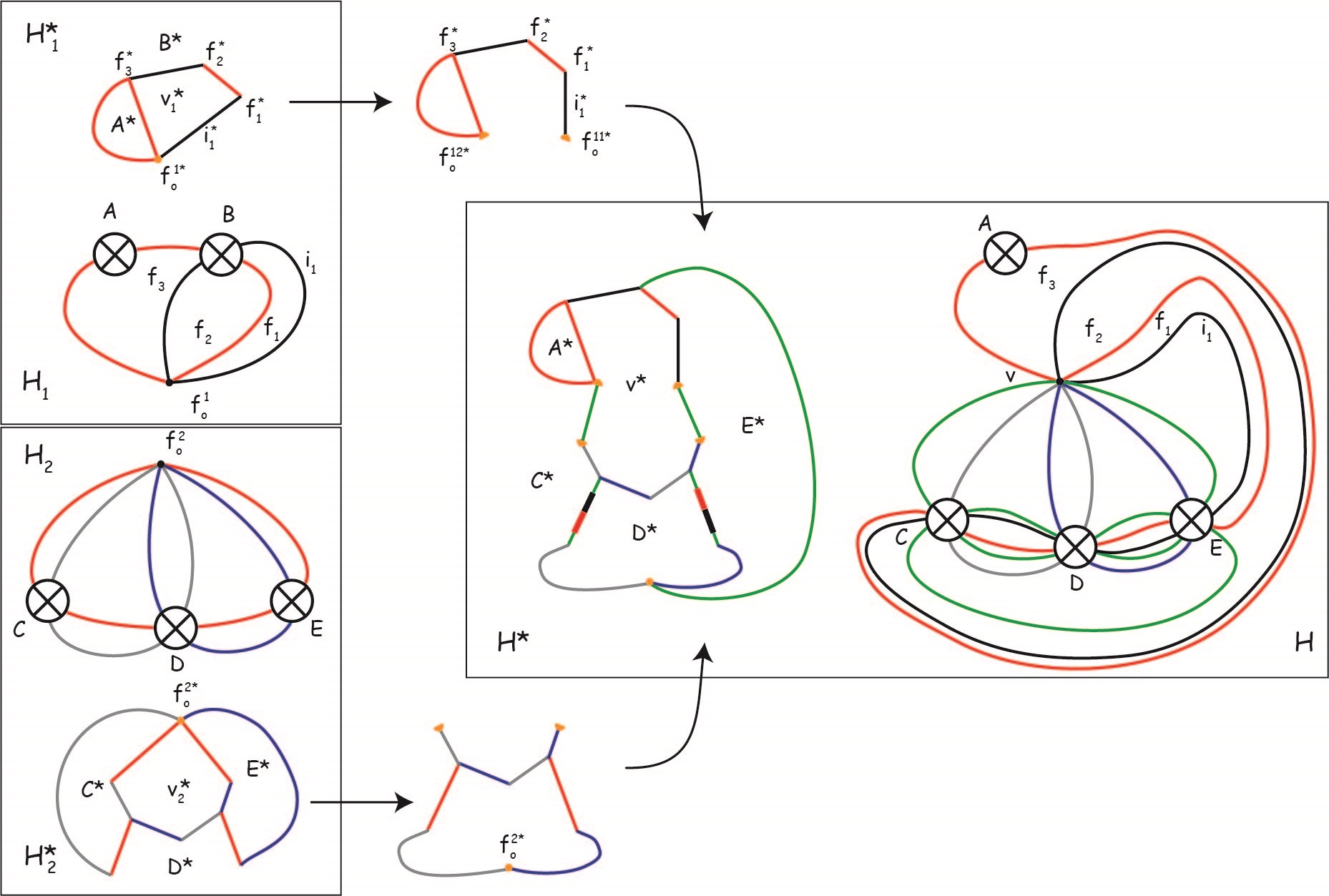}
     \caption{The impact of the gluing move and dragging move on the dual graph.}
    \label{glue+drag in dual}
\end{figure}

\end{proof}

\begin{remark}\label{former outrfaces}
Let us denote by $H$ the drawing of $G$ that we obtain by applying the five moves in Lemma \ref{separating in dual}. The only modification done on the sub-graph induced by the vertices that come from $H_{1}^{*}$ in $H^*$ is that we split the vertex $f_{o}^{1*}$ to  $f_{o}^{11*}$ and  $f_{o}^{12*}$. Note that both $f_{o}^{11}$ and $f_{o}^{12}$ are root faces in $G$.

The modifications in the subgraph induced by the vertices that come from $H^{*}_2$ is that we replace some edges by paths and we disconnect the two root faces adjacent to $f_{o}^{2*}$ by removing the incident edges. Also $f_{o}^{2*}$ gets connected to a face in $H_{1}^{*}$ and it is not necessarily a root face anymore.
\end{remark}

The effect of a one-sided loop move is as follows:
\begin{lemma}\label{1sided non-orienting in dual}
Adding a one-sided non-orienting loop $r$ with ends in root faces $a$ and $b$ in the drawing ($a$ and $b$ are possibly identical), corresponds to subdividing the face $v^{*}$ into two faces and adding a path of length $k+2$ from $a^{*}$ to $b^{*}$ where $k$ is the length of one of the paths from $a^{*}$ to $b^{*}$ in the face $v^{*}$. 
\end{lemma}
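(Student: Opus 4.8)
\textbf{Proof plan for Lemma~\ref{1sided non-orienting in dual}.}

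The plan is to track, face by face, what the one-sided loop move does to the cross-cap drawing and then dualize each local change. Recall that the one-sided loop move on $r$ takes a drawing $H'$ of the scheme $G'$ (where $r$ has been removed and one of its wedges flipped), adds a new cross-cap $\mathfrak{c}$ near the vertex, and drags $r$ together with every edge of the flipped wedge through $\mathfrak{c}$. I would first fix which wedge is flipped and observe that the two ends of $r$ land in two root faces, call them $a$ and $b$ (the degenerate case $a=b$ is handled identically). The key geometric observation is that, in the drawing $H'$, the faces $a^*$ and $b^*$ both lie on the boundary of the face $v^*$ dual to the vertex, and there are exactly two ``sides'' of $v^*$ joining $a$ to $b$ along the rotation, one of which bounds the wedge that gets flipped; let the edges crossed along that side number $k$, so there is a path of length $k$ from $a^*$ to $b^*$ running through $v^*$.

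Next I would dualize the two effects of the move separately. First, drawing the loop $r$ from $a$ to $b$ through the cross-cap $\mathfrak{c}$ subdivides $v^*$: the arc of $r$ cuts the face of the vertex into two pieces, so in the dual this is exactly the statement that $v^*$ is split into two faces. Second, I would analyze the edges dragged through $\mathfrak{c}$: the half-edges of the flipped wedge now pass through the new cross-cap, and each such sub-edge contributes new dual edges. Tracing the boundary of the new cross-cap face (as in Figure~\ref{glue+drag in dual} for the analogous dragging computation) shows that the primal edge $r$, once it enters $\mathfrak{c}$, gives rise to a sequence of $k+2$ sub-edges whose duals form a path from $a^*$ to $b^*$; the ``$+2$'' comes from the two segments of $r$ attaching it to the vertex on either side of the cross-cap, and the ``$k$'' counts the edges of the flipped wedge that $r$ now separates from the rest, which is precisely the length of the chosen $a^*$-to-$b^*$ path inside $v^*$. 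Combining these, the dual of the new drawing is obtained from the dual of $H'$ by subdividing $v^*$ and adding a path of length $k+2$ between $a^*$ and $b^*$.

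The main obstacle I anticipate is the bookkeeping of which edges get dragged and in what cyclic order they appear around the new cross-cap, so that the claimed path really has length exactly $k+2$ and attaches at $a^*$ and $b^*$ rather than at some neighboring faces; this is the same kind of careful local analysis carried out for the dragging move in Lemma~\ref{separating in dual}, and I would handle it by drawing the boundary walk of the new cross-cap face explicitly and reading off the duals. A secondary point to be careful about is that the statement only claims $k$ is the length of \emph{one} of the two $a^*$-to-$b^*$ paths in $v^*$ (the one on the flipped side); since flipping the other wedge would instead involve the complementary side, the asymmetry is genuine and must be stated as such. Once the boundary walk is drawn, the verification is routine, and I would also note that this move leaves the dual edges of the orienting loop untouched except for the ones incident to $v^*$, which get redistributed between the two faces into which $v^*$ is split — a remark that will be needed later when assembling the canonical system of loops.
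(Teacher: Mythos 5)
Your proposal takes essentially the same route as the paper: dualize the one-sided loop move face by face, observe that the two arcs of $r$ split $v^*$, and account for the $k+2$ by tracing the sub-edges that now run between $v$ and the new cross-cap $\mathfrak{c}$. The paper's own proof is much terser and leans on Figure~\ref{one-sided move in dual}, phrasing the operation as duplicating the $a^*$–$b^*$ segment of $\partial v^*$ and noting that the two possible wedge flips correspond to the two possible choices of side. One small imprecision in your write-up: you say that ``the primal edge $r$ \ldots gives rise to a sequence of $k+2$ sub-edges,'' but $r$ itself contributes only the $2$ sub-edges connecting $v$ to $\mathfrak{c}$; the remaining $k$ dual edges of the new path come from the sub-edges of the $k$ flipped-wedge loops between $v$ and $\mathfrak{c}$. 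You correct this in the very next clause, so the count and the argument go through; just align the two sentences. Your closing remark about the orienting-loop dual edges being preserved is also made in the paper, in the paragraph immediately following the proof, and is indeed used later in Lemma~\ref{paths in one component graph}.
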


\begin{proof}
Figure \ref{one-sided move in dual} depicts that the addition of the new cross-cap in the one-sided loop move corresponds to adding a duplicate of the set of edges and vertices between $a^{*}$ and $b^{*}$ in the face $v^{*}$.  Choosing different wedges of $r$ corresponds to choosing between the two different sequences from $a^{*}$ to $b^{*}$ in the face $v^{*}$ to duplicate.
\end{proof}
 \begin{figure}[t]
    \centering
    \includegraphics[width=.75\textwidth]{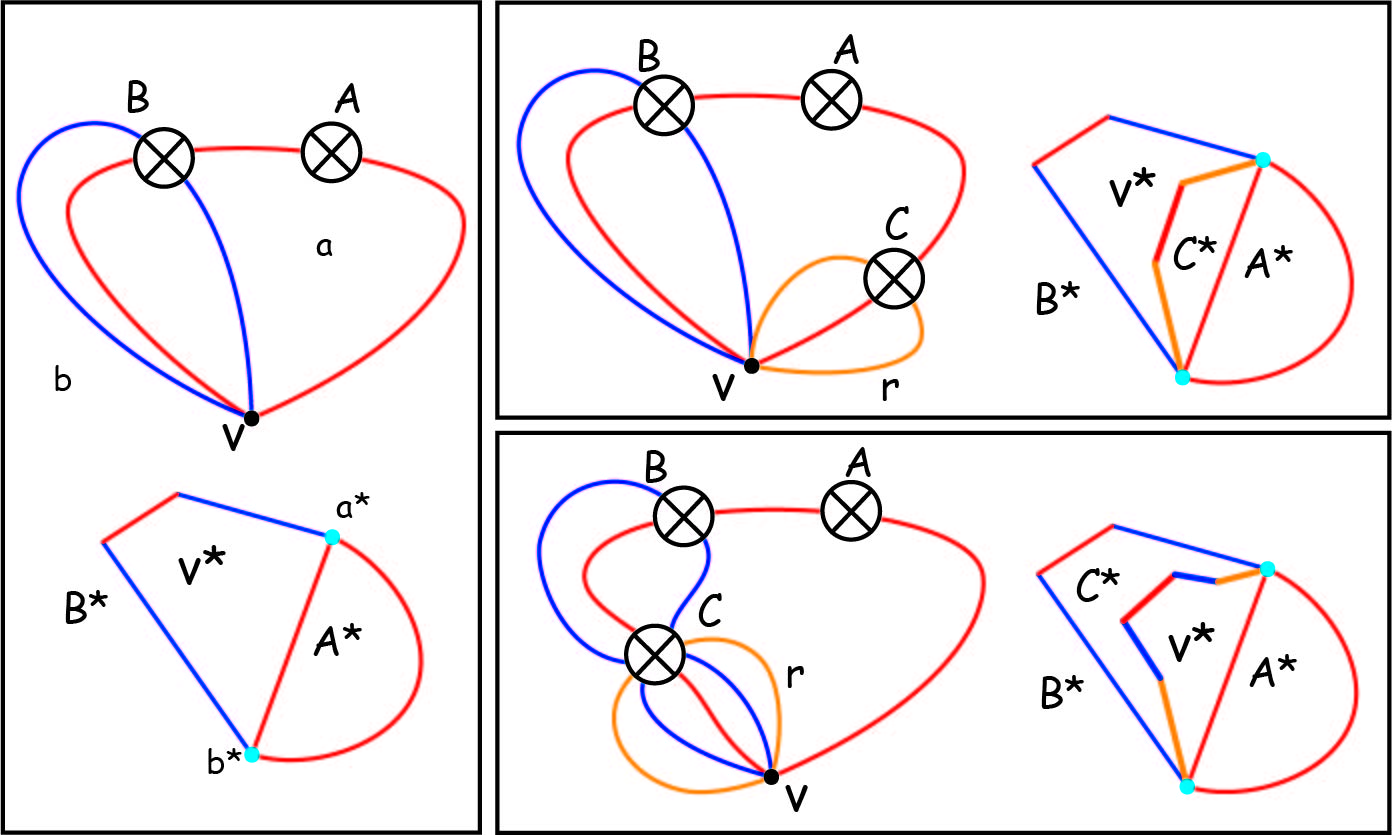}
     \caption{The impact of the one-sided loop move on the dual graph. Left: the drawing for the scheme before adding the one sided loop $r$ and its dual; right: the drawings and the dual graphs after drawing $r$ and the impact of flipping different wedges of $r$ is depicted.}
    \label{one-sided move in dual}
\end{figure}

Note that if there exists an orienting loop, it has exactly one end in each wedge of the loop $r$ and therefore $a^{*}$ and $b^{*}$ are separating the two edges corresponding to the orienting loop in the face $v^{*}$. The copy of this edge in the added path is dual to the new segment of the orienting loop. The yellow loop in Figure \ref{one-sided move in dual} is orienting and for example, the dual edge connecting the dual vertices $2^{'}$ and $3^{'}$ is the new dual edge associated to the orienting loop that is the copy of the edge connecting the dual vertex $2$ to $3$.

Finally, we analyze the effect of the concatenation move. Let $G$ be a scheme with no separating loop in which every one-sided loop is orienting and it has at least one two-sided loop. Let $o^{'}$ be the one-sided non-orienting loop obtained by concatenating the orienting loop $o$ and the two-sided loop $t$ which has an end immediately next to an end of $o$ in the rotation (step 3.2 of the algorithm). Denote by $G^{'}$ the scheme in which $o$ is replaced by $o^{'}$ and let $H^{'}$ be a drawing for $G^{'}$.  Note that after applying a one-sided loop move on $o^{'}$, $t$ gets orienting (by Lemma \ref{forienting}) and it goes through $eg(G)-1$ cross-caps. The loop $o^{'}$ is the last loop that is drawn in the algorithm and therefore it passes through only one cross-cap. Let us denote this cross-cap by $\mathfrak{c}$ . 
We denote by $t_i$, $1\leq i\leq eg(G)$, the dual edges to sub-edges of $t$ and by $o^{'}_1$ and $o^{'}_2$ the sub-edges of $o^{'}$, where $t_1$ corresponds to the sub-edge next to $o^{'}_1$.

Depending on the wedge of $o^{'}$ we choose to reverse, we proceed with concatenating $o^{'}$ with $t$ to get back the loop $o$ as discussed in the description of the concatenation move. The following lemma explains what happens in the dual graph of the cross-cap drawing when we apply the concatenation move in both cases.

The Lemmas \ref{1sided orienting in dual} and \ref{1sided orienting in dual(2)} explains the effect of the concatenation move in the dual graph:
\begin{lemma}\label{1sided orienting in dual}
When we reverse the wedge of $o^{'}$ that does not encompass $t$, concatenating the loop $o^{'}$ along $t$ corresponds to:
\begin{itemize}
    \item subdividing the edge adjacent to $t^{*}_{1}$ and $o^{'*}_1$ that is in the cyclic rotation of $\mathfrak{c}^{*}$
    \item contracting the edge $o^{'*}_1$
    \item subdividing the edges $t_{i}^{*}$ for $i\geq 2$
\end{itemize}
The added edges together with $o^{'*}_2$, correspond to the segments of $o$ in $H$.
\end{lemma}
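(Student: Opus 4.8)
The plan is to track carefully what the concatenation move does to the cross-cap drawing $H'$ of $G'$ locally, in a neighborhood of the cross-cap $\mathfrak{c}$ and along the two-sided loop $t$, and then translate each local change into the corresponding operation on the dual graph. Recall the geometric picture in Figure~\ref{concatenate1}: after the one-sided loop move on $o'$ (flipping the wedge that does \emph{not} contain $t$), the loop $o'$ passes through the single cross-cap $\mathfrak{c}$, and its sub-edge $o'_1$ is the one adjacent to $t_1$ in the rotation at $\mathfrak{c}$. To recover $o$ we detach the end of $o'$ next to $t$, slide it along $t$ (i.e.\ along all the sub-edges of $t$, which by Lemma~\ref{forienting} is orienting and uses $eg(G)-1$ cross-caps), and re-attach it at the vertex where $o$'s end was originally. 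So the first step of my proof will be to set up this local picture precisely with the notation $t_i$, $o'_1$, $o'_2$, $\mathfrak{c}$ already fixed before the statement.

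Next I would go change by change. Sliding the freed end of $o'$ off of $o'_1$ and across the region between $t_1$ and $o'_1$ inside the face $v^*$ around $\mathfrak{c}$: in the primal drawing the segment $o'_1$ is no longer traversed by $o$ near $\mathfrak{c}$, which dually means the dual edge $o'^{*}_1$ is contracted; and the act of sliding the endpoint past the edge that separates $t_1$ from $o'_1$ in the rotation around $\mathfrak{c}$ introduces a new crossing with that edge, which dually is a subdivision of the edge adjacent to $t^{*}_1$ and $o'^{*}_1$ in the cyclic rotation of $\mathfrak{c}^{*}$. Then sliding the end all the way along $t$ means the new curve runs parallel to each sub-edge of $t$, crossing each of the sub-edges $t_i$ with $i\ge 2$ exactly once (it was already incident to $t_1$ so that one is handled by the previous bullet), which dually subdivides each $t^{*}_i$ for $i\ge 2$. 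Assembling the new dual edges created by these subdivisions together with the surviving dual edge $o'^{*}_2$ gives exactly the dual edges of the segments of $o$ in the new drawing $H$, which is the final clause of the lemma. I would present this as a short enumeration mirroring the three bullets of the statement, each justified by pointing at the relevant portion of Figure~\ref{concatenate1} (and reusing the reasoning already established in Lemma~\ref{1sided non-orienting in dual} for how crossings become subdivisions in the dual).

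The main obstacle, I expect, is purely bookkeeping: being careful about \emph{which} edge in the rotation around $\mathfrak{c}^*$ gets subdivided, and making sure that the single incidence of the slid end with $t_1$ is accounted for by the "edge adjacent to $t^*_1$ and $o'^{*}_1$" subdivision rather than by a spurious subdivision of $t^*_1$ itself — this is why the statement singles out $i \ge 2$ in the third bullet. There is also a mild subtlety in checking that $o'_1$, rather than $o'_2$, is the sub-edge adjacent to $t_1$ at $\mathfrak{c}$; this follows from the convention fixed when $o'$ was drawn (its end next to $t$ in the rotation is, after the one-sided loop move, the one whose sub-edge sits next to $t_1$ around $\mathfrak{c}$), so I would state that convention explicitly and then the contraction-of-$o'^{*}_1$ claim is forced. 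Everything else is a direct reading of the figure, so the proof itself can be kept to a few lines plus the reference to Figure~\ref{concatenate1}.
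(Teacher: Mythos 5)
Your proposal is correct and follows essentially the same approach as the paper: trace the geometric effect of sliding $o'$ along $t$ step by step (removing $o'_1$, creating new crossings along $t$) and dualize each change, with reference to Figure~\ref{concatenating in dual}. You spell out the bookkeeping for the first bullet and the $i\geq 2$ restriction more explicitly than the paper does, but the underlying argument is the same.
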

\begin{proof}
As can be seen in Figure \ref{concatenating in dual}, by sliding $o^{'}$ along $t$, we remove $o^{'}_{1}$. Removing an edge corresponds to contracting its dual edge. Also it can be seen that following $t$ into the cross-caps adds parallel edges in the cross-cap drawing that corresponds to subdividing the dual edges $t_{i}^{*}$.
\end{proof}

 \begin{figure}[t]
    \centering
     \includegraphics[width=0.75\textwidth]{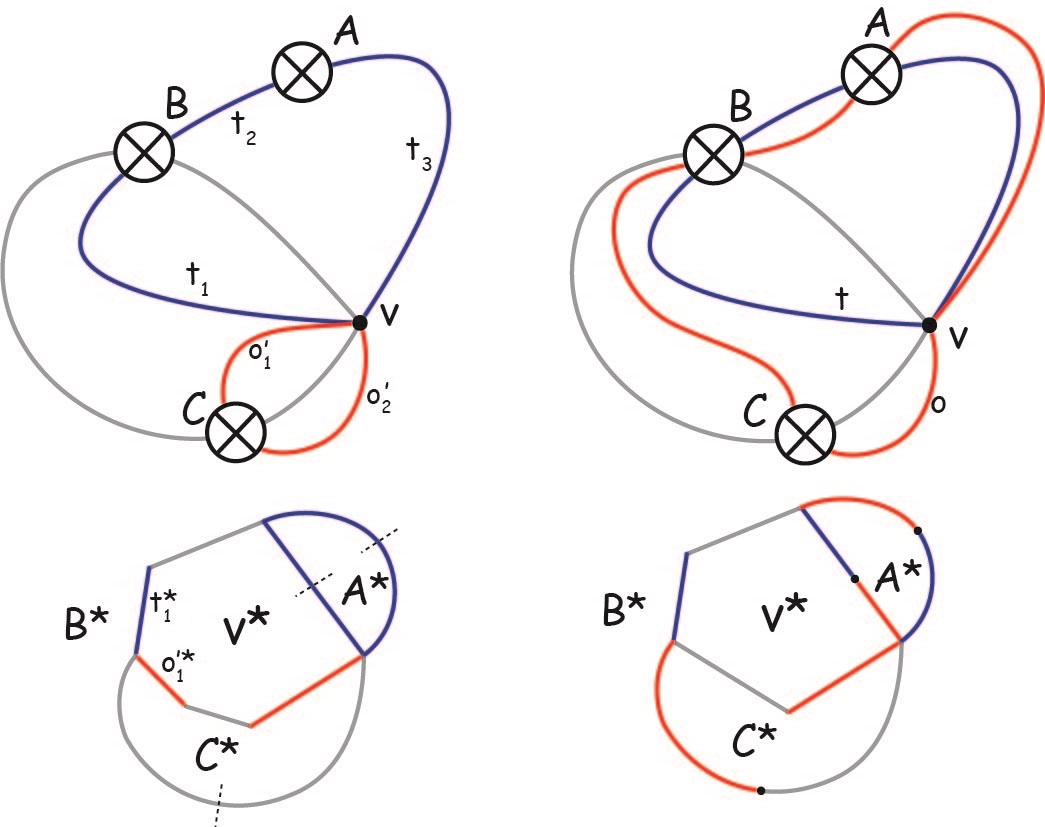}
     \caption{The impact of the concatenation move in the dual graph. The bottom graphs are dual  to the top cross-cap drawings.}
    \label{concatenating in dual}
\end{figure}

\begin{lemma}\label{1sided orienting in dual(2)}
When we reverse the wedge of $o^{'}$ that encompasses the ends of $t$, concatenating the loop $o^{'}$ with $t$ corresponds to:
\begin{itemize}
    \item subdividing every $t^{*}_i$ except for $i=1,2$
    \item subdividing the edge adjacent to $t^{*}_2$ and $o^{'*}_1$ in the face $v^{*}$ 
    \item contracting the edge $o^{'*}_1$ and $o^{'*}_2$
\end{itemize}
The added edges together with $o^{'*}_2$, correspond to the segments of $o$ in $H$.
\end{lemma}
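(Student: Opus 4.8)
The plan is to mirror the proof of Lemma~\ref{1sided orienting in dual}, since the two cases of the concatenation move are genuinely symmetric up to the choice of wedge. Recall that in Step~3.2.a we first apply a one-sided loop move on $o^{'}$, which by construction uses a single cross-cap $\mathfrak{c}$, and then (by Lemma~\ref{forienting}) the loop $t$ becomes orienting in the resulting scheme, so it passes through each of the $eg(G)-1$ other cross-caps exactly once and does \emph{not} enter $\mathfrak{c}$. When we reverse the wedge of $o^{'}$ that encompasses $t$, the concatenation-move picture (Figure~\ref{concatenate2}) tells us that we do \textbf{not} slide the end of $o^{'}$ along $t$; instead we draw the recovered loop $o$ starting next to the \emph{undisturbed} end of $o^{'}$, and when it would enter $\mathfrak{c}$ we instead follow $t$ into all the other cross-caps and come back.

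First I would set up notation exactly as in the statement and in Lemma~\ref{1sided orienting in dual}: let $t_i^{*}$, $1 \le i \le eg(G)$, be the duals of the sub-edges of $t$, with $t_1$ the sub-edge immediately next to $o^{'}_1$ in the rotation around $\mathfrak{c}$, and let $o^{'*}_1, o^{'*}_2$ be the duals of the two sub-edges of $o^{'}$. Then I would analyze the cross-cap drawing change picture-by-picture, exactly as Figure~\ref{concatenating in dual} does for the other wedge choice. The three bullet effects come out as follows. (i) Following $t$ into the cross-caps $\mathfrak{c}' \ne \mathfrak{c}$ creates a parallel copy of each sub-edge of $t$ that goes through such a cross-cap — that is all sub-edges except the two adjacent to the cross-cap $\mathfrak{c}$ that $o^{'}$ uses, i.e. all but $t_1$ and $t_2$ — and adding a parallel edge in the primal is precisely subdividing the corresponding dual edge, giving "subdividing every $t^{*}_i$ except for $i=1,2$''. (ii) Near $\mathfrak{c}$, the new loop $o$ runs parallel to $t$'s sub-edge $t_2$ on the side of $o^{'}_1$ before peeling off to follow $t$; this parallel segment sits in the face $v^{*}$ between the edges $t_2^{*}$ and $o^{'*}_1$, so it subdivides the dual edge adjacent to $t^{*}_2$ and $o^{'*}_1$ in $v^{*}$. (iii) Because $o^{'}$ is entirely replaced by $o$ here — neither of its sub-edges survives as a sub-edge of the final drawing, since $o$ no longer uses $\mathfrak{c}$ at all — both primal edges $o^{'}_1$ and $o^{'}_2$ disappear, and deleting a primal edge is contracting its dual, so we contract $o^{'*}_1$ and $o^{'*}_2$. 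Finally, tracking which new dual edges lie along the recovered loop $o$ gives the closing sentence: the newly created (subdivision) edges together with $o^{'*}_2$ are the duals of the sub-edges of $o$ in $H$ — here $o^{'*}_2$ survives in spirit as the dual of the sub-edge of $o$ sitting where $o^{'}_2$ was, identified through the contraction.

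The main obstacle, as in Lemma~\ref{1sided orienting in dual}, is purely bookkeeping: one must be careful about \emph{which} of the two dual edges at $\mathfrak{c}^{*}$ gets subdivided versus contracted, and about the fact that contracting $o^{'*}_1$ and $o^{'*}_2$ could in principle identify vertices or create loops in the dual if $o^{'}$'s two ends were in the same face — so I would note that $o^{'}$, being one-sided and drawn last, has its two ends in the two distinct root faces on either side of $\mathfrak{c}$, so the contractions are along genuine non-loop dual edges. The cleanest way to present all of this is to draw the analog of Figure~\ref{concatenating in dual} for this wedge choice and simply verify each bullet against the picture; since the paper already commits to "this rather tedious but straightforward section,'' a short paragraph plus the figure is the appropriate level of detail, and I would not belabor the routine local verifications.

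\begin{proof}
The argument is identical to that of Lemma~\ref{1sided orienting in dual}, with the roles of the two wedges of $o^{'}$ exchanged. Recall that $o^{'}$ is drawn last and passes through a single cross-cap $\mathfrak{c}$, while $t$ is orienting in the scheme obtained after the one-sided loop move (Lemma~\ref{forienting}) and hence passes exactly once through each of the remaining $eg(G)-1$ cross-caps and not through $\mathfrak{c}$. When we reverse the wedge of $o^{'}$ encompassing the ends of $t$, we recover $o$ as in Figure~\ref{concatenate2}: $o$ is drawn parallel to the end of $o^{'}$ that is not slid, and instead of entering $\mathfrak{c}$ it follows $t$ through all the other cross-caps. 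In the primal cross-cap drawing this introduces a parallel copy of every sub-edge of $t$ that crosses a cross-cap other than $\mathfrak{c}$, that is, of every sub-edge of $t$ except the two adjacent to $\mathfrak{c}$, namely $t_1$ and $t_2$; adding a parallel primal edge amounts to subdividing the corresponding dual edge, which accounts for subdividing every $t^{*}_i$ with $i \ne 1,2$. Near $\mathfrak{c}$, the segment of $o$ running parallel to $t_2$ before it peels off lies in the face $v^{*}$ between $t_2^{*}$ and $o^{'*}_1$, so it subdivides the edge adjacent to $t^{*}_2$ and $o^{'*}_1$ in $v^{*}$. Finally $o$ no longer uses $\mathfrak{c}$, so neither sub-edge of $o^{'}$ survives in the final drawing; deleting a primal edge contracts its dual, whence the edges $o^{'*}_1$ and $o^{'*}_2$ are contracted. (Since $o^{'}$ is one-sided its two ends lie in the two distinct root faces on either side of $\mathfrak{c}$, so these are contractions along genuine non-loop dual edges.) Tracing the recovered loop $o$ through the modified drawing shows that it follows exactly the newly created subdivision edges together with the dual edge sitting where $o^{'*}_2$ was, which proves the last assertion. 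The situation is pictured in Figure~\ref{concatenating in dual}.
\end{proof}
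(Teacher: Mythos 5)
Your proof starts from a false premise about what the cross-cap drawing of $G'$ looks like in this case, and the error propagates through the bookkeeping. You assert that $t$ "passes exactly once through each of the remaining $eg(G)-1$ cross-caps and does \emph{not} enter $\mathfrak{c}$," and later that "$o$ no longer uses $\mathfrak{c}$ at all." Both statements are wrong, and the paper says so explicitly. In the proof of Step~3.2.a the authors state that "depending on our choice in flipping a wedge of $o'$, the loop $t$ is using this cross-cap either twice or not at all," and the case of this lemma is the one where we flip the wedge that \emph{encompasses} both ends of $t$; since both ends of $t$ are then dragged through the newly created cross-cap, $t$ passes through $\mathfrak{c}$ exactly twice. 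The description of the concatenation move confirms this ("we can do this because the loop $o'$ is next to the loop $t$ in the rotation around this cross-cap"—a statement that is meaningless if $t$ avoids $\mathfrak{c}$). Similarly, $o$ must enter $\mathfrak{c}$: it is orienting in $G$, so by Lemma~\ref{homology} it must pass through every cross-cap an odd number of times, and the same paragraph in Step~3.2.a says it passes through each cross-cap exactly once. You seem to have imported the geometric picture of Lemma~\ref{1sided orienting in dual} (where $t$ indeed avoids $\mathfrak{c}$ because the \emph{other} wedge is flipped) into this lemma, which is precisely the case where that picture fails.

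The downstream reasoning therefore does not go through. Your bullet (i) speaks of "the two [sub-edges of $t$] adjacent to the cross-cap $\mathfrak{c}$," which already contradicts your own claim that $t$ avoids $\mathfrak{c}$, and in the correct picture (with $t$ entering $\mathfrak{c}$ twice) there are \emph{four} sub-edge ends of $t$ incident to $\mathfrak{c}$, so one has to explain which two are meant and why. Your justification for bullet (iii)—contracting both $o'^*_1$ and $o'^*_2$ because "$o$ no longer uses $\mathfrak{c}$"—rests on the false premise and then contradicts the closing sentence of the lemma (which still refers to $o'^*_2$), a tension you notice but dismiss as "surviving in spirit." A correct argument has to track what Figure~\ref{concatenate2} actually shows: $o$ runs out from the vertex next to the end of $o'$ that was not slid, enters $\mathfrak{c}$ once alongside $t$ (this is where $o'$ and $t$ are adjacent in the rotation around $\mathfrak{c}$), and then shadows the interior sub-edges of $t$ through the remaining cross-caps before returning to the vertex; the subdivisions and contractions then need to be read off from that picture, not from the one in Lemma~\ref{1sided orienting in dual}.
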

\begin{proof}
The proof is similar to the proof of Lemma \ref{1sided orienting in dual}.
\end{proof}

\subsubsection{Short Paths from Each Cross-cap to The Vertex}

Recall that a root face in a cross-cap drawing of a one-vertex graph is a face of the drawing (seen as a planar graph) adjacent to the vertex. The aim of this section is to show that there is a cross-cap drawing output by the modified algorithm, in which the cross-caps are not too far from the vertex (at distance $O(|E(G)|)$). 
To show this, we find paths in the dual graph of the cross-cap drawing from a vertex adjacent to the face dual to each cross-cap to the vertex corresponding to a root face.

We first show this claim for a scheme with an orienting loop that has exactly one non-trivial component in its interleaving graph. Additionally, we claim that we can find a cross-cap drawing which allows us to force the paths to arrive in the same vertex in the dual graph that is chosen arbitrarily. Furthermore, we show that we can find these paths such that they do not use the dual edges corresponding to the orienting loop. To prove this we use the modified algorithm, but with an additional specification: as we mentioned before, different choices in flipping a wedge when doing a one-sided loop move or a concatenation move yield different cross-cap drawings. The modified algorithm that we described it gives us the freedom to choose the wedge whenever a one-sided loop move is applied. Here we use this freedom to build our desired paths. For ease of reading, we abuse language slightly and still refer to this more precise algorithm as the modified algorithm, the details of the choice of wedges being in the proofs of the following two lemmas.

\begin{lemma}\label{paths in one component graph}
For any one-vertex scheme $G$ with an orienting loop $o$ that has exactly one non-trivial component in its interleaving graph $I_G$, for any choice of root wedge $\omega$, the modified algorithm outputs a cross-cap drawing $H$ with $eg(G)$ cross-caps such that for every cross-cap $\mathfrak{c}$, there is a dual path $p_\mathfrak{c}$ from a face adjacent to $\mathfrak{c}$ to the root face corresponding to $\omega$ with multiplicity two, which does not cross the orienting loop. 
\end{lemma}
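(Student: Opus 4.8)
The plan is to proceed by induction on $eg(G)+|E(G)|$, following the recursive structure of the modified algorithm, exactly as in the proof of Lemma~\ref{modified algorithm}. By Lemma~\ref{first order}, since $I_G$ has a single non-trivial component and $G$ has an orienting loop, the algorithm draws $G$ using only contractible loop moves, one-sided loop moves (on non-orienting loops), and concatenation moves — in particular no gluing/dragging moves occur, so I only need to track how these three moves act on the dual graph, using Lemmas~\ref{contractible move in dual}, \ref{1sided non-orienting in dual}, \ref{1sided orienting in dual} and \ref{1sided orienting in dual(2)}. For each recursive step I will exhibit, for every cross-cap $\mathfrak c$, a dual path $p_\mathfrak c$ of multiplicity at most two that reaches the root face associated to $\omega$ and avoids all dual edges of the orienting loop. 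The base case is Step~3.2.b: when every loop is orienting, the scheme is drawn with a single cross-cap, the orienting loops each pass through it once, and one can read off directly a trivial (length $\le 2$) dual path from a face adjacent to that cross-cap to $\omega$ that avoids the dual edges of the orienting loops.

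The inductive steps go as follows. In the contractible loop move (Step~1), by Lemma~\ref{contractible move in dual} the dual graph only gains a pendant vertex attached to some $f^*$, so all previously constructed paths survive verbatim and still avoid the orienting loop; one must only check the chosen root face $\omega$ is unaffected, which it is since a contractible loop can be drawn in any face. In the one-sided loop move (Step~3.1), by Lemma~\ref{1sided non-orienting in dual} drawing the new loop $r$ adds one fresh cross-cap, subdivides $v^*$, and inserts a path of length $k+2$ duplicating the sequence of dual vertices between $a^*$ and $b^*$ along one wedge of $r$. Here I use the freedom in choosing which wedge to flip: I flip the wedge so that the duplicated sequence runs on the side of $v^*$ that does \emph{not} contain one of the two orienting-loop dual edges (this is possible because, by Lemma~\ref{orienting loops}, $o$ alternates with $r$, so $a^*$ and $b^*$ separate the two dual edges of $o$ in $v^*$). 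Then the new cross-cap has a face adjacent to this duplicated path, and I build $p_\mathfrak c$ by walking along the copied sub-path into the region where $\omega$ lies; the copied sub-path contains no orienting-loop dual edges by the choice of wedge, and has multiplicity one in this construction, while the existing paths for the old cross-caps are only lengthened through the subdivision of $v^*$ and keep multiplicity two. In Step~3.2.a (concatenation move), the recursion is on $G'$, which by Lemma~\ref{forienting} has $t$ as orienting loop and again a single non-trivial component, so induction applies; I then pick the wedge of $o'$ to flip as dictated by whichever of Lemma~\ref{1sided orienting in dual} or Lemma~\ref{1sided orienting in dual(2)} keeps the constructed paths clear of the orienting loop, noting that the dual edges of the new orienting loop $o$ consist of $o'^*_2$ together with the subdivision edges on the $t^*_i$, and that paths built for $G'$ avoided the old orienting loop $t$, i.e.\ the $t^*_i$.

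The main obstacle will be the bookkeeping in Step~3.1: one must verify simultaneously that (i) the wedge can always be chosen so the duplicated sequence avoids an orienting-loop dual edge — this rests on the alternation property from Lemma~\ref{orienting loops} and on the fact, noted after Lemma~\ref{1sided non-orienting in dual}, that $a^*,b^*$ separate the two dual edges of $o$ in $v^*$ — and (ii) that after repeatedly subdividing the face $v^*$ across successive one-sided loop moves, the already-constructed dual paths for earlier cross-caps never acquire multiplicity exceeding two. For (ii), the key point is that each move either adds a disjoint duplicated sub-path or subdivides edges already on a path (which does not increase multiplicity), so a path touching an edge at most twice continues to touch its subdivisions at most twice; I will phrase the inductive hypothesis to carry exactly the "multiplicity $\le 2$, disjoint from the orienting loop, ending at the root face of $\omega$" package so that each move is a routine local check against Lemmas~\ref{contractible move in dual}--\ref{1sided orienting in dual(2)}.
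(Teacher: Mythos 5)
Your overall strategy — induction on $eg(G)+|E(G)|$, invoking Lemma~\ref{first order} to reduce to contractible, one-sided, and concatenation moves, and tracking dual paths via Lemmas~\ref{contractible move in dual}--\ref{1sided orienting in dual(2)} — matches the paper's approach. The base case and the contractible loop step are fine. However, the one-sided loop step has a genuine gap. You propose to flip the wedge of $r$ \emph{``so that the duplicated sequence runs on the side of $v^*$ that does not contain one of the two orienting-loop dual edges,''} and you justify this by noting that $a^*$ and $b^*$ separate the two dual edges of $o$ in $v^*$. But that very observation shows the criterion is vacuous: if $a^*,b^*$ separate the two dual edges of $o$, then \emph{each} arc of $\partial v^*$ from $a^*$ to $b^*$ contains exactly one of them, so there is no side that avoids both. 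Consequently your next claim, that ``the copied sub-path contains no orienting-loop dual edges by the choice of wedge,'' is false. Indeed, since $o$ must pass through the new cross-cap exactly once (Remark~\ref{enteringonce}), the duplicated path necessarily contains exactly one new orienting-loop dual edge — the paper says this explicitly right after Lemma~\ref{1sided non-orienting in dual}: ``The copy of this edge in the added path is dual to the new segment of the orienting loop.'' So a path $p_{\mathfrak c_1}$ walking along the copied sub-path will cross $o$, violating the invariant you are trying to maintain.

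The paper avoids this by (i) choosing the wedge of $r$ to flip so that it does \emph{not} contain $\omega$ — a criterion based on $\omega$, not on $o$, which guarantees the existing paths are untouched — and (ii) routing the new path $p_{\mathfrak c_1}$ \emph{around the vertex} (along the boundary of $v^*$, i.e., through root faces and dual edges of first/last sub-edges) from $\omega$ to a root face adjacent to $\mathfrak c_1$ chosen to lie in the same wedge of $o$ as $\omega$; this walk can then dodge both orienting-loop dual edges on $\partial v^*$, and because each loop has only two ends at $v$ the multiplicity stays at most two. Your path-building in the concatenation step inherits the same problem through its reliance on the one-sided loop move; once the new path is built by going around $v$ rather than along the copied arc, that step also falls into place, essentially as the paper writes it.
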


\begin{proof}
We prove the result by induction on $eg(G)+|E(G)|$, following the recursive steps of the modified algorithm. By Lemma \ref{first order}, we know that the modified algorithm draws such a scheme using only contractible loop moves, one-sided loop moves and concatenation moves. In this proof, we show that these moves can be applied in each step such that they do not increase the multiplicity of the paths that we obtain by the induction hypothesis. Crucially, when applying a one-sided loop move or a concatenation move, this relies on choosing a correct wedge to flip in each step. 

We fix an arbitrary root wedge $\omega$ around the vertex $v$. When we remove a loop that affects $\omega$, we update $\omega$ to be the wedge that encompasses the former fixed root wedge $\omega$. Similarly, when we re-introduce the edges in the drawing, we subdivide $\omega$ and we choose the sub-wedge that is consistent with the first choice of $\omega$.

\textbf{Contractible loop move.}
Denote by $G^{'}$ the scheme we have after removing a contractible loop. By the induction hypothesis, there exists a drawing $H^{'}$ and a system of paths $\{p_\mathfrak{c}\}$ from every cross-cap to the wedge $\omega$ in $H^{'*}$ with multiplicity two such that they do no use the dual edges of the orienting loop. When re-introducing $c$, if $c$ does not sub-divide the wedge $\omega$, then it can be seen that it does not affect the paths $p_\mathfrak{c}$. In the case that we need to update $\omega$ to be the empty wedge between the ends of $c$ itself, by Lemma \ref{contractible move in dual}, we can see that in this case, the paths need to use the edge $c^*$ in the dual once and therefore the multiplicity remains two. Also one can see that the paths still do not use the dual edges of the orienting loop.

\textbf{One sided loop move on a one-sided non-orienting loop $r$.}
Denote by $G^{'}$ the graph we have after applying a one-sided loop move on $r$. By the induction hypothesis, there exists a drawing $H^{'}$ and a system of paths $\{p_\mathfrak{c}\}$ from every cross-cap to the wedge $\omega$ in $H^{'*}$ with multiplicity at most two such that they do not use the dual edges of the orienting loop. In this case, we flip the wedge of $r$ that does not contain $\omega$. By this choice, we can see that the situation of the vertex dual to the face $\omega$ does not change, since by Lemma \ref{1sided non-orienting in dual}, drawing a one-sided loop corresponds to adding a path to the dual graph that does not separate $\omega$ from $v^*$. Therefore, this move does not affect any $p_\mathfrak{c}$.

We know that the orienting loop interleaves with $r$. For the new cross-cap $\mathfrak{c}_1$, we define the path $p_{\mathfrak{c}_1}$ as follows. We choose a face adjacent to $\mathfrak{c}$ that is a root face, and such that it is both in the flipped wedge of $r$ and in the same wedge of the orienting loop as $\omega$; see Figure \ref{newpath}. We introduce the path $p_{\mathfrak{c}_1}$ to be the sequence of dual edges and vertices around $v$ between $\omega$ and this root face such that it does not use the dual edges corresponding to the orienting loop. Since each edge has exactly two ends around the vertex, the path $p_{\mathfrak{c}_1}$ uses each edge at most twice in the dual graph and its multiplicity is at most two.
By construction, all these paths avoid using the dual edges of the orienting loop.

\begin{figure}[t]
    \centering
    \includegraphics[width=0.35\textwidth]{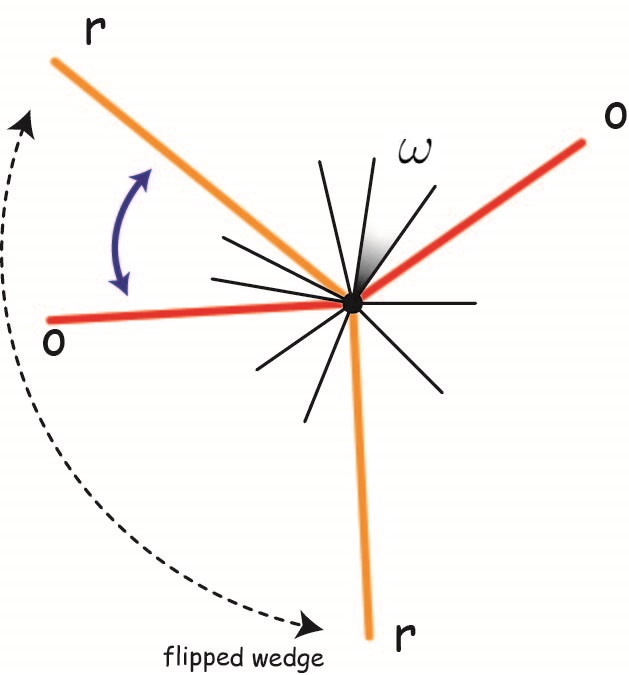}
     \caption{The choice of the adjacent root face for the new cross-cap: the loop $o$ is orienting and we applied the one-sided loop move on the loop $r$. The dotted arrow shows the wedge of $r$ that we flipped and the blue arrow shows the adjacent root wedges to the last cross-cap that we can choose from.}
   \label{newpath}
\end{figure}

 \textbf{Concatenation move on an orienting loop $o$ and the two sided loop $t$.}
 We denote by $o^{'}$ the concatenation of $o$ and $t$ and by $G^{'}$ the graph in which we replaced $o$ by $o^{'}$. The modified algorithm applies the one-sided loop move on $o^{'}$, and here again we choose to flip the wedge of $o^{'}$ that does not encompass $\omega$. We denote the new graph after removing $o^{'}$ by $G^{''}$. By the induction hypothesis there exists a drawing $H^{''}$ for $G^{''}$ and a suitable system of dual paths $\{p_{\mathfrak{c}}\}$ in $H^{''*}$ from a face adjacent to each cross-cap to the root wedge $\omega$ with multiplicity two. These paths do not use the dual edges corresponding to the loop $t$ since after removing $o^{'}$, $t$ is orienting for the new scheme (this follows from Lemma \ref{forienting}). Similar to the case before in which the algorithm applies the one-sided loop move, we can see that after drawing $o^{'}$ and adding a cross-cap, we can re-introduce the paths $\{p_{\mathfrak{c}}\}$ with the same multiplicity so that they do not use the two dual edges corresponding to $o^{'}$.  Now, the modified algorithm slides $o^{'}$ along $t$ to get a drawing for the initial scheme. By Lemmas \ref{1sided orienting in dual} and \ref{1sided orienting in dual(2)} (depending on the situation of $\omega$ with respect to $t$ and $o^{'}$), we know that sliding $o^{'}$ along $t$ corresponds to sub-dividing the dual edges corresponding to $t$ and since the paths $\{p_i\}$ do not use the dual edges of $t$, then they do not use the dual edges of $o$ either. For the last added cross-cap, we take a root face adjacent to it in the same wedge that $\omega$ is placed and introduce a path by going around the vertex. As before, we know that this path has multiplicity at most two and does not use the dual edges of the orienting loop. This finishes the proof.\end{proof}

Using Lemma \ref{paths in one component graph}, we prove the claim for the more general case in which the scheme can have more than one non-trivial component and we do not need the paths to arrive in the same root face.

\begin{lemma}\label{paths in graphs}

For any saturated one-vertex scheme $G$ with an orienting loop $o$, the cross-cap drawing $H$ output by the modified algorithm has $eg(G)$ cross-caps, and there is a path from every cross-cap to a root face (not necessarily fixed) with multiplicity at most two.

\end{lemma}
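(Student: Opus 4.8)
The plan is to reduce the general case to Lemma~\ref{paths in one component graph} via an induction that follows the component tree of the saturated scheme $G$, rooted at the unique non-orientable component (uniqueness is guaranteed by Lemma~\ref{one nonorientable component}, since $o$ is orienting). Concretely, I would induct on $eg(G)+|E(G)|$, mirroring the recursive structure of the modified algorithm. The base of the induction and the case of a scheme with a single non-trivial component are exactly Lemma~\ref{paths in one component graph} (with an arbitrary choice of root wedge $\omega$). For the inductive step the only genuinely new phenomenon compared to Lemma~\ref{paths in one component graph} is Step~2 of the algorithm, i.e.\ the presence of a non-contractible separating loop $s$; the contractible-loop, one-sided, and concatenation moves are handled verbatim as in the proof of Lemma~\ref{paths in one component graph}, since those moves do not change the component structure and their effect on the dual graph (Lemmas~\ref{contractible move in dual}, \ref{1sided non-orienting in dual}, \ref{1sided orienting in dual}, \ref{1sided orienting in dual(2)}) only subdivides or contracts edges, keeping the multiplicity of the recursively-obtained paths at most two (here we no longer need to avoid the orienting loop, which only makes the argument easier).

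So the heart of the proof is the separating-loop step. Pick, as the algorithm does, a separating loop $s$ that cuts off a non-root leaf of the component tree; by Corollary~\ref{conservation of orienting } and Lemma~\ref{orienting+separating}, this splits $G$ into a non-orientable subscheme $G_1$ (which still carries an orienting loop and the whole rooted subtree minus one leaf) and an orientable subscheme $G_2$ (a single non-trivial orientable component, to which the dragging move first adds an auxiliary orienting loop $o_2$). Apply the induction hypothesis to $G_1$ to get a drawing $H_1$ with a system of dual paths $\{p_\mathfrak{c}\}$, each of multiplicity at most two, from a face adjacent to each cross-cap of $H_1$ to some root face; and apply Lemma~\ref{paths in one component graph} to $G_2+\{o_2\}$ (which has exactly one non-trivial component, namely the one containing $o_2$) with the root wedge $\omega_2$ chosen to be $f_o^2$, the wedge where $s$ sat, obtaining paths $\{q_\mathfrak{c}\}$ of multiplicity two to the root face $f_o^{2*}$ which moreover avoid the dual edges of $o_2$. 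Now perform the gluing and dragging moves and re-draw $s$; I would then trace through the five bullet points of Lemma~\ref{separating in dual} to see what happens to these two path systems. For a cross-cap $\mathfrak{c}$ coming from $H_1$: by Remark~\ref{former outrfaces} the subgraph of the dual induced by the $H_1$-faces is only modified by splitting $f_o^{1*}$ into $f_o^{11*}$ and $f_o^{12*}$, both of which are still root faces in $G$, so $p_\mathfrak{c}$ survives (rerouting through whichever of the two split vertices it needs), with multiplicity still at most two. For a cross-cap $\mathfrak{c}$ coming from $H_2$: the path $q_\mathfrak{c}$ reaches $f_o^{2*}$, whose two incident $o_2$-dual edges are deleted, but $q_\mathfrak{c}$ avoided those by construction; then $f_o^{2*}$ gets a new edge to $f_k^*$ inside $H_1$, and from $f_k^*$ — a face of $H_1$ adjacent to the eliminated cross-cap — we can append (a tail of, or a path built the same way as) one of the $H_1$-paths to reach a root face of $G$. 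The dual edges of $o_2$ that get replaced by a length-$(k+2)$ path of subdivided edges are never used by $q_\mathfrak{c}$, so no multiplicity is created there. Concatenating $q_\mathfrak{c}$ with the new edge $f_o^{2*}f_k^*$ and with the $H_1$-path yields a path of multiplicity at most two (the only edges possibly reused are the newly-added segments of $s$, each traversed once).

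The main obstacle I expect is bookkeeping in this Step~2 case: making sure the concatenated path $q_\mathfrak{c} \cdot (f_o^{2*}f_k^*) \cdot (\text{$H_1$-path})$ really has multiplicity at most two on \emph{every} edge, including the newly-created edges (the subdivided dual edges of $o_2$, the edges dual to the two segments of $s$ attached to the vertex, and the edge $f_o^{2*}f_k^*$), and that it indeed terminates at a root face of the full scheme $G$ rather than at a face that was a root face only in a subscheme — this is precisely the subtlety flagged in Remark~\ref{former outrfaces}, namely that $f_o^{2*}$ is no longer a root face after the drag. The fix is to always extend such a path into $H_1$ until it reaches one of the root faces of $G_1$ (which remain root faces of $G$, or split into two such), using the $H_1$-path system supplied by the induction hypothesis; since the two path systems live on essentially disjoint portions of the dual graph (joined only along the single new edge $f_o^{2*}f_k^*$ and the handful of new $s$-edges), their multiplicities do not add up and the bound two is preserved. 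A secondary point to check is that we may always \emph{choose} the root wedge of $G_2$ to be $f_o^2$ when invoking Lemma~\ref{paths in one component graph}; this is exactly the ``for any choice of root wedge $\omega$'' clause in that lemma, which is why it was stated in that strengthened form.
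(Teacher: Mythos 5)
Your overall plan matches the paper's proof: induct so that the base case is Lemma~\ref{paths in one component graph}, handle the non-separating-loop moves exactly as in that lemma, isolate Step~2 as the new phenomenon, apply the induction hypothesis to the non-orientable side $G_1$, apply Lemma~\ref{paths in one component graph} to $G_l=G_2$ (which has exactly one non-trivial component), and then trace the effect of the gluing and dragging moves via Lemma~\ref{separating in dual} and Remark~\ref{former outrfaces}. But you make the \emph{opposite} choice of root wedge for $G_2$ from what the paper does, and that choice is precisely where your argument develops a genuine gap. You set $\omega_2 = f_o^2$, the wedge where $s$ sat. As you yourself note from Remark~\ref{former outrfaces}, $f_o^{2*}$ stops being a root face after the drag, so all the paths $q_\mathfrak{c}$ now terminate at a non-root face and must be extended across the new edge $f_o^{2*}f_k^*$ and then through $H_1$. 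The problem is that the induction hypothesis on $H_1$ only furnishes short dual paths from \emph{cross-caps} of $H_1$ to root faces; it says nothing about how to get from the arbitrary face $f_k^*$ to a root face with controlled multiplicity, and ``a path built the same way as one of the $H_1$-paths'' is not something you have available. This is not mere bookkeeping: you would need an additional structural lemma to construct such an extension, and nothing in the toolbox supplies it.

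The paper avoids this entirely by choosing $\omega$ to be a root wedge of $G_l$ \emph{different from} $f_o$. Since the auxiliary orienting loop $o$ in $G_l$ has its two ends consecutive in $f_o$, any other root wedge $\omega$ lies in the opposite wedge of $o$; combined with the fact (from the strengthened Lemma~\ref{paths in one component graph}) that the paths $\{p_\mathfrak{c}\}$ avoid the dual edges of $o$, this forces the paths never to visit $f_o^*$ at all. After gluing and dragging, $\omega$ is still a root face of the full drawing $H$, the paths survive unchanged except for harmless subdivisions, and no extension into $H_1$ is ever needed. So the ``for any choice of root wedge $\omega$'' clause in Lemma~\ref{paths in one component graph} is there to let you pick $\omega \ne f_o$, not $\omega = f_o$. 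Swapping that one choice turns your argument into the paper's; as written, the extension step is a hole.
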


\begin{proof}

The proof is by induction on the number of separating loops.
When there is no separating loop, the graph has only one non-trivial component and it is non-orientable. In this case, the result follows by Lemma~\ref{paths in one component graph}.

 Let $s_l$ be the separating loop chosen by the algorithm during Step 2, separating two sub-scheme $G_l$ and $G \setminus G_l$ on which it recurses. Since $s_l$ separates a leaf from the component tree, one of these sub-schemes, say $G\setminus G_l$, is non-orientable and has an orienting loop. Therefore, by the induction hypothesis, there is a drawing $H^{'}$ for $G\setminus G_l$ with $eg(G\setminus G_l)$ cross-caps such that there is a path with multiplicity two from every cross-cap to a root wedge. 
 
Now, $G_l$ is made of exactly one non-trivial component due to our way of choosing $s_l$. Let $\omega$ be a root wedge of $G_l$ different from $f_o$, the face where the ends of $s_l$ used to exist. We apply Lemma~\ref{paths in one component graph} to obtain a cross-cap drawing $H_l$ of $G_l+\{o\}$ and a system of dual paths $\{p_\mathfrak{c}\}$ with multiplicity at most two from a face adjacent to every cross-cap to $\omega$, such that none of them use the dual edges corresponding to the orienting loop $o$. Now, the algorithm glues $H_l$ to $H^{'}$ and proceeds with dragging the loops from $H^{'}$ to $H_l$. We denote the resulting drawing by $H$.

 By Remark \ref{former outrfaces}, we know that the paths connecting cross-caps to root wedges in $H^{'}$ can be re-introduced in $H$, since dual edges and vertices corresponding to the edges and faces in $H^{'}$ are not changed in $H$ except the vertex that is split into two vertices. Since both of these vertices are root faces in the new scheme, this does not interfere with the multiplicity of these paths and each of them arrives to one of these vertices (recall that we do not require all the paths to arrive at the same root wedge). By the choice of $\omega$ and the fact that none of the paths in $\{p_\mathfrak{c}\}$ use the dual edges corresponding to the orienting loop, none of the paths visit the vertex $f_{o}^{*}$ (note that $f_{o}$ and $\omega$ are in different wedges of the orienting loop $o$). Also since the paths $\{p_\mathfrak{c}\}$ do not use $o$, we can choose the incident face  to each cross-cap so that replacing the dual edges of $o$ by a sequence of edges (as explained in Lemma \ref{separating in dual}), does not impact the multiplicity of the paths $\{p_\mathfrak{c}\}$ from each cross-cap to $\omega$. This finishes the proof.\end{proof}

\subparagraph*{The proof of the main theorem}

\canonical*
\begin{proof}
Applying the algorithm on $G$, we obtain the saturated one-vertex scheme $\bar{G}$ that has an orienting loop after the preprocessing steps.
By Lemma \ref{reducing}, to prove the theorem, it is sufficient to show that there exists a canonical system of loops for a drawing of $\bar{G}$ such that each loop in the system has multiplicity 10.

The one-vertex scheme $\bar{G}$ has an orienting loop and therefore by Lemma \ref{paths in graphs}, there exists a cross-cap drawing $\bar{H}$ with $eg(N)$ cross-caps such that there are paths $\{p_j\}$ with multiplicity two from a face incident to each cross-cap (denote this face by $b_j$ for each $j$) to a root face (denote this face by $a_j$ for each $j$) in this cross-cap drawing.

Fix a root face $f$ in the drawing $\bar{H}$. For each path $p_j$, build a loop $\nu_j$ by going from $f$ to $a_j$, by going around the vertex in shortest way possible. By doing so, so far the loop has multiplicity at most two. Then follow $p_j$ to $b_j$: this adds at most two to the multiplicity since $p_j$ has multiplicity two. Then go into the cross-cap and come back to $b_j$ by going around it (this adds at most two to the multiplicity, since every edge passes through each cross-cap at most twice by Lemma~\ref{modified algorithm}) Finally, follow $p_j$ back to $a_j$ and then go back to $f$ from the same path (these two last steps add at most 4 to the multiplicity). Therefore each $\nu_j$ has multiplicity 10. By Lemma \ref{right combinatorics}, we know that the system of loops we obtain, is a non-orientable canonical system of loops. This finishes the proof.
\end{proof}

\bibliographystyle{plainurl}
\bibliography{bibliography.bib}

\end{document}